\newcommandx{\andy}[2][1=]{\todo[inline,linecolor=red,backgroundcolor=red!25,bordercolor=red,#1]{#2}}
\newcommandx{\yuan}[2][1=]{\todo[inline,linecolor=blue,backgroundcolor=blue!25,bordercolor=blue,#1]{#2}}
\definecolor{fashionfuchsia}{rgb}{0.96, 0.0, 0.63}
\numberwithin{equation}{section}
\theoremstyle{definition}
\theoremstyle{plain}
\newtheorem{theorem}{Theorem}
\newtheorem{proposition}[theorem]{Proposition}
\newtheorem{lemma}[theorem]{Lemma}
\newtheorem{corollary}[theorem]{Corollary}
\newtheorem{remark}{Remark}
\newtheorem{assumption}{Assumption}
\newcommand{\bomega}{\mbox{\boldmath $\omega$}}
\DeclareMathOperator{\Cov}{Cov}
\DeclareMathOperator{\Var}{Var}
\DeclareMathOperator{\DNN}{DNN}
\DeclareMathOperator{\SE}{SE}
\DeclareMathOperator{\sgn}{sgn}
\DeclareMathOperator{\MV}{MV}
\DeclareMathOperator{\RS}{RS}
\DeclareMathOperator{\UA}{UA}
\DeclareMathOperator{\FSE}{FSE}
\begin{document}



\title{\large\bf{The Uncertainty of Machine Learning Predictions in Asset Pricing}\footnote{We thank Damir Filipovic, Semyon Malamud and seminar audiences at \'{E}cole Polytechnique F\'{e}d\'{e}ral de Lausanne, Washington University in St.~Louis, Wolfe Research.}}

\author{
\and Yuan Liao\\[-3mm]  \footnotesize Department of Economics \\[-3mm]  \footnotesize{Rutgers University}
\and Xinjie Ma\\[-3mm]  \footnotesize NUS Business School\\[-3mm] \footnotesize National University of Singapore
\and  Andreas Neuhierl\\[-3mm]  \footnotesize  Olin School of Business\\[-3mm] \footnotesize Washington University in St. Louis
\and  Linda Schilling \\[-3mm]  \footnotesize  Olin School of Business\\[-3mm] \footnotesize Washington University in St. Louis
}

\newcommand{\blind}{0}

\ifnum\blind=1
    \author{} 
\else

\author{
\and Yuan Liao\\[-3mm]  \footnotesize Department of Economics \\[-3mm]  \footnotesize{Rutgers University}
\and Xinjie Ma\\[-3mm]  \footnotesize NUS Business School\\[-3mm] \footnotesize National University of Singapore
\and  Andreas Neuhierl\\[-3mm]  \footnotesize  Olin School of Business\\[-3mm] \footnotesize Washington University in St. Louis
\and  Linda Schilling \\[-3mm]  \footnotesize  Olin School of Business\\[-3mm] \footnotesize Washington University in St. Louis
}

\fi

\maketitle
\begin{abstract}

Machine learning in asset pricing typically predicts expected returns as point estimates, ignoring uncertainty.  We develop new methods to construct forecast confidence intervals for expected returns obtained from neural networks. We show that neural network  forecasts of expected returns share the same asymptotic distribution as classic nonparametric methods,  enabling a closed-form expression for their standard errors.  We also propose a computationally feasible bootstrap to obtain the asymptotic distribution. We incorporate these forecast confidence intervals into an uncertainty-averse investment framework. This provides an economic rationale for shrinkage implementations of portfolio selection.  Empirically, our methods improve out-of-sample performance.
\end{abstract}


\newpage
\onehalfspacing
\clearpage
\setcounter{page}{1}

\listoftodos 

\newpage
\onehalfspacing
\clearpage
\setcounter{page}{1}

\section{Introduction} \label{sec:intro}
Pension funds, wealth advisors, and hedge funds worldwide employ predictive methods to forecast asset returns and construct optimal portfolios that aim to maximize client returns. Recently, machine learning (ML) models have gained prominence in predicting asset returns, selecting portfolios, and estimating stochastic discount factors, with significant success in these areas. ML techniques, by capturing complex and nonlinear relationships in financial data, are particularly well-suited for enhancing portfolio management decisions. For example, within the mean-variance portfolio framework, ML methods are increasingly used to estimate expected returns and (co)variances, often leading to more effective portfolio allocations. The literature consistently demonstrates the effectiveness of machine learning in these and other applications (e.g., \citet*{gu2020empirical,bianchi2021bond,cong2021alphaportfolio, kelly2021virtue,patton2022risk,didisheim2023complexity,filipovic2024joint}).

Despite the success of machine learning in asset pricing, existing literature typically treats ML predictions as point estimates and conducts asset pricing analyses as if they were true values, overlooking the associated uncertainty. This is surprising, given that uncertainty about input parameters is widely acknowledged as critical in portfolio selection (e.g., \citet*{demiguel2009optimal}), and \citet*{garlappi2007portfolio} show that incorporating forecast uncertainty in mean-variance portfolio allocation leads to distinct economic insights. However, quantifying prediction uncertainty in ML forecasts, particularly with neural networks, remains a complex challenge, limiting their broader application in asset pricing. This paper addresses this gap by rigorously quantifying uncertainty in ML predictions and incorporating it into portfolio selection, specifically by constructing forecast confidence intervals (FCIs) for return predictions from neural networks.

We provide two methods for constructing forecast confidence intervals: one based on closed-form approximations and the other on the bootstrap. A key theoretical contribution of this paper is our proof that ML-based forecast methods exhibit an asymptotic distribution independent of the specific ML  model  used to generate the forecast. For example, the asymptotic distribution of neural network forecasts is not specific to neural networks.  This implies that, under appropriate technical conditions, simpler machine learning methods—such as Fourier series, particularly those with closed-form estimators—can be utilized to construct confidence intervals. These intervals can be applied to algorithmically more complex machine learning estimators, whose confidence intervals would otherwise be intractable.  Building on this insight, we derive an analytic formula for the  ML forecast standard error,  which is straightforward to calculate and paves the way for constructing the FCI.    

For the second method, we propose a novel $k$-step bootstrap approach to simulating the asymptotic distribution of the ML forecast. This method overcomes the substantial computational burden associated with conventional bootstrap procedures, which require repeatedly fully training multiple neural networks from scratch. In contrast, the $k$-step bootstrap, originally developed by \cite{davidson1999bootstrap} and \cite{andrews2002higher}, starts with the previously trained neural network and retrains it for additional $k$ steps on a bootstrap resample.  Through the $k$-step method, a pre-trained neural network can achieve a high level of training accuracy as quantified by the validity of the FCIs' coverages, even if $k$ is relatively small. 
\andy{revisit the sentence above}
 
We verify that both methods give correct asymptotic coverage probabilities for ML forecasts of expected returns, while effectively addressing cross-sectional dependence. Empirically, the two methods produce qualitatively similar results. We recommend that researchers compare the two as robustness check. In some applications, it may also be prudent to follow a conservative approach and use the larger of the two standard errors. Furthermore, through extensive simulations, we demonstrate that alternative bootstrap approaches, such as bootstrapping across assets or jointly across assets and time series, fail to produce a valid forecast confidence interval for machine learning forecasts.

In the second part of the paper, we apply our methods for quantifying estimation uncertainty to two standard problems in portfolio selection. We adopt the view of an uncertainty-averse (UA) investor.  In the first application, we build on the framework introduced by \cite{garlappi2007portfolio}: we treat the expected return in the classic mean-variance optimization problem as an unknown parameter varying within a confidence interval. Then, we find the optimal portfolio weights for the worst-case scenario of the mean-variance utility, varying the mean within the given interval. Extending the characterization by \cite{garlappi2007portfolio} to the machine learning context, we apply the proposed neural network FCI to forecast the expected return of the UA-portfolio. We  show that the solution can be formulated as an $\ell_1$-penalized regression problem.  Our results establish that portfolio weights exhibit a ``non-participation" region for risky assets, meaning an uncertainty-averse investor may choose not to invest in a risky asset if uncertainty about the asset's expected value exceeds a certain level.   In addition, this region expands as the investor's uncertainty about the expected value increases.  In contrast, the non-participation region is not present in the standard mean-variance approach which treats the expected returns as known. This insight offers a clear rationale for employing shrinkage approaches in portfolio selection, as developed by \citet*{ao2019approaching} and \citet*{kozak2020shrinking}. Empirically, the behavior of UA-portfolio of individual stocks aligns closely with our theoretical characterization, and  generates higher Sharpe ratios than benchmarks that disregard forecast uncertainty. 

In our second application, we focus on selecting assets with statistically significantly positive expected returns and constructing long-only portfolios. This exercise is particularly relevant for mutual funds, which do not employ short positions and thus face exactly this challenge of building long-only portfolios.  The forecast confidence interval facilitates the selection of individual securities with significantly positive expected returns while controlling the false discovery rate, thereby mitigating the multiple testing problem as documented by  \citet*{barras2010false} and \cite{harvey2020false}. Empirically, our long-only investment strategy, which accounts for forecast uncertainty,  yields higher out-of-sample average returns and similar standard deviations compared to benchmark methods that select assets without considering forecast uncertainty. In addition, it yields economically significant alphas when regressing the returns on various risk factors. 

Our result holds in a  setting where the machine learning model is not over-parameterized. Recently, \cite{kelly2021virtue} and \cite{didisheim2023complexity} derive results about the out-of-sample properties of portfolios in the context of overparameterized models where the number of parameters far exceeds the sample size. They show the interesting virtue of complexity in asset pricing models. Developing forecast confidence intervals in the virtue of complexity regime is an important question which we leave for future research.

\subsection*{Related Literature} \label{sec:litreview}
The literature on machine learning in asset pricing has grown rapidly in recent years, such as \citet*{kelly2019characteristics, kozak2020shrinking,freyberger2020, chen2020deep, baba2022factor} and \citet*{li2022real}. Recently, several papers in asset pricing have made progress in the theoretical analysis of machine learning predictions. \citet*{fan2022structural} and \citet*{jaganannthan2023robust} propose the so-called “period-by-period” ML and developed the FCI around the forecast indices, which relies on estimating latent risk factors. Period-by-period learning requires training the machine learning model each period. In this paper, we instead focus exclusively on the popular “pooled machine learning” approach, which is the most common in forecasting applications and does not involve estimating factors.\footnote{Pooled machine learning pools the data over the cross-sectional and time series dimension in estimation.}  

\cite{allena2021confident} is one of the first papers in asset pricing that formally develops confidence intervals for risk premia predicted using machine learning. He proposes a Bayesian ML approach that flexibly draws from posterior distributions of the predicted risk premia and demonstrates that it successfully yields a confidence-based strategy with insightful economic interpretations. The Bayesian approach leverages the fact that posterior distributions, deduced from properly specified priors, can provide valid confidence intervals. In this paper, we take a different approach. In particular, we do not pursue Bayesian estimation. One of the two methods we develop for the forecast confidence interval is based on a derived analytic formula for the ML standard error. Meanwhile, we move to expand the insights obtained by \cite{allena2021confident} and \cite{garlappi2007portfolio}. In the context of portfolio selection under uncertainty aversion, we show that machine learning confidence intervals lead to qualitatively different investment behavior relative to the standard mean-variance model.

Another strand of literature in asset pricing studies model uncertainty (e.g., \cite{avramov2002stock,anderson2016robust,bianchi2024takes}). In this line of research, no stance is taken on the ``correct model,” as the goal is often to achieve robust portfolio allocations and predictions via Bayesian model averaging. Researchers therefore usually specify a probability distribution over the different models but do not derive the forecast uncertainty within a given model.  

In theoretical econometrics, research on machine learning uncertainty is still at a very early stage. There are, however, few papers that rigorously develop the forecast standard error in the i.i.d. setting.  In a seminal paper, \cite{chen1999improved} set the stage for a rigorous analysis of the distributional properties of neural network predictions.  In the field of econometric program evaluation, there is a popular method known as ``doubly robust ML inference", which aims to develop asymptotic confidence intervals for some structural parameters in econometric models (e.g., \citet*{chernozhukov2018double}). These methods develop sophisticated procedures that require so-called orthogonal moment conditions and cross-fitting, which are quite different from the typical implementation of ML models in asset pricing. In addition, both approaches rely on the assumption of i.i.d. (or weakly dependent) data, which is invalid in asset pricing due to the strong cross-sectional dependence driven by common risk factors. This dependence is at the heart of the theoretical challenge in asset pricing. This paper proposes new approaches that explicitly account for this dependence and develops forecast confidence intervals with frequentist guarantees.

\section{Machine Learning Forecast Confidence Intervals} \label{sec:background}
\subsection{The Model} \label{sec:model}

Consider the excess return of a portfolio: 
$$
z_{t+1}= \sum_{i=1}^N w_i y_{i,t+1},
$$
where $y_{i,t+1}, i=1,...,N$ denote the excess return for base asset $i$ from $t$ to $t+1$, relative to the risk-free rate. The portfolio weights $w_i$ are assumed to be known at time $t$, but they may vary over time. This contains the special case of an individual asset, i.e., $z_{t+1}=y_{1, t+1}$, or a broad market index. At period $T$, the objective is to forecast the expected excess return of the portfolio:
$$
z_{T+1|T} =\mathbb E(z_{T+1}|\mathcal F_{T})
$$
where $\mathcal F_T$ denotes the information set up to time $T$. To do so,  researchers observe a matrix of asset-specific characteristics (features), $x_{t-1}=(x_{1,t-1},...,x_{N,t-1}),$ where $x_{i,t-1}$ is an $d$-dimensional vector of characteristics for asset $i$ at time $t-1$, such as momentum, volatility, financial liabilities.  In this setting, machine learning regression, e.g. \cite{gu2020empirical,bianchi2021bond} have become a very successful and popular methodology to obtain point predictions. ML regressions build on the nonparametric model,
$$
y_{i,t} = g(x_{i,t-1}) + e_{i,t}
$$
with an unknown function $g(\cdot): \mathbb R^d\to\mathbb R$, where $e_{i,t}$ is the error term. The unknown function is learned by {pooling} all observed data (cross-sectionally and over time) and solving a least squares problem:

\begin{equation}\label{eq3.1}
\widehat g(\cdot) = \arg\min_{g\in\mathcal G } \sum_{i=1}^N\sum_{t=1}^T(y_{i,t}-g(x_{i,t-1})  )^2,
\end{equation}
where the optimal solution is searched for in a function space $\mathcal G$ that typically corresponds to a specific machine learning method.  Once $\widehat g(\cdot)$ has been computed, the expected excess return is predicted by plugging in the most recent characteristic $x_{i,T}$ and constructing a portfolio: 
 \begin{equation}\label{eq3.2}
 \widehat z_{T+1|T} := \sum_{i=1}^N w_i  \widehat g(x_{i,T}).
 \end{equation}
This approach is widely used in both academic research and industry applications, making it the primary forecasting method analyzed in this paper.  

The choice of the function space $\mathcal G$ corresponds to the specific ML method to forecast excess returns. For example, neural networks, random forests, boosted regression trees, or random feature regressions give rise to different $\mathcal G$'s.  For the purpose of deriving confidence intervals, we focus on two types of machine learning methods,
$$
\mathcal G =  \text{either } \mathcal G_{\DNN} \text{ or } \mathcal G_B. 
$$
Here $\mathcal G_{\DNN} $ corresponds to the use of deep neural network (DNN)  functions, which is a collection of all possible neural network functions with a predetermined architecture — specifying the width and depth of the layers and the activation functions for each neuron. Then by minimizing (\ref{eq3.1}), we find the optimal neuron biases and weights so that the neural network function $\widehat g(\cdot)$  optimally fits the in-sample data.  

Alternatively, the more classic nonparametric regression gives rise to an alternative specification $\mathcal G=\mathcal G_B$, which uses a set of basis functions: $\Phi(x)= (\phi_1(x),...,\phi_J(x))$ such as the Fourier basis.
Then, (\ref{eq3.1}) searches for the optimal function within a simpler space: 
$$
\mathcal G_B= \{\Phi(x)'\theta: \theta\in\mathbb R^J\},
$$
and the function $\widehat g(\cdot)$ is estimated by finding the best combination of the basis functions. 

The crucial distinction between $\mathcal G_{\DNN}$ and $\mathcal G_{B}$ is that the estimators in $\mathcal G_B$ admit a closed-form representation, whereas $\mathcal G_{\DNN}$ does not. However, $\mathcal G_{\DNN}$ is more appealing in predicting expected returns in asset pricing leveraging the advantages of sophisticated machine learning methods, as documented in \cite{gu2020empirical} and \cite{bianchi2021bond}.  The objective of this paper is to construct forecast confidence intervals for predictions made with neural networks $\mathcal G_{\DNN}$, but it will soon become clear that $\mathcal G_{B}$ also plays an important role in our construction. Throughout, we refer to $\mathcal G_{B}$ as the ``closed-form ML", such as Fourier series regressions, B-Splines -- which are more classic nonparametric regression models. 

Despite the great popularity and empirical success of machine learning predictions, little work has been devoted to understanding the structure and sources of predictability. However, understanding the structure of the prediction, $\widehat z_{T+1|T} $, is crucially important to quantifying the prediction uncertainty. Recently, \cite{fan2022structural} provide a thorough analysis by bridging between the machine learning model and factor models. They suppose that excess returns  follow a conditional factor model (also see \citet*{gagliardini2016time, zaffaroni2019factor}):
$$
y_{i,t} = \alpha_{i,t-1} + \beta_{i,t-1}'f_t+ u_{i,t}
$$
where $\alpha_{i,t-1}$ and $\beta_{i,t-1}$ are respectively the ``alpha" and ``beta" of the asset, $f_t$ is the set of (possibly latent) risk factors, and $u_{i,t}$ is the idiosyncratic return. In addition, suppose characteristics are informative about factor loadings (betas) and mispricing (alpha), i.e., there are functions $g_{\alpha}$ and $g_{\beta}$ so that we can rewrite alpha and beta as:\footnote{This formulation is formalizing the notion that asset characteristics are informative about risk exposures which is also documented in \cite{rosenberg1973prediction,jagannathan1996conditional,connor2012efficient,gagliardini2016time,kelly2019characteristics}.}
$$
\alpha_{i,t-1} = g_{\alpha}(x_{i,t-1}) ,\quad \beta_{i,t-1} = g_{\beta}(x_{i,t-1}).
$$
 
We can thus rewrite the asset pricing model as

\begin{equation}\label{eq3.3}
y_{i,t} = g_{\alpha}(x_{i,t-1})+ g_{\beta}(x_{i,t-1})'\mathbb Ef_t +  g_{\beta}(x_{i,t-1})'v_t+ u_{i,t},
\end{equation}
where $v_t:= f_t- \mathbb E f_t$. Both $v_t$ and $u_{i,t}$ are mean-zero processes contributing to the error term:

\begin{equation}\label{eq3.4}
e_{i,t}:= g_{\beta}(x_{i,t-1})'v_t+ u_{i,t}, 
\end{equation}
then the first term is the exposure to factor shocks,  while the second term is the idiosyncratic shock. Under the assumption of constant prices of risk, i.e. $\lambda_f:=\mathbb Ef_t$ does not change over time, we can define 

\begin{equation}\label{eq3.5}
g(x):= g_{\alpha}(x)+ g_{\beta}(x)' \lambda_f.
\end{equation}

Then indeed, (\ref{eq3.3}) can be formulated as the ML model $y_{i,t} =g(x_{i,t-1}) + e_{i,t} $, with $g(\cdot)$ and $e_{i,t}$ defined in (\ref{eq3.4}) and (\ref{eq3.5}) respectively. Therefore, by applying  ML regressions within the context of this model, \cite{fan2022structural} show that $\widehat g(x)$  is estimating $g_{\alpha}(x)+ g_{\beta}(x)'\lambda_f$. More formally, they show that the ML function has the following (probability) limit:
$$
\widehat z_{T+1|T}\to^P z_{T+1|T}=\sum_{i=1}^Nw_i[g_{\alpha}(x_{i,T})+ g_{\beta}(x_{i,T})'\lambda_f].
$$
This shows that the predictive ability of machine learning regressions arises from capturing mispricing and risk premia, i.e.~ both $\sum_{i=1}^Nw_i\alpha_{i,T}$ and $\sum_{i=1}^N w_i \beta_{i,T}'\lambda_f$ --   are key components of expected returns.

Understanding the properties of the predicted expected portfolio return, $\widehat z_{T+1|T}$, is an essential first step towards understanding machine learning predictability.  Studying the uncertainty of ML forecasts is yet a challenging problem. Most statistical results are developed for i.i.d. or weakly dependent errors, however in asset pricing this is a tenuous assumption as explained by \cite{allena2021confident}: we should expect to see strong cross-sectional dependence.  The source of the cross-sectional dependence can be illustrated through the lens of the (characteristic-based) factor model. Recall the standard ML model:

\begin{equation}\label{eq3.6}
y_{i,t} = g(x_{i,t-1}) + e_{i,t}.
\end{equation}
The errors are strongly cross-sectionally dependent.  Take two assets, $i$ and $j$, then we can characterize the covariance of their errors by using equation (\ref{eq3.4}): 
\begin{equation} \label{eq:cov_errors}
\Cov(e_{i,t},e_{j,t}) = \mathbb E\left[g_{\beta}(x_{i,t-1})'\Cov(f_t) g_{\beta}(x_{j,t-1})\right]\neq 0.    
\end{equation}

We expect to see a strong correlation because the assets are exposed to the same sources of systematic risk. Hence, the regression model (\ref{eq3.6}) is not the usual ML model with i.i.d. errors. This is why new methods are needed to explicitly consider the strong cross-sectional dependence structure.\footnote{\cite{allena2021confident} specified a novel prior to account for the strong dependences in the Bayesian framework. A possible alternative approach is to treat factors as ``interactive fixed effects" as in \cite{bai09} and explicitly estimate them. However, the method in \cite{bai09} or \cite{freyberger2018non} does not cover the case of sophisticated machine learning methods, nor is it the  ``standard" implementation in the applied forecasting literature. We will therefore not pursue this approach.} It is essential to stress that accounting for the dependence in the errors is not purely an econometric challenge. From \eqref{eq:cov_errors}, it is clear that we vastly underestimate the risk associated with a prediction if we incorrectly assume i.i.d. errors.  The Monte Carlo evidence in Section \ref{sec:simulation}  will illustrate that ignoring the cross-sectional dependences will vastly understate the prediction uncertainty.

\subsection{The Intuition of the ML Forecast Error} \label{sec:FCI}
We now discuss the intuition of our approach to computing the ML standard error. Recall that the expected excess return, $\widehat z_{T+1|T}$, is predicted using either neural networks (corresponding to $\mathcal G_{\DNN}$) or a closed-form ML method (corresponding to $\mathcal G_{B}$). The intuition of our constructed FCI is based on the following theorem, which is our first main result, showing that $\widehat z_{T+1|T}$ has the same asymptotic distribution regardless of which specific ML method is used.

\begin{theorem}\label{th1} Suppose $\sum_i|w_i|<\infty$ and Assumption \ref{ass2}, Assumption \ref{ass1} and Assumption \ref{ass1a} in the appendix hold.  
There exists a function $\zeta^*(\cdot)$ so that 
\begin{eqnarray}\label{eq3.7}
\widehat z_{T+1|T}  - z_{T+1|T} &=& \frac{1}{T}\sum_{t=1}^T\mathcal B_{t-1}(f_t- \mathbb E f_t)+ o_P(T^{-1/2}),\quad  \text{where} \cr 
\mathcal B_{t-1}&:=&\frac{1}{N}\sum_{i=1}^N  \zeta^*(x_{i,t-1}) \beta_{i,t-1}' .
\end{eqnarray}
The function $\zeta^*$ only depends on the joint distribution of $(x_{i,t-1})$, but  does not depend on whether $\mathcal G_{\DNN}$ or $\mathcal G_{B}$ are used for constructing $\widehat z_{T+1|T}$. In other words, the same asymptotic expansion holds for $\widehat z_{T+1|T}$   if the closed-form ML space $\mathcal G_{B}$ is used in place of $\mathcal G_{\DNN}$.

In addition, 
\begin{equation}\label{eqdistr}
    \sqrt{T} V^{-1/2} (\widehat z_{T+1|T}  - z_{T+1|T} )\to^d \mathcal N(0,1)
\end{equation}
where $V= \frac{1}{T}\sum_{t=1}^T\Var(\mathcal B_{t-1}f_t)$ does not depend on whether $\mathcal G_{\DNN}$ or $\mathcal G_{B}$ are used to predict. Hence $\widehat z_{T+1|T} $ has the same asymptotic distribution for both neural networks and closed-form- ML. 
\end{theorem}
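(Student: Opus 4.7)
The plan is to decompose the forecast error into a linearization term plus an asymptotically negligible remainder, and then show that the linearization takes the same form whether one uses $\mathcal G_{\DNN}$ or $\mathcal G_B$. Because both estimators are the least-squares projection of $y_{i,t}$ onto a rich function class, their dominant first-order behaviour is determined not by the approximating architecture but by the Riesz representer of the target linear functional $g\mapsto \sum_i w_i g(x_{i,T})$ in $L^2(P_x)$. This is precisely why the same function $\zeta^*$, depending only on the law of $(x_{i,t-1})$, will appear for both methods. Substituting the factor-model decomposition $e_{i,t}=g_\beta(x_{i,t-1})' v_t+u_{i,t}$ from \eqref{eq3.4} then isolates the term driven by the cross-sectionally correlated factor shocks as the leading contributor; the idiosyncratic part $u_{i,t}$ averages at the faster rate $(NT)^{-1/2}$ and is absorbed into the $o_P(T^{-1/2})$ remainder under $\sum_i|w_i|<\infty$ and standard moment conditions on $u$.

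I would first carry out the argument for the closed-form sieve estimator $\mathcal G_B$, where a textbook expansion gives
\begin{equation*}
\widehat g(x)-g(x)= \Phi(x)'\,\widehat H^{-1}\frac{1}{NT}\sum_{i,t}\Phi(x_{i,t-1})\,e_{i,t}+r_J(x),
\end{equation*}
with $\widehat H= (NT)^{-1}\sum_{i,t}\Phi(x_{i,t-1})\Phi(x_{i,t-1})'$ and $r_J(x)$ the deterministic sieve approximation bias, which vanishes sufficiently fast under Assumption~\ref{ass1a}. Aggregating with weights and substituting $e_{i,t}$ yields
\begin{equation*}
\widehat z_{T+1|T}-z_{T+1|T}=\frac{1}{T}\sum_{t=1}^T \Big\{\frac{1}{N}\sum_{j=1}^N \zeta_J(x_{j,t-1})\,g_\beta(x_{j,t-1})'\Big\}v_t+o_P(T^{-1/2}),
\end{equation*}
where $\zeta_J(\cdot):=[\sum_i w_i\Phi(x_{i,T})]'\mathbb E[\Phi\Phi']^{-1}\Phi(\cdot)$. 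Taking $J\to\infty$, $\zeta_J\to\zeta^*$, the Riesz representer of the evaluation functional $\sum_i w_i\,\cdot(x_{i,T})$, and $g_\beta(x_{j,t-1})=\beta_{j,t-1}$ delivers the stated expansion.

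I would then transfer this expansion to the DNN case via a pathwise/efficiency argument. The first-order condition for $\widehat g\in\mathcal G_{\DNN}$, combined with a perturbation along the direction $\zeta^*$ inside the Donsker-like subspace of $\mathcal G_{\DNN}$, yields the identity
\begin{equation*}
\frac{1}{NT}\sum_{i,t}\zeta^*(x_{i,t-1})\big(y_{i,t}-\widehat g(x_{i,t-1})\big)=o_P(T^{-1/2}).
\end{equation*}
Using DNN approximation theorems (e.g.\ Farrell--Liang--Misra, Chen 1999 and Assumption~\ref{ass1}) to bound $\|\widehat g-g\|_{L^2(P_x)}$ and controlling the empirical-process remainder on a suitable neighborhood, this identity linearizes to the same leading term as in the sieve case, with identical $\zeta^*$. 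Since $\zeta^*$ is a property of $L^2(P_x)$ and the functional, it cannot depend on the particular sieve. The stochastic equicontinuity needed here is the main technical obstacle: unlike $\mathcal G_B$, the DNN class is non-convex and non-Donsker in general, so verifying that the Hessian behaves like the sieve Gram matrix and that the bias does not leak into $T^{-1/2}$ order requires careful use of localized entropy bounds.

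Finally, with the expansion in hand, \eqref{eqdistr} follows by applying a CLT to $T^{-1/2}\sum_t \mathcal B_{t-1}(f_t-\mathbb E f_t)$. Because $\mathcal B_{t-1}$ is $\mathcal F_{t-1}$-measurable, the summands form a martingale difference array (or an $L^2$-mixing sequence under Assumption~\ref{ass2}), so a standard martingale CLT yields asymptotic normality with variance $V=T^{-1}\sum_t\Var(\mathcal B_{t-1}f_t)$. Crucially, $V$ is independent of the ML method because $\zeta^*$ is, which completes the equivalence between $\mathcal G_{\DNN}$ and $\mathcal G_B$ claimed by the theorem.
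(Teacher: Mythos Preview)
Your overall strategy—linearize via a Riesz representer, separate the factor-shock term from the idiosyncratic term, and then apply a CLT—matches the paper's. However, there is a genuine gap in how you establish that the \emph{same} $\zeta^*$ works for both $\mathcal G_{\DNN}$ and $\mathcal G_B$.

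You treat $\zeta^*$ as the limit of the sieve representers $\zeta_J$ and assert it is ``a property of $L^2(P_x)$ and the functional.'' But point evaluation $h\mapsto h(x_{j,T})$ is not a bounded linear functional on $L^2(P_x)$, so there is no $L^2$-Riesz representer to appeal to, and there is no a priori reason why the limit of the Fourier-sieve representers should coincide with whatever object emerges from the DNN first-order condition. The paper's device for forcing agreement is different and more direct: for each $(N,T)$ it defines the Riesz representer $m_j^*$ on the \emph{finite-dimensional} Hilbert space $\mathcal A_{NT}=\text{span}(\mathcal G_{\DNN}\cup\mathcal G_B-\{g_{NT,\mathcal G}\})$. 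On a finite-dimensional space every linear functional is bounded, so $m_j^*$ exists automatically, and because the space is built from the \emph{union} of both sieves, $m_j^*$ (hence $\zeta^*=\sum_j w_j m_j^*$) is by construction the same regardless of which $\mathcal G$ produced $\widehat g$. No $J\to\infty$ limit is taken, and the method-independence is structural rather than obtained by matching limits.

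A second, smaller gap concerns the DNN perturbation step. You propose perturbing ``along the direction $\zeta^*$ inside the Donsker-like subspace of $\mathcal G_{\DNN}$,'' but $\zeta^*$ need not lie in $\mathcal G_{\DNN}$, and the DNN class is non-convex, so one cannot simply add $\tau_T\zeta^*$ and remain in the class or invoke a first-order condition. The paper's Lemma~\ref{lem2} handles this by projecting $h_j=\widehat g+\tau_T m_j$ onto $\mathcal G$ via $\pi_{\mathcal G}$ and controlling the resulting error through $\psi_{\mathcal G}(h_j)=|Q(\pi_{\mathcal G} h_j)-Q(h_j)|$, which is bounded using the approximation rate $\varphi_{\mathcal G}$ of Assumption~\ref{ass1}. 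This projection-plus-error-control argument is what replaces the unavailable first-order condition and is the precise place where the entropy and approximation assumptions do their work.
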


\begin{proof}
    The proof is given in Section \ref{sec:p1}.
\end{proof}

The theorem has two important implications. First and foremost, fundamentally, the asymptotic distribution of ML forecasts does not depend on the specific machine learning model $\mathcal{G}$ in (\ref{eq3.1}). In the asymptotic expansion, $\mathcal B_{t-1}$  depends on a function $\zeta^*(\cdot)$. Although the closed-form expression for the function $\zeta^*(\cdot)$ is very difficult to derive for neural networks, it is entirely determined by the quantities of the asset pricing model (\ref{eq3.3}), but not by the specific choice of the ML method, which can be sophisticated machine learning prediction $\mathcal G_{\DNN}$ (neural networks) or closed-form prediction $\mathcal G_B$ (e.g., Fourier series regression). The main econometric intuition is that the predictor is obtained by minimizing a \textit{regular loss function} -- it is the loss function rather than the choice of $\mathcal{G}$ that ultimately determines the asymptotic distribution.\footnote{Note that this result also allows for small shrinkage in penalized regressions. However, if models are strongly penalized, as with the Lasso, this induces a shrinkage bias, leading to different asymptotic phenomena \cite{ZhangZhang:CI}. } 

This insight forms the foundation of our approach to constructing confidence intervals for the expected return predicted by neural networks. As explained in more detail in Section \ref{metho1}, one of the two proposed methods approximates the standard error of the neural network forecast using an analytic standard error from closed-form ML.  The latter is straightforward to derive due to its closed-form nature.

The second implication of the theorem is that the prediction error $\mathcal B_{t-1}(f_t-\mathbb Ef_t)$ is only driven by the common factor shocks rather than the idiosyncratic errors. Thus, the rate of convergence is $O_P(1/\sqrt{T})$, which is much slower than $O_P(1/\sqrt{NT})$ for the usual panel data models. This aligns with the well-established asset pricing intuition that the factor risk premium can only be learned well over time, rather than cross-sectionally, due to the strong cross-sectional dependence. Therefore, the dominant source of uncertainty comes from the time series rather than the cross-sectional variation.

Theorem \ref{th1} requires three technical assumptions, which are stated in the appendix. Assumption \ref{ass2} is a standard condition about the dependence of the data and the tail behavior. Assumptions \ref{ass1} and \ref{ass1a} concern the complexity of the machine learning space. Essentially, it says that the complexity must be controlled and the models cannot be overparametrized. This is in contrast to the setting of \cite{kelly2021virtue} and \cite{didisheim2023complexity}, who derive appealing features of overparametrized models in portfolio construction. Developing distribution theory in such a setting is extremely challenging, but interesting and will be left for future work in theoretical econometrics.

\subsection{Constructing ML Forecast Confidence Interval}\label{metho1}
Throughout the rest of the paper, we denote $\widehat z_{T+1|T}$ as the expected return predicted by neural networks, i.e. $\mathcal G_{\DNN}$. These models are among the most successful in asset pricing, as reviewed before. The objective is to construct its forecast confidence interval.  Each of the two implications outlined in the previous subsection motivates a distinct method for this purpose. In the sequel, we introduce these two methods. We then later demonstrate the usefulness of these methods in asset pricing in Section \ref{sec:application}.

\subsubsection{Method I: Closed-form ML Approximation} \label{sec:analytical}
We first introduce a closed-form ML method, building on our main result in Theorem \ref{th1}.  The asymptotic distribution (\ref{eqdistr}) derived in Theorem \ref{th1} demonstrates that the asymptotic variance of $\widehat z_{T+1|T}$ is \textit{the same} regardless of whether $ \mathcal G= \mathcal G_{\DNN}$ or $\mathcal G_{B}$. This intuition allows us to approximate the forecast standard error using that of the closed-form ML method, which is much easier to derive. 

For the closed-form ML we set $\mathcal G=\mathcal G_B$  in (\ref{eq3.1}) and use Fourier series to make predictions. In this case, let $\Phi(x)$ be the vector of Fourier bases, and  the regression model is:
\begin{eqnarray*}
\widehat g_B(\cdot) &=& \Phi(\cdot)'\widehat\theta,\quad \text{ where } \widehat\theta:=\arg\min_{\theta} \sum_{i=1}^N\sum_{t=1}^T(y_{i,t}-\Phi(x_{i,t-1})'\theta  )^2. 
\end{eqnarray*}

Estimators using the Fourier series can be obtained in closed form because of their OLS-type analytic solution:

$$
\widehat g_B(\cdot )= \Phi(\cdot)' (\Psi'\Psi)^{-1}\sum_{i,t}\Phi(x_{i,t-1})y_{i,t},
$$
where $\Psi$ is the $NT\times J$ matrix stacking all $\Phi(x_{i,t-1})$. Therefore, we can easily derive its asymptotic standard error: 

\begin{equation}\label{eq3.10se}
\text{SE}(\widehat z_{T+1}):= \sqrt{\sum_{t=1}^T H'\Phi_{t-1}'\beta_{t-1}\Cov(f_t)\beta_{t-1}  '\Phi_{t-1} H},
\end{equation}
where  $W=(w_1,...,w_N)$ are the portfolio weights,  $\Phi_{t-1}=(\Phi(x_{1,t-1}),...,\Phi(x_{N,t-1}))$, and $H ' = W' \Phi_T (\Psi'\Psi)^{-1} .$ This is straightforward to estimate by:

\begin{equation}\label{eq3.11se}
\widehat{\text{SE}}(\widehat z_{T+1|T}):=\sqrt{\sum_{t=1}^T H'\Phi_{t-1}'\widehat e_{t}\widehat e_{t}'\Phi_{t-1} H},
\end{equation}
where $\widehat e_{t}$ is the vector of residuals, $\widehat e_{i,t}:=y_{i,t}-\widehat g(x_{i,t-1})$.

It is important to note that we employ closed-form ML, $\widehat g_{B}(\cdot)$, only to compute the forecast standard error. The forecast itself, $\widehat g(\cdot)$ as in (\ref{eq3.1}),  is still generated by a sophisticated neural network. This leverages our result that the two ML predictions have the same asymptotic standard error.  It is therefore tempting to ask why we should employ more sophisticated neural networks in the first place. The answer is that the benefits of using neural networks or related methods do not arise from a smaller standard error, but from fewer constraints in handling highly nonlinear functions and the capability of approximating larger classes of asset pricing functions. 

To be specific, Theorem \ref{th1} shows that for both DNN and closed-form ML predicted $\widehat z_{T+1|T}$, 
$$
\widehat z_{T+1|T}  - z_{T+1|T} =\frac{1}{T}\sum_{t=1}^T\mathcal B_{t-1}(f_t- \mathbb E f_t)+ o_P(T^{-1/2}).
$$
The remainder error term $o_P(T^{-1/2})$ captures what is known as ``approximation bias",  which arises from approximating the unknown expected return function $g(x)$ using either machine learning method. While both methods exhibit a diminishing approximation bias, the rate of decay differs. Neural networks, due to the adaptivity to the intrinsic dimension of the input features, \footnote{See \cite{schmidt2020nonparametric,kohler2021rate}  and \cite{fan2022structural} for detailed discussions on theoretical advantages of neural networks.}  can flexibly approximate a broad class of nonlinear functions with a rapidly diminishing approximation bias.
In contrast,  the closed-form ML (e.g., Fourier series) is capable of approximating a much narrower class of functions, and its approximation bias decays at a slower rate as the number of input features increases, due to the well-known curse of dimensionality.  This makes closed-form ML unsuitable for direct use in return prediction.  However, its larger bias does not affect its asymptotic variance, which still provides a good approximation to the variance of the predicted returns from the neural network.

The following theorem is our second main result, which formally justifies the validity of $\widehat\SE(\widehat z_{T+1|T})$ as the standard error when $\widehat z_{T+1|T}$ is constructed using neural networks. 

\begin{theorem}[Closed-form ML approximation]\label{th2} Suppose    Assumption \ref{ass2}, Assumption \ref{ass1} and Assumption \ref{ass1a} in the appendix hold, then 
$$
\widehat\SE(\widehat z_{T+1|T})^{-1}(\widehat z_{T+1|T} - z_{T+1|T}) \to^d\mathcal N(0,1),
$$ 
where $\widehat{z}_{T+1}$ is obtained from neural networks and $\widehat{\SE}(\widehat{z}_{T+1|T})$ is obtained via Fourier series as in (\ref{eq3.11se}). Then or a given significance level $1-\alpha\in(0,1)$,
$$
P(|\widehat z_{T+1|T} - z_{T+1|T}|<\widehat\SE(\widehat z_{T+1|T}) \times  \epsilon_{\alpha}) \to 1-\alpha,
$$
where $\epsilon_{\alpha}$  is the critical value of the standard normal distribution corresponding to size $\alpha$. 
\end{theorem}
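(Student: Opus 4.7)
My plan is to combine the central limit theorem already established in Theorem \ref{th1} with a consistency statement for the sandwich variance, stitched together through Slutsky's lemma. From \eqref{eqdistr} we already have $\sqrt{T}\,V^{-1/2}(\widehat z_{T+1|T}-z_{T+1|T})\to^d\mathcal{N}(0,1)$ for the neural-network forecast, so the entire theorem reduces to showing $T\cdot\widehat{\SE}(\widehat z_{T+1|T})^2\to^P V$, where the sample standard error is computed from the Fourier formula \eqref{eq3.11se} but the residuals and point forecast come from the neural network.

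The first step is to identify the population target of the Fourier sandwich \eqref{eq3.10se} with $V/T$. I would apply Theorem \ref{th1} to the closed-form Fourier estimator $\widehat g_B$, whose OLS representation makes the Riesz representer explicit: one reads off $\zeta^{*}(x)=NT\,H'\Phi(x)$ as the influence function of the linear functional $\theta\mapsto W'\Phi(x_T)'\theta$. Substituting into $\mathcal{B}_{t-1}=N^{-1}\sum_i\zeta^{*}(x_{i,t-1})\beta_{i,t-1}'$ gives $\mathcal{B}_{t-1}=T\,H'\Phi_{t-1}\beta_{t-1}$, and a direct computation of $V=T^{-1}\sum_t\mathcal{B}_{t-1}\Cov(f_t)\mathcal{B}_{t-1}'$ collapses to $T\cdot\SE^2$; that is, $V/T=\SE^2$. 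Because Theorem \ref{th1} certifies that the same $V$ governs the variance of the neural-network forecast, this algebraic identity is precisely the bridge that licenses using the Fourier-based formula in the NN setting.

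The second step is to show $\widehat{\SE}^2=\SE^2(1+o_P(1))$. Writing $\widehat e_{i,t}=e_{i,t}+(g(x_{i,t-1})-\widehat g(x_{i,t-1}))$ and invoking the consistency rate on $\|g-\widehat g\|$ from Assumptions \ref{ass1}--\ref{ass1a} reduces the problem to showing that $\sum_t H'\Phi_{t-1}e_te_t'\Phi_{t-1}'H$ concentrates around its expectation. Decomposing $e_{i,t}=\beta_{i,t-1}'(f_t-\E f_t)+u_{i,t}$ via \eqref{eq3.4} splits this sum into a factor quadratic, an idiosyncratic quadratic, and a cross term. A time-series LLN under Assumption \ref{ass2} drives the factor quadratic to $\SE^2$. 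The idiosyncratic and cross pieces are shown to be $o_P(1/T)$ by exploiting cross-sectional averaging: each entry of $H'\Phi_{t-1}$ is of order $1/(NT)$ and $u_{i,t}$ is weakly dependent across $i$, so the resulting bilinear forms are of order $1/(NT^2)$, negligible relative to $\SE^2\asymp 1/T$.

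The main obstacle will be controlling the substitution of neural-network residuals into a Fourier-based sandwich: the two estimators live in different function classes with different bias structures, so the usual ``residuals behave like errors'' argument is not automatic. I expect this step to require a uniform-in-$(i,t)$ bound on $|\widehat g(x_{i,t-1})-g(x_{i,t-1})|$ together with a matching control on $\sum_t\|H'\Phi_{t-1}\|_2^2$, so that the plug-in error is $o_P(V/T)$. A related delicate point, governed by Assumption \ref{ass1a}, is keeping the Fourier approximation error small enough that the Riesz identification in Step~1 stays valid, even though $\widehat g_B$ is itself a poor point predictor---this is acceptable here because $\widehat g_B$ never produces the forecast, only the variance formula.
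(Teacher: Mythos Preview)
Your proposal is correct and substantially matches the paper's proof. Two minor differences are worth flagging. First, where you cite the CLT from Theorem~\ref{th1} directly and then identify $V=T\cdot\SE^2$ by explicitly computing the Fourier Riesz representer $\zeta^*(x)=NT\,H'\Phi(x)$, the paper instead argues that the DNN and Fourier predictors are asymptotically equivalent (both satisfy the expansion in Proposition~\ref{prop3}), transfers the explicit Fourier expansion $\sum_t H'\Phi_{t-1}'\beta_{t-1}v_t$ to the DNN forecast, and then verifies the Lindeberg condition for that concrete sum using the fourth-moment bound and $J^{5/2}=o(T)$ from Assumption~\ref{ass1a}(iv). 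Your route is slightly more economical since you treat the CLT as already done; the paper's route makes the constant $V$ explicit at the same stroke. Second, for the residual substitution you anticipate needing a uniform-in-$(i,t)$ bound on $|\widehat g(x_{i,t-1})-g(x_{i,t-1})|$; the paper gets by with only the $L^2$ rate $\frac{1}{NT}\sum_{it}(\widehat g-g)^2=O_P(p(\mathcal G)\log T/T+\varphi_{\mathcal G})$ from Lemma~\ref{lem1}, combined with $\|L\|=O_P(\sqrt J)$ (where $L'=W'\Phi_T(\tfrac{1}{NT}\Psi'\Psi)^{-1}$) and Cauchy--Schwarz, which under the condition $(p(\mathcal G)\log T/T+\varphi_{\mathcal G})J^2=o(1)$ in Assumption~\ref{ass1a}(iv) already delivers the required $o_P(1)$. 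So your anticipated obstacle is real but milder than you expect: no sup-norm control on the NN fit is needed.
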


\begin{proof}
The proof is given in Section \ref{proofth2}.
\end{proof}

\subsubsection{Method II: The  $k$-step Bootstrap} \label{sec:bootstrap}
We propose an alternative method for constructing FCIs, based on the time-series bootstrap. The bootstrap computes the critical value by repeatedly resampling from the original data set and using the quantile of the neural network predictors recomputed from the resampled data. Unlike the method of analytic standard error, the bootstrap does not involve closed-form ML methods, and therefore requires weaker conditions.

However, the performance and validity of bootstrap depend critically on how the bootstrap data are generated.  In the asset pricing context, the bootstrap data should properly capture the primary sources of uncertainty to predicted expected returns. Theorem \ref{th1} shows that the forecast uncertainty is mainly driven by time series variation. Thus, we can cluster in time by applying the wild bootstrap to mimic the sampling distribution of $\widehat{z}_{T+1|T}$.\footnote{In the case of serial correlation, the block bootstrap (\cite{kunsch1989jackknife}) or the stationary bootstrap (\cite{politis1994stationary}) can be applied directly in our setting.} Specifically, let $\eta_t^*$ denote an i.i.d. sequence of standard normal random variables and let $\widehat{g}(\cdot)$ denote the learned function using neural networks. Define the bootstrap residuals as:
\begin{equation}
 e_{i,t} ^*= (y_{i,t}- \widehat g(x_{i,t-1})) \eta_t^* . 
\end{equation}
We then apply neural networks to the resampled excess return $y_{i,t}^* = \widehat g(x_{i,t-1}) +  e_{i,t} ^*$ and $x_{i,t-1}$ to repredict the expected return, and repeat this step many times.  The forecast critical value is given by the bootstrap quantile of the repeated predictions.   

A potential limitation of the bootstrap procedure is the enormous computational burden. For example, estimating with 100 bootstrap iterations requires training 100 separate neural networks, one for each bootstrap sample. Fully training these neural networks is computationally very costly, limiting the applicability of bootstrap-based inference for larger machine learning models. To address this issue, we propose a $k$-step bootstrap method for neural network inference,  which significantly reduces computational burden. The $k$-step bootstrap was initially proposed and studied by \cite{davidson1999bootstrap} and \cite{andrews2002higher} in the context of making inference for nonlinear models. The idea is that, instead of fully training the neural network for each bootstrap sample, we only train it iteratively for $k$ epochs, with a relatively small $k$ such as 10 or 20. This approach takes advantage of the observation that the fully trained function from the original data, $\widehat{g}(\cdot)$, provides an excellent starting point for training in the bootstrap data. Thus, for each bootstrap re-sample, we initialize with $\widehat{g}_0^*(\cdot) = \widehat{g}(\cdot)$ and then train the network for $k$ epochs.

The full algorithm is given as follows:

\noindent\textbf{$k$-step Bootstrap Algorithm.} 
\begin{description}
\item[Step 1.] Generate $\eta_t^*\sim \mathcal N(0,1)$ independently;  generate 
\begin{eqnarray*}
 e_{i,t} ^*&=& (y_{i,t}- \widehat g(x_{i,t-1})) \eta_t^* \cr
 y_{i,t}^*&=& \widehat g(x_{i,t-1})+  e_{i,t} ^*.
\end{eqnarray*}

\item[Step 2.] Train the neural network on the bootstrap resampled data starting from the original $\widehat g(\cdot)$, and train for $k$ epochs. Obtain $\widehat g^*(\cdot)$.

\item[Step 3.] Repeat Steps 1-2, $B$ times to get $\widehat g^{*^1}(\cdot),...,\widehat g^{*^B}(\cdot)$.
Let $q^*_{\alpha}$  be the $1-\alpha$ quantile of 
$$
\left|\sum_i w_i\widehat g^{*,b}(x_{i,T})- \widehat z_{T+1|T}\right|,\quad  b=1,...,B.
$$
The bootstrap $1-\alpha$ level FCI for $z_{T+1|T}$ is
$$
[\widehat z_{T+1|T}- q_{\alpha}^*, \quad \widehat z_{T+1|T}+ q_{\alpha}^*  ].
$$
\end{description}

It is critical to note that we generate $\eta_t^*$ in Step 1, which only varies over time but not across individual assets. Hence, assets share the same $\eta_t^*$ at each period. This is because expression \eqref{eq3.7} clearly shows that it is the time series variation that determines the sampling distribution of the ML model. In contrast,  if the bootstrap sample were wrongly generated, such as independently from cross-sectional residuals by either generating  $\eta^*_{it}$ (resampled idenpendently across time and firms, this corresponds to the standard bootstrap implementation) or $\eta_i^*$ (resampled indepdently across firms, but fixed over time) instead of $\eta_t^*$, it will not correctly capture the strong cross-sectional dependence,  and will dramatically understate the forecast uncertainty. We illustrate this in simulation in Section \ref{sec:simulation}.

Theorem \ref{th3} formally justifies the proposed bootstrap confidence interval for predicted expected returns. 

\begin{theorem}[Bootstrap]\label{th3}
Suppose Assumption \ref{ass2} and Assumption \ref{ass1} in the appendix hold.  Then for any $\alpha\in(0,1)$,
$$
P(|\widehat z_{T+1|T} - z_{T+1|T}|<q^*_{\alpha}) \to 1-\alpha,
$$
where $q^*_{\alpha}$ is the $1-\alpha$ bootstrap sample. 
\end{theorem}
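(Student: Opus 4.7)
The plan is to show that, conditional on the observed data, the bootstrap prediction $\widehat z^*_{T+1|T} := \sum_i w_i \widehat g^*(x_{i,T})$ admits an asymptotic expansion mirroring (\ref{eq3.7}), with a bootstrap variance $V^*$ that converges to $V$. Combined with $\sqrt T\,V^{-1/2}(\widehat z_{T+1|T} - z_{T+1|T}) \to^d \mathcal N(0,1)$ from Theorem \ref{th1}, this forces the bootstrap quantile $q^*_\alpha$ to converge in probability to $\epsilon_\alpha \sqrt{V/T}$, which yields asymptotic coverage $1-\alpha$.

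First I would replay the derivation of (\ref{eq3.7}) inside the bootstrap world, treating $\widehat g(\cdot)$ as the ``true'' function and $e^*_{i,t} = \widehat e_{i,t}\eta_t^*$ as the noise. Since $\eta_t^*$ is common across assets at time $t$ but independent across $t$, the bootstrap errors inherit exactly the time-driven factor-type dependence exploited in Theorem \ref{th1}, which is precisely why resampling $\eta_t^*$ rather than an asset-specific $\eta_{i,t}^*$ is essential. The same influence-function machinery that produced $\zeta^*(\cdot)$ in Theorem \ref{th1} then yields
\begin{equation*}
\widehat z^*_{T+1|T} - \widehat z_{T+1|T} = \frac{1}{T}\sum_{t=1}^T \widehat{\mathcal B}_{t-1}\, \eta_t^* + o_{P^*}(T^{-1/2}), \qquad \widehat{\mathcal B}_{t-1} := \frac{1}{N}\sum_{i=1}^N \zeta^*(x_{i,t-1})\widehat e_{i,t}.
\end{equation*}
Conditional on the data, the summands are independent, mean-zero, and each has conditional variance $\widehat{\mathcal B}_{t-1}^2$, so a conditional Lindeberg CLT delivers $\sqrt T(\widehat z^*_{T+1|T}-\widehat z_{T+1|T})\mid\text{data}\to^d \mathcal N(0, V^*)$ in outer probability, with $V^* = \frac{1}{T}\sum_t \widehat{\mathcal B}_{t-1}^2$.

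Next I would show $V^*\to_P V$. Substituting the factor decomposition $\widehat e_{i,t} = g_{\beta}(x_{i,t-1})'v_t + u_{i,t} + o_P(1)$ from (\ref{eq3.4}), the cross-sectional average gives $\widehat{\mathcal B}_{t-1} = \mathcal B_{t-1} v_t + o_P(1)$, because the idiosyncratic component averages out cross-sectionally under the weak-dependence bounds of Assumption \ref{ass2}. Squaring and averaging over $t$ therefore yields $V^* = \frac{1}{T}\sum_t (\mathcal B_{t-1} v_t)^2 + o_P(1)$, which converges in probability to $\frac{1}{T}\sum_t \Var(\mathcal B_{t-1} f_t) = V$ by the uniform LLN permitted by Assumption \ref{ass2}.

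The main obstacle is justifying the $k$-step update: because the neural-network loss is globally non-convex, one cannot invoke a global contraction argument to claim that the $k$-step iterate $\widehat g^*$ coincides with the full-fit bootstrap minimizer. I would handle this by exploiting that $\widehat g$ is an interior stationary point of the original loss, so the bootstrap-loss gradient evaluated at $\widehat g$ is $O_{P^*}(T^{-1/2})$: it differs from the vanishing original gradient only through the mean-zero $\eta_t^*$-weighting of residuals. Under Assumption \ref{ass1}, the loss is locally quadratic in a neighborhood of the best neural-network approximation to $g$, and the higher-order $k$-step bootstrap theory of \cite{andrews2002higher} then implies that a finite number $k$ of gradient (or Newton) steps brings the iterate within $o_{P^*}(T^{-1/2})$ of the full bootstrap extremum. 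This discrepancy is absorbed into the $o_{P^*}$ remainder of the expansion above, so the limiting conditional distribution and the validity of $q^*_\alpha$ are unaffected.
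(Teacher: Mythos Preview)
Your core argument---deriving a bootstrap-world expansion $\widehat z^*_{T+1|T}-\widehat z_{T+1|T}=\frac{1}{T}\sum_t\widehat{\mathcal B}_{t-1}\eta_t^*+o_{P^*}(T^{-1/2})$ via the same Riesz-representer influence function $\zeta^*(\cdot)$, applying a conditional CLT, and showing the bootstrap variance $V^*$ converges to $V$---is exactly the route the paper takes. The paper writes the leading term as $\frac{1}{NT}\sum_{it}\widehat e_{i,t}\eta_t^*\zeta^*(x_{i,t-1})$, which is your $\frac{1}{T}\sum_t\widehat{\mathcal B}_{t-1}\eta_t^*$, and it verifies $\widehat{\SE}^{*2}_{inf}\to\SE^2_{inf}$ by the same residual substitution you sketch.

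Where you diverge from the paper is the $k$-step paragraph, and that step contains a genuine gap. The paper explicitly does \emph{not} prove validity of the $k$-step bootstrap: it states up front that ``for ease of technical proofs, we prove for the asymptotic validity of the fully trained bootstrap neural networks,'' and the entire argument takes $\widehat g^*$ to be the global minimizer of the bootstrap loss over $\mathcal G_{\DNN}$. Your attempt to cover the $k$-step case by invoking \cite{andrews2002higher} does not go through as written: that theory is built for finite-dimensional, smooth extremum estimators with a fixed-dimension positive-definite Hessian and $\sqrt n$-consistent parameter estimates, so that each Newton step contracts the error by a fixed factor. In the sieve/DNN setting here the parameter dimension $p(\mathcal G)$ grows with the sample, there is no single Hessian governing all coordinates, the loss is non-convex, and the relevant convergence rate for $\widehat g$ is the slower nonparametric rate of Lemma \ref{lem1}, not $T^{-1/2}$. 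None of the contraction or local-quadraticity premises of \cite{andrews2002higher} are verified under Assumptions \ref{ass2}--\ref{ass1}, so the claim that $k$ gradient steps land within $o_{P^*}(T^{-1/2})$ of the full bootstrap minimizer is unsupported. If you restrict your proof to the fully trained $\widehat g^*$, as the paper does, your argument is correct and essentially identical to theirs.
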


\begin{proof}
    The proof is given in Section \ref{proofth3}.
\end{proof}

\subsection{Monte Carlo Evidence} \label{sec:simulation}
We conduct Monte Carlo simulations to assess the constructed confidence intervals, using a data generating process calibrated to excess return data; the monthly returns of 3184 assets listed in NYSE and Nasdaq from January 2015 through December 2017 (calibrated period).  Simulated data are generated from a  conditional three–factor model, with $d$ characteristics as follows:  for $k=1,...,d$,
$$
x_{i,t,k}= \frac{1}{N}\text{rank}(	\bar x_{i,t,k}),\quad \bar x_{i,t,k} =0.7  \bar x_{i,t-1,k}  + 0.5\epsilon_{i,t,k},\quad \epsilon_{i,t,k}\sim \mathcal N(0, 1).
$$
The characteristics are generated via AR(1), then normalized by taking the cross-sectional ranking. Characteristics within asset $i$ have strong temporal dependence over time, but they are independent across assets. The $\beta$-functions are generated as follows:
\begin{eqnarray*}
    g_{\beta,1}(x_{i,t-1})&=&   x_{i,t-1,1}x_{i,t-1,2},\quad  g_{\beta,2}(x_{i,t-1})=\frac{1}{d}\sum_{j=1}^dx_{i,t-1,j}^2,\cr 
    g_{\beta,3}(x_{i,t-1})&=&\text{median}\{x_{i,t-1,1},...,x_{i,t-1,d}\}. 
\end{eqnarray*}

The three factors are generated from a multivariate normal distribution whose mean vector and covariance matrix are calibrated from the monthly return of Fama-French-three factors in the calibrated period. Finally, the idiosyncratic error is generated from a heteroskedastic normal distribution: $u_{i,t}\sim\mathcal  N(0,s_{i}^2\sigma^2)$, and $s_i\sim\text{Unif}[0.1,0.9]$. Here we set   $\sigma$ so that   Median$(s_i^2\sigma^2/\Var(y_{i,t}))=50\%$. Therefore, the idiosyncratic variances are determined so that the overall signal to noise ratio is fifty percent. 

Throughout we fix $N=500$ assets, $T=240$ periods and $d=80$ characteristics.  The goal is to forecast $z_{T+1}:=\frac{1}{N}\sum_iy_{i,T+1}$ using a pooled neural network and examine the forecast distribution using the proposed methods. We train three-layer feedforward neural networks with 4 neurons on each layer. The training algorithm is \texttt{Adam} with learning rate  0.01 and conducted over 500 epochs. 

\begin{figure}[h]
	\includegraphics[width=7in]{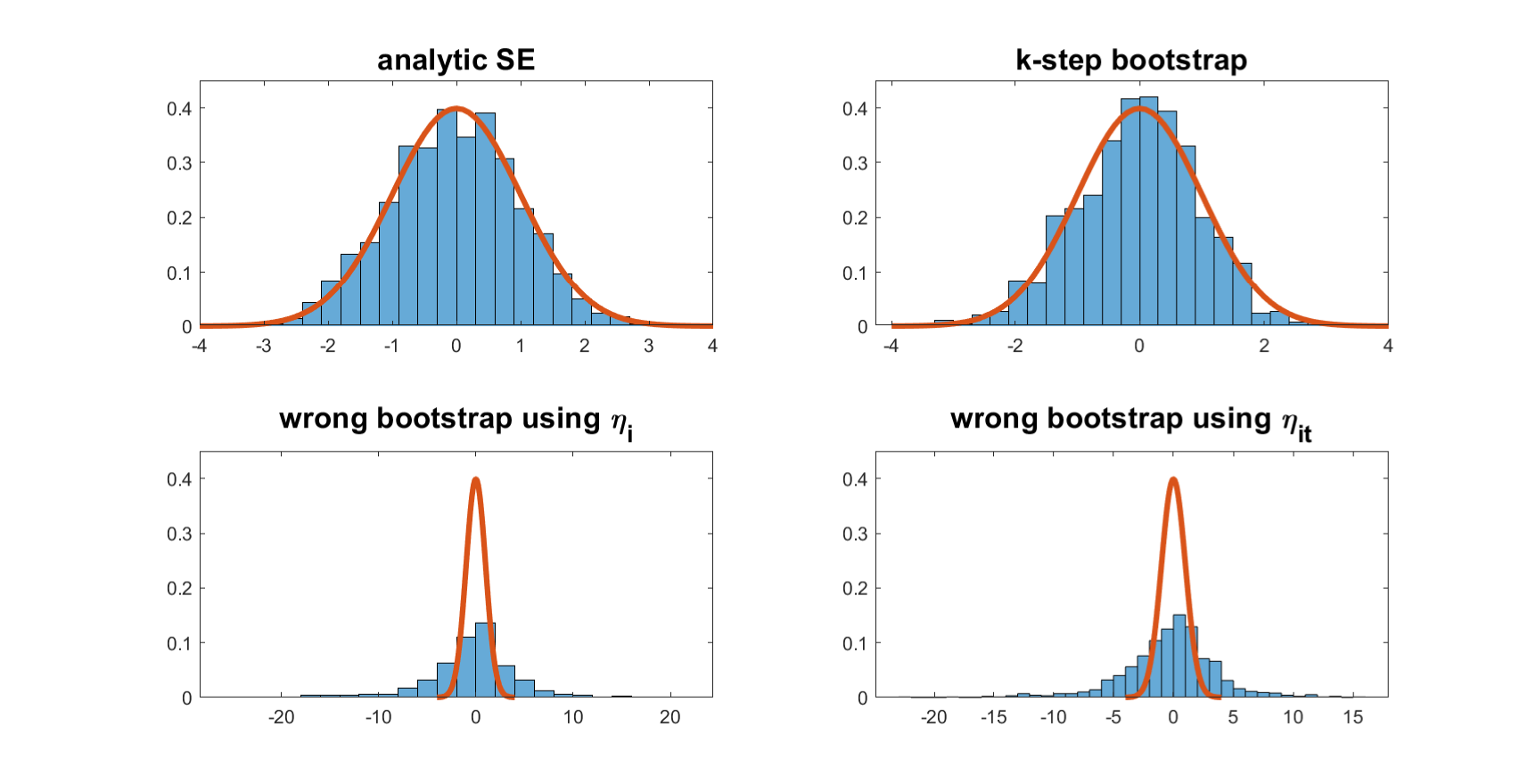}
		\caption{Histogram of t-statistics over 1000 simulation replications, and the standard normal density. The t-statistics are  standardized by either the analytical standard error $\widehat{\text{SE}}(\widehat z_{T+1})$ (top left panel) or the bootstrap interquartile range $\sigma^*$ (top right panel). The bottom panels use $\eta_i$ and $\eta_{it}$ to generate bootstrap residuals.}
	\label{fig1}
\end{figure}

To quantify the forecast uncertainty, we compute the forecast standard error of the neural network predictions, using both of the proposed methods.  For method I closed-form ML, we compute the t-statistic
$$
\frac{\widehat z_{T+1|T}- z_{T+1|T}}{\widehat{\text{SE}}(\widehat z_{T+1|T})},
$$
where the standard error $\widehat{\text{SE}}(\widehat z_{T+1|T})$ has an analytical form (\ref{eq3.11se}). Here we use five Fourier bases for $\Phi(x)$.  For method II ``$k$-step bootstrap", we generate the wild residual $\eta_t$ from the standard normal, bootstrap 100 times, and implement the $k$-step DNN bootstrap with $k=10.$ Then we compute the interquartile range of bootstrap, defined as 
$$
\sigma^* := \frac{q^*_{0.75}-q^*_{0.25}}{z_{0.75}- z_{0.25}}
$$
where $q^*_{\alpha} $ denotes the $\alpha$-quantile of bootstrap samples $ \sum_i w_i\widehat g^{*,b}(x_{i,T})- \widehat z_{T+1|T} $, and $z_{\alpha}$
denotes the $\alpha$-quantile of the standard normal distribution.  Then we  also compute the t-statistic using $\sigma^*$ in place of $\widehat{\text{SE}}(\widehat z_{T+1|T})$.  The interquantile range is a good proxy for the standard error obtained using bootstrap distribution, which is often used instead of the usual bootstrap standard error, because the former is guaranteed to be consistent but the latter is not.

The top two panels of Figure \ref{fig1} plot the histograms of the $t$ statistics over 1000 simulations and the standard normal density function. The $t$-statistics are  standardized by either $\widehat{\text{SE}}(\widehat z_{T+1|T})$ (left panel) or $\sigma^*$ (right panel). We see that although there are only 100 replications, the histograms of the t-statistics fit well to the standard normal density. Hence both proposed methods for quantifying the forecast uncertainty seem promising. 

It is critical to apply the bootstrap guided by theory. In particular, the resampling must reflect the dominant source of uncertainty, namely the factor shocks from the time series. To illustrate this, we show the distribution one would obtain if the standard bootstrap were applied, i.e. the bootstrap is applied incorrectly. The bottom two panels of Figure \ref{fig1} are the histograms of the bootstrap $t$-statistics (standardized by the bootstrap interquartile range), but the bootstrap residual is generated as 
$ e_{i,t} ^*= (y_{i,t}- \widehat g(x_{i,t-1})) \eta_i^* $ (the bottom left panel) and $ e_{i,t} ^* = (y_{i,t}- \widehat g(x_{i,t-1})) \eta_{i,t}^* $ (the bottom right panel), where $\eta_i^*, \eta_{i,t}^*\sim \mathcal N(0,1)$. These bootstraps mistreat the forecast uncertainty as  driven by the cross-sectional variation (because $\eta_i^*$ and $\eta_{i,t}^*$ both vary across $i$). Figure \ref{fig1} clearly shows that the misuse of bootstrap vastly understates the uncertainty.

\section{Applications} \label{sec:application}
In this section, we outline two applications in which the confidence intervals developed for machine learning predictions can be used. The first one leads to a sparse portfolio implementation, which features a region of ``non-participation'' (\cite{dow1992uncertainty}) in portfolio allocation. In this region, the investor does not invest in a risky asset if the associated uncertainty is too large.  In the second application (Section \ref{mult_hypothesis}), we show how the derived FCI can be used to select asset with significantly positive expected returns via multiple hypothesis testing. In addition, in the Appendix (\ref{sec:ba}), we also study a robust optimization approach building on \cite{hansen2008robustness}.  

\subsection{Portfolio Selection under Uncertainty Aversion} \label{sec:no_hold}
In the first application, we apply our developed FCI to portfolio allocation. Classic portfolio theory assumes that the investor knows the population moments determining her portfolio decisions. One of the major challenges in operationalizing this theory has been that these parameters need to be estimated and that estimation errors can often dominate portfolio decisions. This issue has long been recognized, for example, the early analysis in \cite{klein1976effect} and \cite{michaud1989markowitz}. These shortcomings have sometimes led researchers to question whether portfolio theory can be useful for applications as pointed out in \cite{demiguel2009optimal} and spurred a subsequent quest to address some of these shortcomings as in \cite{jagannathan2003risk,kan2007optimal,tu2011markowitz,yuan2023naive}.

In a seminal paper, \cite{garlappi2007portfolio} introduce a disciplined way to confront the estimation uncertainty in portfolio selection by introducing an uncertainty-averse investor. Their formulation builds on a large literature in economic theory such as \cite{ellsberg1961risk, gilboa1989maxmin, epstein1994intertemporal} that carefully distinguished the effects of risk vs. uncertainty aversion. In the context of portfolio selection, a risk-averse, expected utility investor behaves as if she knows the expected return and (co)variances. An uncertainty-averse investor, however, considers estimation uncertainty and integrates it into the portfolio selection problem. Implementing this approach requires the investor to be able to characterize the uncertainty, i.e. have a forecast confidence interval for the expected returns.

\cite{chopra2013effect} show that the effect of uncertainty in means on portfolio selection is much larger than the effect of uncertainty in variances. Our analysis therefore focuses on the effect of means. A machine learning-based estimation of covariances and their effect on portfolio selection is left for future research. Indeed, many early studies have struggled to implement mean-variance portfolio theory because means are notoriously difficult to estimate from time series. In our study, we leverage the benefits of forecasts of expected returns obtained from neural networks and also incorporate the associated estimation uncertainty. In the following, we briefly recall the main definitions of \cite{garlappi2007portfolio}.

We consider the allocation among $R$ risky assets, denoting their multivariate, conditional expected excess return as $z_{T+1|T}=(z_{1,T+1|T},...,z_{R, T+1|T})'$ and its prediction from neural networks as $\widehat z_{T+1|T}=(\widehat z_{1,T+1|T},...,\widehat z_{R, T+1|T})'$. These basis assets may be factor portfolios. In addition, we also consider the case in which the basis assets are individual asset returns. We denote the covariance matrix, $\Sigma_T$, which can be either for individual assets or factor portfolios. Throughout, we will denote the portfolio weights, which we aim to solve for as $\bomega=(\omega_1,...,\omega_R)'$. We can now describe the portfolio selection problems.\\

The \cite{Mar}, i.e.~standard mean-variance (MV) problem is given as:
\begin{eqnarray}\label{eq:mv}
\text{MV problem}&&\max_{\bomega}  \bomega'  \widehat z_{T+1|T}   -\frac{\gamma}{2}\bomega' \Sigma_T\bomega 
\end{eqnarray}
where $\gamma > 0$ is the coefficient of risk aversion.  In this problem, $\widehat{z}_{T+1|T}$ is taken as given without accounting for its associated estimation uncertainty. In contrast, the uncertainty-averse formulation explicitly considers the estimation uncertainty. It takes a ``max-min'' form. It can be interpreted as finding the best portfolio in the worst case for the expected return, i.e. 
\begin{eqnarray}\label{eq:uamv}
\text{UA-MV problem}&&\max_{\bomega} \min_{\mu\in\text{FCI}} \bomega' \mu  -\frac{\gamma}{2}\bomega' \Sigma_T\bomega,
\end{eqnarray}
where $\text{FCI}$ is the forecast confidence interval for the expected return. The forecast confidence interval takes the following form:

$$
\text{FCI}= [\widehat z_{1,T+1|T} - q_{1,\alpha},\widehat z_{1,T+1|T} + q_{1,\alpha} ]\times\cdots \times  [\widehat z_{R,T+1|T} - q_{R,\alpha},\widehat z_{R,T+1|T} + q_{R,\alpha} ],
$$
where $q_{i,\alpha}$ is the critical value for $\widehat z_{i,T+1|T}-z_{i,T+1}$ under significance level $\alpha$ obtained using either the analytic forecast standard error or bootstrap. For instance, using the analytic standard error, we can take 

$$
q_{i,\alpha}= \widehat \SE(\widehat z_{i,T+1|T}) \times  \epsilon_{\alpha},
$$ 
where $\epsilon_{\alpha}$ corresponds to the level uncertainty aversion. If the bootstrap is used to obtain a forecast confidence interval, we can use 
$$
q_{i,\alpha}= q^*_{i,\alpha},
$$
which is the $1-\alpha$ quantile of the bootstrap distribution of $ \widehat z_{i,T+1|T}$. In applications, particularly when multiple assets are considered, controlling for the Type I error rate for multiple testing is also desirable. A simple adjustment is the Bonferroni correction, i.e. setting $\alpha = 0.05/R$.

While the intuition of problem \eqref{eq:uamv} is clear, it may appear hard to solve because of the inner maximization.  \cite{garlappi2007portfolio} characterize each element of the solution as a function of other elements.  Our contribution in this framework, is in the following theorem, which shows that the UA-MV problem can be reformulated as a $\ell_1$- penalized optimization problem, known as Lasso.

\begin{theorem}\label{th5}
The multivariate UA-MV problem, \eqref{eq:uamv}, is equivalent to the following adaptive Lasso formulation:
\begin{equation}\label{eq4.5}
   \min_{\bomega=(\omega_1,...,\omega_R)}  \frac{\gamma}{2}\bomega' \Sigma_T\bomega -\sum_{i=1}^R\omega_i \widehat z_{T+1|T,i} +\sum_{i=1}^Rq_{\alpha,i}|\omega_i|.
\end{equation}
\end{theorem}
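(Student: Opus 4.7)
The plan is to carry out the inner minimization in closed form using the product (box) structure of the FCI, and then recognize the resulting outer problem as the $\ell_1$-penalized objective. The key observation is that since $\text{FCI} = \prod_{i=1}^R [\widehat z_{i,T+1|T} - q_{i,\alpha}, \widehat z_{i,T+1|T} + q_{i,\alpha}]$ is a Cartesian product of intervals and the inner objective $\bomega'\mu = \sum_i \omega_i \mu_i$ is separable in the coordinates of $\mu$, the minimization over $\mu \in \text{FCI}$ decouples into $R$ one-dimensional minimizations.

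For each fixed $\bomega$, I would compute, coordinate by coordinate,
\begin{equation*}
\min_{\mu_i \in [\widehat z_{i,T+1|T} - q_{i,\alpha},\, \widehat z_{i,T+1|T} + q_{i,\alpha}]} \omega_i \mu_i.
\end{equation*}
Splitting by the sign of $\omega_i$: if $\omega_i \ge 0$ the minimum is attained at the left endpoint and equals $\omega_i \widehat z_{i,T+1|T} - q_{i,\alpha}\omega_i$; if $\omega_i < 0$ it is attained at the right endpoint and equals $\omega_i \widehat z_{i,T+1|T} + q_{i,\alpha}\omega_i$. Both cases combine into the single expression $\omega_i \widehat z_{i,T+1|T} - q_{i,\alpha}|\omega_i|$. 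Summing across $i$ gives
\begin{equation*}
\min_{\mu\in\text{FCI}} \bomega' \mu = \sum_{i=1}^R \omega_i \widehat z_{i,T+1|T} - \sum_{i=1}^R q_{i,\alpha} |\omega_i|.
\end{equation*}

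Substituting this into the UA-MV problem turns the max-min into the pure maximization
\begin{equation*}
\max_{\bomega} \;\; \sum_{i=1}^R \omega_i \widehat z_{i,T+1|T} - \sum_{i=1}^R q_{i,\alpha}|\omega_i| - \frac{\gamma}{2}\bomega'\Sigma_T\bomega,
\end{equation*}
and multiplying the objective by $-1$ to pass from $\max$ to $\min$ yields exactly \eqref{eq4.5}. Since the inner minimum is attained (the FCI is compact) and the equivalence is an identity rather than an inequality, the argmax of the UA-MV problem coincides with the argmin of the Lasso formulation.

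There is no genuine obstacle here; the proof is a one-line envelope calculation enabled by the rectangular shape of the FCI, and the only thing worth stressing is that the separability hinges on the \emph{product} structure of the confidence set. If one wanted to highlight the ``adaptive'' aspect of the $\ell_1$ penalty, I would add a remark that the penalty weights $q_{i,\alpha}$ are asset-specific, reflecting the heterogeneity of forecast uncertainty across assets, and that they shrink to zero asymptotically for any coordinate whose standard error vanishes at the $T^{-1/2}$ rate established in Theorem \ref{th1}.
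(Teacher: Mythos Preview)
Your proof is correct and follows essentially the same approach as the paper: both solve the inner minimization coordinatewise over the box constraint to obtain $\omega_i\widehat z_{T+1|T,i}-q_{\alpha,i}|\omega_i|$, then flip the sign of the resulting outer maximization to arrive at \eqref{eq4.5}.
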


\begin{proof}
    The proof is given in Section \ref{proofth4}. 
\end{proof}

This result offers an economic justification for using penalized shrinkage portfolio selection, as developed by  \cite{ao2019approaching} and  \cite{kozak2020shrinking}. Furthermore, it justifies that the penalization parameter should be explicitly chosen as $q_{\alpha,i}$, the quantile of the ML forecast error. The following remarks provide additional guidance for implementation in different cases.

\begin{remark}
    If a risk-free asset is not available, 
    $\widehat z_{T+1|T,i}$ denotes the raw return. Then    the above problem becomes a constrained adaptive Lasso problem, i.e.
    $$
       \min_{\bomega=(\omega_1,...,\omega_R)}  \frac{\gamma}{2}\bomega' \Sigma_T\bomega -\sum_{i=1}^R\omega_i \widehat z_{T+1|T,i} +\sum_{i=1}^Rq_{\alpha,i}|\omega_i|, \;\; \text{subject to} \;\; \sum_{i=1}^{R} \omega_i=1
    $$
\end{remark}

\begin{remark}
    The problem can be solved with standard software packages. To see this,  write the Cholesky decomposition of $\Sigma_T$ as 
    $$
    \Sigma_T = LL'
    $$
    where $L$ is a  $R\times R$ lower triangular matrix and denote $Q_\alpha=\text{diag}(q_{\alpha, 1},...,q_{\alpha, R})$ be a diagonal matrix with elements $q_{\alpha, i}$. Then it is easy to derive that (\ref{eq:uamv}) is equivalent to the following adaptive Lasso problem:
    \begin{equation} \label{eq:ua_lasso_reg}
        \min_{\bomega}\frac{1}{2}\|Y^*- X^*\bomega\|_2^2 +\|Q_{\alpha}\bomega\|_1, \quad \text{with} \quad X^* =\sqrt{\gamma} L', \;\; Y^*=\frac{1}{\sqrt{\gamma}}L^{-1}\widehat z_{T+1|T}.
    \end{equation}
    If a risk-free asset is not available, the problem is again augmented with the constraint $\sum_{i=1}^{R} \omega_i=1$. It is well-known that the Lasso solution may be sparse and thus produce portfolio allocations, for which some (or many) $\omega_i$'s are zero.
\end{remark}

The $\ell_1$-penalized formulation facilitates the economic interpretation of the behavior of uncertainty-averse investors compared to that of mean-variance investors.  To gain the intuition, in below we study a simple case  of a single risky asset:
\begin{equation}\label{eq5.1add}
\max_{\omega}\min_{\mu} \omega\mu   -\frac{\gamma}{2} \omega^2\sigma^2 \quad \text{subject to} \quad |\mu- \widehat z_{T+1|T}| \leq  q_{\alpha},
\end{equation}
where $q_{\alpha}$ denotes  the $(1-\alpha)$-quantile for the predicted expected return.  Corollary \ref{th:uarep1} provides a closed-form solution and allows for a comparison with the mean-variance solution in which the effect of estimation uncertainty can be seen directly.  In the following, we denote $(x)_+ = \max\{x,0\}$ and $\sgn(x)$ as the sign of $x$. 

\begin{corollary}
\label{th:uarep1}
The uncertainty-constrained MV problem \eqref{eq5.1add} is equivalent to the following:
$$
\min_{\omega} \frac{1}{2}\left(\omega- \omega^{\MV}\right)^2 + \lambda_{\alpha} |\omega|
$$
and the optimal solution is 
$$
\omega^{\UA} = \sgn(\omega^{\MV})(|\omega^{\MV}| -\lambda_\alpha)_+
$$
where $$\omega^{\MV}=\frac{\widehat z_{T+1|T}}{\gamma\sigma^2},\quad 
\lambda_{\alpha}= \frac{q_{\alpha}}{\gamma\sigma^2}.
$$
\end{corollary}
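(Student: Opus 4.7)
The plan is to reduce the max–min to a univariate penalized least squares problem by solving the inner minimization in closed form, then recognizing the outer problem as the standard soft–thresholding (Lasso) problem, which admits a well-known closed-form solution.

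First, I would fix $\omega$ and solve the inner minimization $\min_{\mu} \omega \mu$ subject to $\mu \in [\widehat z_{T+1|T} - q_{\alpha}, \widehat z_{T+1|T} + q_{\alpha}]$. Since the objective is linear in $\mu$, the minimum is attained at an endpoint: $\mu^* = \widehat z_{T+1|T} - q_{\alpha}\sgn(\omega)$. Substituting, the inner value equals $\omega \widehat z_{T+1|T} - q_{\alpha}|\omega|$. Convention: at $\omega=0$ the inner minimum is $0$, which is consistent with $q_{\alpha}|\omega|=0$.

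Plugging this back gives the reduced outer problem
\begin{equation*}
    \max_{\omega} \; \omega \, \widehat z_{T+1|T} - q_{\alpha}|\omega| - \frac{\gamma}{2}\omega^2 \sigma^2,
\end{equation*}
which, after multiplying by $-1$ and dividing by $\gamma \sigma^2 > 0$, becomes
\begin{equation*}
    \min_{\omega} \; \frac{1}{2}\omega^2 - \omega \, \frac{\widehat z_{T+1|T}}{\gamma \sigma^2} + \frac{q_{\alpha}}{\gamma \sigma^2}|\omega|.
\end{equation*}
Writing $\omega^{\MV} = \widehat z_{T+1|T}/(\gamma\sigma^2)$ and $\lambda_{\alpha} = q_{\alpha}/(\gamma\sigma^2)$ and completing the square yields $\frac{1}{2}(\omega - \omega^{\MV})^2 + \lambda_{\alpha}|\omega|$ up to a constant independent of $\omega$, establishing the claimed equivalence.

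For the closed-form minimizer, I would invoke the first-order condition of the convex, piecewise-smooth objective using the subdifferential of $|\omega|$: any optimum must satisfy $0 \in \omega - \omega^{\MV} + \lambda_{\alpha}\, \partial|\omega|$. Splitting into the cases $\omega > 0$, $\omega < 0$, and $\omega = 0$ gives the standard soft-thresholding solution $\omega^{\UA} = \sgn(\omega^{\MV})(|\omega^{\MV}| - \lambda_{\alpha})_+$. There is no serious obstacle here: the only subtle point is handling the non-differentiability at $\omega = 0$, which is cleanly addressed via the subgradient characterization and matches the boundary case $|\omega^{\MV}| \leq \lambda_{\alpha}$ under which no risky investment is optimal.
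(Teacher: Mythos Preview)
Your proof is correct and follows essentially the same approach as the paper: solve the inner linear minimization explicitly to obtain the $q_\alpha|\omega|$ penalty, complete the square to reach the $\ell_1$-penalized quadratic, and identify the minimizer as the soft-thresholding operator. The paper's proof is in fact terser than yours---it simply states that the solution is ``well known to be the soft thresholding operator''---so your subdifferential derivation is a welcome elaboration rather than a departure.
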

\begin{proof}
    The proof is given in Section \ref{proofth4}. 
\end{proof}

In the above theorem, $\omega^{\MV}$ is the classic mean-variance portfolio without uncertainty constraints, i.e.~the solution to 
$$
\max_{\omega}  \omega \widehat z_{T+1|T}  -\frac{\gamma}{2} \omega^2\sigma^2.
$$
The corollary directly illustrates why the UA-MV problem yields possible non-participation. For a fixed coefficient of risk aversion ($\gamma$) and variance ($\sigma^2$), an investor will choose not to invest in the risky asset if the estimation uncertainty ($q_{\alpha}$) is too large. More importantly, this result provides a direct economic interpretation of the UA-MV problem. The solution captures investor behavior in which she tests the following null hypothesis:
$$
H_0:\quad z_{T+1|T} = 0.
$$

On the one hand, when she does not reject the null hypothesis, $|\widehat{z}_{T+1|T}  | \leq q_{\alpha}$, the investor finds the predicted excess return not significantly different from zero as the level of uncertainty is too large. In this case,  she will hold no position in the risky asset and $\omega^{\UA} = 0$. Her optimal portfolio is then solely holding the risk-free asset. At first glance, this may seem like excessively conservative behavior. However, \cite{bessembinder2018stocks} performs an extensive empirical investigation of this hypothesis and finds that many stocks do not outperform treasuries ex-post in a significant way. 

On the other hand, when the investor predicts that the expected return of the risky asset is significantly different from the risk-free rate, which happens if $|\widehat{z}_{T+1|T}  | > q_{\alpha}$, she then starts investing in the risky asset. The decision of whether to short or long the risky asset (the sign of $\omega^{\UA}$) is determined by the sign of $\widehat{z}_{T+1|T}$. However, even in this case, the investor will still invest more cautiously in the risky asset by \textit{shrinking} her investment towards the risk-free rate. For instance, suppose the investor finds that $\widehat{z}_{T+1|T} >  q_{\alpha}$, then her allocation in the risky asset is 
$$ 
\omega^{\UA} = \omega^{\MV} - \frac{q_{\alpha}}{\gamma \sigma^2} > 0.
$$
Instead of adopting the classic mean-variance portfolio, she reduces her allocation to the risky asset, and the amount of reduction, $q_{\alpha}/(\gamma \sigma^2)$, reflects her tolerance towards uncertainty, which is closely linked to risk aversion.

We illustrate the optimal solution to the UA-MV problem in Figure \ref{figNohold}. The upper panel of Figure \ref{figNohold} plots the optional min-max weight, $\omega^{\UA}$, against the classic MV portfolio. Here we fix the level of forecast standard error (FSE) and thus $q_{\alpha}$. As we can see, $\omega^{\UA}$ is zero for small magnitudes of $|\omega^{\MV}|$, i.e., the investor does not allocate towards this asset. The allocation toward the risky assets starts to increase but with constant shrinkage relative to the MVE as the latter deviates from zero. The lower panel of Figure \ref{figNohold} plots $\omega^{\UA}$ as a function of the forecast standard error for a fixed $\omega^{\MV}$. As the forecast standard error increases, her position in the risky asset decays linearly and eventually becomes zero.

In the case of two risky assets and no risk-free asset, we also provide an explicit solution and discuss its properties in \ref{sec:ua_mult}.

\begin{figure}[H]
\caption{Mean-Variance vs. Uncertainty Averse (max-min) Portfolio Allocation} \label{figNohold}
\vspace{-2mm}
\caption*{\setstretch{1.0} \scriptsize This figure shows the uncertainty averse (max-min) solution for the optimal portfolio weight in comparison with the standard mean-variance portfolio weight. In the upper panel, we hold the level of the forecast standard error (FSE) fixed and vary the mean-variance weight ($\omega^{\MV}$). In the lower panel, we fix the mean-variance weight at 0.2 and vary the forecast standard error (FSE).}
    \includegraphics[width=1.0\textwidth]{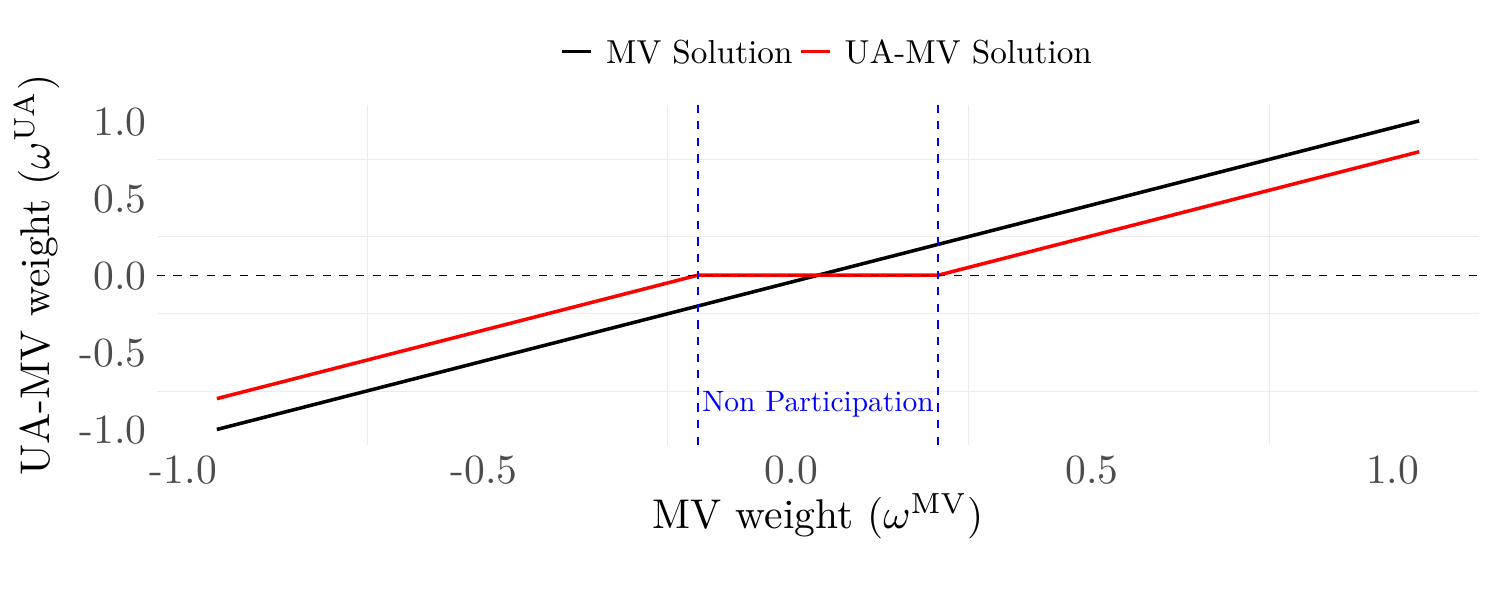}
    \includegraphics[width=1.0\textwidth]{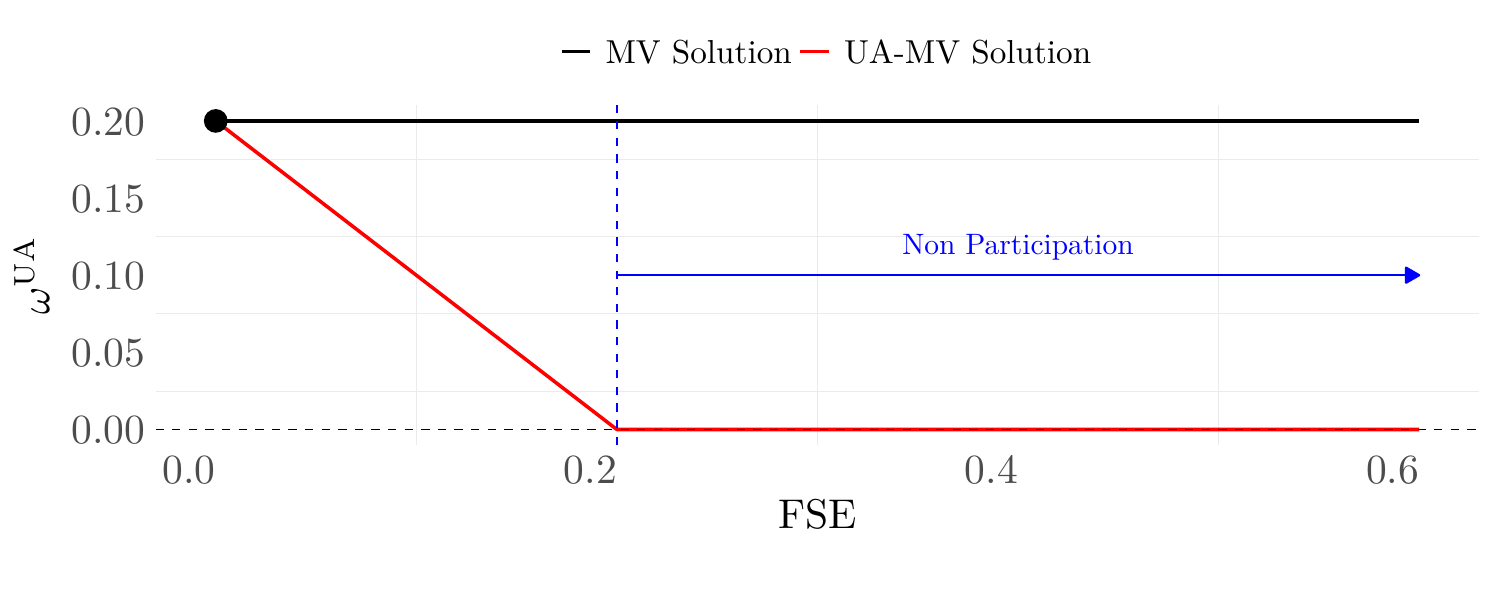}
\end{figure}

\subsection{Asset Selection}\label{mult_hypothesis}
In the second application, we apply the FCI to security selection. The literature often considers long-short portfolios sorted on characteristics or return predictions typically ignoring short-sale costs. In a recent study \cite{muravyev2022anomalies} obtain implied short-sale costs from the equity options markets and find that many anomalies are no longer profitable after accounting for short-selling fees. Moreover, several major market participants such as mutual funds are either prohibited from short-selling or do not engage in shorting for other reasons. \cite{chen2013first} document that fewer than 10\% of mutual funds engage in short-selling and in general do not have large short positions. More generally, and in particular in the light of the findings of \cite{bessembinder2018stocks}, it is natural to ask whether expected returns are indeed significantly positive. In an intriguing analysis, he documents that the US equity premium is indeed highly concentrated and can be attributed to about 5\% of stocks. While it is unlikely that our forecast confidence intervals (or any other method) can extract all ex-post winners, the use of machine learning FCI, as we describe below, highlights a good way of finding securities with positive expected returns.

In the following, we consider a long-only mutual fund manager. She needs to build a portfolio with $K$ stocks. Rather than merely allocating her funds toward the $K$ stocks with the highest predicted returns, she aims to allocate toward securities for which she has stronger conviction, i.e. those that have significantly positive predictions. So for each security under consideration, she conducts a hypothesis test 
$$
H_0^i:  z_{i,T+1|T} = 0,$$
versus the one-sided alternative 
$$
H_a^i: z_{i,T+1|T} > 0 
$$
for each $i=1,..., R,$ i.e., she aims to detect which stocks have significantly and positively predicted expected returns. Given that there may be a large number of securities under consideration, this is a \textit{multiple hypotheses} testing problem. Alternatively, one might consider testing if an expected excess return is significantly larger than the transaction costs associated with trading the security. We leave this extension for future research.\footnote{\cite{jensen2024machine} highlight the importance of transaction costs in the context of machine learning portfolios.}

We compute the t-statistics of the neural network predicted excess return for each asset: 
$$
t_i= \frac{\widehat z_{i,T+1|T}}{\widehat\SE(\widehat z_{i,T+1|T})}.
$$
Let $p_i$ denote the corresponding p-value. We then determine a cutoff value $c_0>0$ so that $H_0^i$ is rejected if $p_i<c_0$. The cutoff value should be determined to adjust for the type I error under the multiple testing setup. One of the widely used multiple testing adjustments is based on the false discovery rate, defined to be the expected value of  $\mathcal F/\mathcal R$,
where 
$$
\mathcal F=\sum_{i=1}^N1\{i\leq R: p_i<c_0,\text{ but } z_{i,T+1|T}=0 \} =\sum_{i=1}^N1\{H_0^i \text{ is falsely rejected}\}
$$
and 
$$
\mathcal R=\sum_{i=1}^N1\{i\leq R: p_i<c_0 \} =\sum_{i=1}^N1\{H_0^i \text{ is   rejected}\}
$$
We follow the  \cite{benjamini1995controlling} procedure, who first sort the p-values $p_{(1)}\leq\cdots \leq p_{(R)}$ and determine
$$
c_0= p_{(K)},\quad \text{where } K = \max\{i: p_i\leq \alpha i/R \}
$$
where $\alpha$ is the desired significance level such as 0.05.  Then the false discovery rate can be controlled to be below $\alpha$, and is robust to  cross-sectional dependence among tests (see \cite{benjamini2001control}). 

As illustrated by \cite{sullivan1999data, barras2010false} and \cite{harvey2020false}, multiple testing is a valuable tool for mitigating data-snooping bias in performance evaluation and for selecting assets that are both statistically and economically significant. In our empirical study, we apply the outlined procedure at the individual stock level, where excess returns are predicted using neural networks. Our results demonstrate that selecting assets based on ML t-statistics, when adjusted by the forecast standard errors, delivers improved performance compared to naive selections that do not account for prediction uncertainty.

\section{Empirical Analysis} \label{sec:empirics}
\subsection{Data and Implementation} \label{sec:data}
We take the dataset of \cite{jensen2022there} as our starting point. It uses stock returns, volume, and price data from the Center for Research in Security prices (CRSP) monthly stock file. Our sample starts in January 1955 and ends in December 2021. The total number of stocks is slightly over 23,000 and the average number of stocks per month is approximately 3,663. Following standard conventions in the literature, we restrict the analysis to common stocks of firms incorporated in the US trading on NYSE, Nasdaq or Amex. Balance sheet data are obtained from Compustat.  In order to avoid potential forward-looking biases, we lag all characteristics that are built on Compustat annually by at least six months and all that build on Compustat quarterly by at least four months. In order to mitigate a potential back-filling bias as noted by \cite{banz1986sample}, we discard the first 24 months for each firm. We impute missing data using the method of \cite{freyberger2024missing}. Table \ref{tab:chars_overview} in the Appendix provides an overview of the 123 characteristics we employ. We obtain the 1-month T-Bill rate from Kenneth French's data library.

Following \cite{gu2020empirical} we split the sample into 18 years of training (1955 - 1972), 10 years of validation (1973 - 1982) and use the remaining years (1983 - 2021) for out-of-sample testing. We re-train the model every 12 months and increase the training sample by 1 year, keeping the validation sample fixed at 10 years, but we roll it forward so that the most recent 12 months are included. For the neural network model, we use a three-layer feedforward neural network with 32, 16 and 8 neurons on the hidden layers.\footnote{Section \ref{sec:tuning} in the Appendix gives an overview of the tuning parameters.}

After fitting the neural network, we substitute in the firm level characteristics  $x_{i,T}$, and  predict the individual stocks for month $T+1$ as
$$
\widehat y_{i,T+1|T} = \widehat g(x_{i,T}).
$$
From the previous step, we obtain predictions for each firm $i$. We then either use the predictions for individual firms in portfolio selection. We use the expected returns and their confidence intervals obtained through the machine learning models.

To compute forecast confidence intervals, we implement both the closed-form ML approximation and the $k$-step bootstrap. For the bootstrap, we set $k=10$, i.e. we re-train the models on each bootstrap sample for 10 epochs. The closed-form ML approximation uses a Fourier basis expansion $\phi(x)= (\sin(j\pi x/4), \cos(j\pi x/4), j=1...3)$. We then obtain the standard error for individual assets or portfolios following Section \ref{sec:analytical} for use in applications.

\subsection{Uncertainty Averse Portfolios of Individual Stocks} \label{sec:ua_ind}
We implement both the mean-variance efficient portfolio (problem \eqref{eq:mv}) and the uncertainty averse portfolio (\eqref{eq:uamv}) for the 500, 750 and 1,000 largest stocks.\footnote{In the early parts of the sample, we encounter a few months in which fewer firms are available. In these cases, we use all the available stocks with a 240-month history.} Throughout, we use a 240-month estimation window to estimate the covariance matrix using the POET estimator \citep{POET}. By conditioning on large firms with a 240 months history, we deliberately create a sample of very large firms to mitigate concerns over small and illiquid stocks. In the context of cross-sectional anomalies, \cite{patton2020you} show that transaction cost play an important role and in the machine learning setting, \cite{avramov2023machine} argue that standard implementations often concentrate the predictability on small and illiquid securities. Figure \ref{fig:firm_size} plots the median effective normalized size for three scenarios, 500, 750 and 1000 largest firms with a 240-month history. The normalized size ranks the market equity of all firms each month $t$, the ranks are then divided by the number of stocks each period so that the largest firm has a normalized size of 1 and the smallest firm has a normalized size of $1/N_t$, where $N_t$ is the number of firms each period. 

\begin{figure}[H]
\caption{Firm Size Distribution our Sample} \label{fig:firm_size}
\vspace{-2mm}
\caption*{\setstretch{1.0} \scriptsize ...}
    \includegraphics[width=1.0\textwidth]{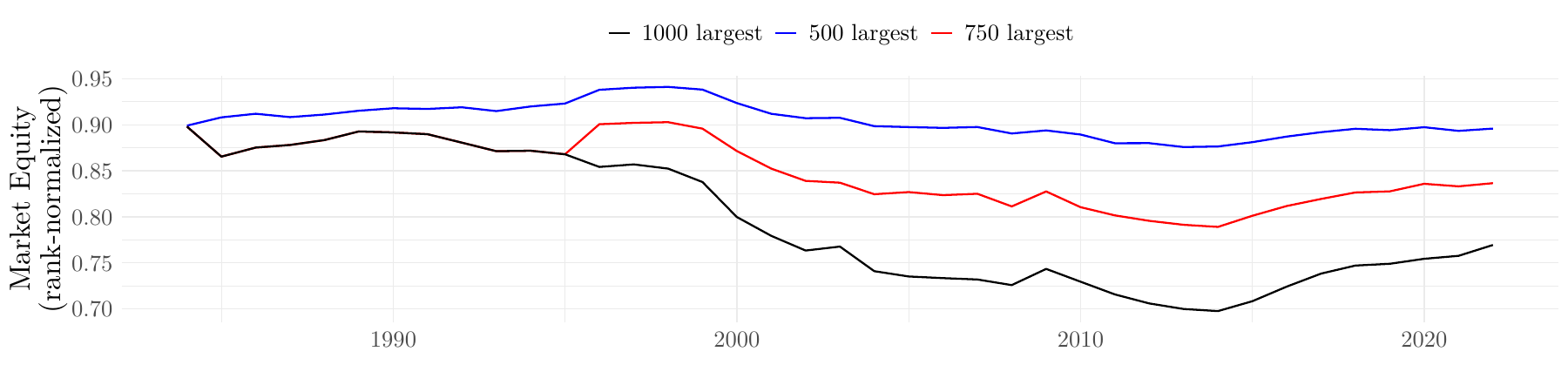}
\end{figure}

Predictions for expected returns are obtained from the neural network model. To incorporate estimation uncertainty, we can either obtain standard errors from the closed-form ML approximation, the bootstrap, or use the more conservative of the two. The latter connects well with the max-min nature of the uncertainty-averse portfolio selection approach. Table \ref{tab:ua_ind_worst} shows the results for the most conservative version of the standard error.\footnote{Table \ref{tab:ua_ind_analytical} and \ref{tab:ua_ind_bootstrap} in the appendix show  results for the analytical  and bootstrap standard error, respectively.} We compare the UA portfolio with three benchmarks: the mean-variance efficient portfolio (MVE), the equal-weighted portfolio (EW), and the global minimum-variance portfolio (GMVP). For the MVE, we use a coefficient of risk-aversion of one ($\gamma=1$). As we are using excess returns, there is no automatic constraint on the magnitude of the portfolio weights, we therefore normalize the standard deviation of the in-sample returns to 20\% annualized for the MVE and apply this constant also to the UA portfolios; we apply a different normalizing constant for the GMVP since the weights are of different magnitude.

Table \ref{tab:ua_ind_worst} shows that the MVE achieves an admirable annualized Sharpe ratio of 1.15, 1.07 and 1.16 for the three cases. This is notable since the result is achieved on a set of very large firms that are likely not subject to large liquidity frictions. It also considerably outperforms the equal-weighted portfolio and the global minimum-variance portfolio in all cases. The uncertainty-averse MV portfolio (UA 25, UA 50, and UA 75) often achieves better Sharpe ratios than the MVE. This is particularly interesting, as these portfolios are not formed to maximize the Sharpe ratio, whereas the MVE is formed to do precisely that. However, even a mean-variance investor can often achieve higher utility through the UA portfolios. The higher Sharpe ratios of the UA portfolios are achieved by reducing the standard deviation more strongly than average excess returns. Figure \ref{fig:cum_returns_ind_worst} shows the cumulative returns for the case of the 500 largest firms for the six portfolios. We can clearly see that the UA portfolios decrease less during recession periods, highlighting the mechanism of lower downside risk.

As we increase the confidence level, however, the rate of ``non-participation" increases, resulting in many weights of zero as predicted by our theory. In Figure \ref{fig:ua_weights_ind} we show the empirical analogue to Figure \ref{figNohold} for the case of the largest 500 firms. In the figure we contrast the weights of the mean-variance efficient portfolio and the UA portfolio for two dates. Clearly, we can see that the forecast uncertainty varies over time. For the example of November 1987, we set many more weights to zero in the UA portfolio, whereas we set fewer weights to zero in March 2021. In addition, we can see that the portfolio becomes more conservative as we increase the confidence level. Intuitively, the region of non-participation is proportional to the width of the confidence interval and as the confidence level increases, the region of non-participation also increases resulting in more weight of zero. Overall, the plot in Figure \ref{figNohold} demonstrates that the empirical results align closely with the theoretical predictions of UA-behavior outlined in Corollary \ref{th:uarep1}.

\begin{table}[H]
\caption{Performance Statistics for Mean-Variance and Max-Min Portfolio (conservative SE)} \label{tab:ua_ind_worst}
\begin{singlespace}
\begin{footnotesize}
This table shows the annualized mean, annualized standard deviation, annualized Sharpe ratio, Sortino ratio, the maximum drawdown, the best and worst months for the mean-variance efficient portfolio (MVE), the global minimum variance portfolio (GMVP), an equal weighted portfolio and the UA-MV portfolios for the 25\%, 50\% and 75\% confidence level. The columns Minimum Zeros, Median Zeros, Maximum Zeros show the minimum, median and maximum fraction of the portfolio weights that are set to zero in the UA-MV approach. The standard error is the maximum of the closed-form ML approximation  and the bootstrap standard error using an estimation window of 240 months. All results are out-of-sample for the period from January 1983 - December 2021.
\end{footnotesize}
\end{singlespace}
\resizebox{\ifdim\width>\linewidth\linewidth\else\width\fi}{!}{
\begin{tabular}[t]{lrrrrrrrrrr}
\toprule
\multicolumn{1}{c}{} & \multicolumn{1}{c}{Mean (\%)} & \multicolumn{1}{c}{\makecell[c]{Standard\\ Deviation (\%)}} & \multicolumn{1}{c}{\makecell[c]{Sharpe\\ Ratio}} & \multicolumn{1}{c}{\makecell[c]{Sortino\\ Ratio}} & \multicolumn{1}{c}{\makecell[c]{Maximum\\ Drawdown}} & \multicolumn{1}{c}{\makecell[c]{Best\\ Month}} & \multicolumn{1}{c}{\makecell[c]{Worst\\ Month}} & \multicolumn{1}{c}{\makecell[c]{Minimum\\ Zeros}} & \multicolumn{1}{c}{\makecell[c]{Median\\ Zeros}} & \multicolumn{1}{c}{\makecell[c]{Maximum\\ Zeros}} \\
\cmidrule(l{3pt}r{3pt}){2-2} \cmidrule(l{3pt}r{3pt}){3-3} \cmidrule(l{3pt}r{3pt}){4-4} \cmidrule(l{3pt}r{3pt}){5-5} \cmidrule(l{3pt}r{3pt}){6-6} \cmidrule(l{3pt}r{3pt}){7-7} \cmidrule(l{3pt}r{3pt}){8-8} \cmidrule(l{3pt}r{3pt}){9-9} \cmidrule(l{3pt}r{3pt}){10-10} \cmidrule(l{3pt}r{3pt}){11-11}
\addlinespace[0.3em]
\multicolumn{11}{l}{\textbf{500 largest first with 240 months history}}\\
\hspace{1em}MVE & 30.37 & 26.32 & 1.15 & 0.57 & 68.49 & 36.26 & -28.30 & 0.00 & 0.00 & 0.00\\
\hspace{1em}GMVP & 23.88 & 31.52 & 0.76 & 0.35 & 74.82 & 39.67 & -37.20 & 0.00 & 0.00 & 0.00\\
\hspace{1em}EW & 10.54 & 15.29 & 0.69 & 0.30 & 53.58 & 14.18 & -24.29 & 0.00 & 0.00 & 0.00\\
\hspace{1em}UA 25 & 18.89 & 15.94 & 1.19 & 0.60 & 49.09 & 22.79 & -18.11 & 0.25 & 0.39 & 0.57\\
\hspace{1em}UA 50 & 10.73 & 8.68 & 1.24 & 0.66 & 27.46 & 11.83 & -10.41 & 0.55 & 0.71 & 0.86\\
\hspace{1em}UA 75 & 4.73 & 4.28 & 1.11 & 0.59 & 10.97 & 6.58 & -4.12 & 0.78 & 0.90 & 0.97\\
\addlinespace[0.3em]
\multicolumn{11}{l}{\textbf{750 largest first with 240 months history}}\\
\hspace{1em}MVE & 31.52 & 29.44 & 1.07 & 0.51 & 71.14 & 37.16 & -38.59 & 0.00 & 0.00 & 0.00\\
\hspace{1em}GMVP & 23.58 & 32.72 & 0.72 & 0.32 & 78.23 & 40.08 & -50.49 & 0.00 & 0.00 & 0.00\\
\hspace{1em}EW & 10.83 & 16.01 & 0.68 & 0.29 & 53.80 & 18.44 & -25.72 & 0.00 & 0.00 & 0.00\\
\hspace{1em}UA 25 & 21.35 & 17.96 & 1.19 & 0.62 & 51.70 & 27.46 & -23.72 & 0.24 & 0.38 & 0.55\\
\hspace{1em}UA 50 & 13.17 & 11.16 & 1.18 & 0.78 & 26.81 & 28.96 & -11.68 & 0.53 & 0.70 & 0.87\\
\hspace{1em}UA 75 & 7.65 & 9.35 & 0.82 & 0.99 & 8.99 & 39.36 & -4.06 & 0.78 & 0.89 & 0.96\\
\addlinespace[0.3em]
\multicolumn{11}{l}{\textbf{1000 largest first with 240 months history}}\\
\hspace{1em}MVE & 33.89 & 29.23 & 1.16 & 0.56 & 68.42 & 37.16 & -37.57 & 0.00 & 0.00 & 0.00\\
\hspace{1em}GMVP & 24.83 & 32.33 & 0.77 & 0.34 & 84.91 & 36.92 & -45.10 & 0.00 & 0.00 & 0.00\\
\hspace{1em}EW & 11.12 & 16.32 & 0.68 & 0.29 & 55.78 & 19.83 & -25.72 & 0.00 & 0.00 & 0.00\\
\hspace{1em}UA 25 & 24.10 & 17.53 & 1.38 & 0.73 & 45.85 & 27.46 & -22.22 & 0.25 & 0.37 & 0.55\\
\hspace{1em}UA 50 & 15.59 & 10.66 & 1.46 & 1.08 & 18.72 & 28.96 & -10.19 & 0.52 & 0.69 & 0.87\\
\hspace{1em}UA 75 & 8.76 & 9.20 & 0.95 & 1.30 & 6.61 & 39.36 & -3.22 & 0.76 & 0.89 & 0.96\\
\bottomrule
\end{tabular}}
\end{table}

\begin{figure}[H]
\caption{Cumulative Returns for the MVE and UA Portfolio of the 500 Largest Firms} \label{fig:cum_returns_ind_worst}
\vspace*{-2mm}
\begin{singlespace}
\begin{footnotesize}
This figure shows the cumulative returns for the mean-variance efficient portfolio (MVE), the global minimum variance portfolio (GMVP), an equal-weighted portfolio, and the UA-MV portfolios for the 25\%, 50\%, and 75\% confidence levels. The forecast uncertainty uses the conservative approach, i.e., the maximum of the analytical and bootstrap standard error. NBER recessions are depicted in gray-shaded areas. All results are out-of-sample for the period from January 1983 - December 2021.
\end{footnotesize}
\end{singlespace}
\begin{center}
\includegraphics[width=1\linewidth]{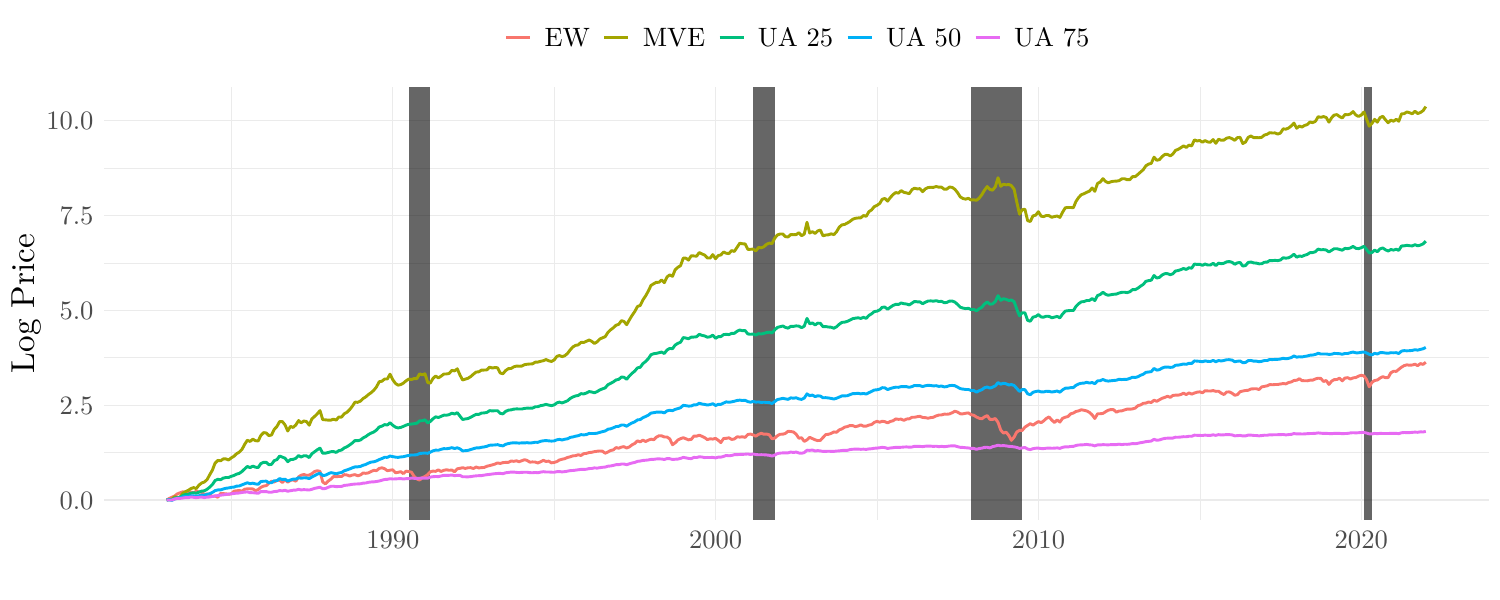}    
\end{center}
\end{figure}

\begin{figure}[H]
\caption{Portfolio Weights of the mean-variance efficient and uncertainty average portolios the 500 Largest Firms} \label{fig:ua_weights_ind}
\vspace*{-2mm}
\begin{singlespace} 
\begin{footnotesize}
This figure shows the weights for the mean-variance efficient portfolio (MVE) and the UA 25 (left), UA 50 (middle) and UA 75 (right) portfolio for two months (November 1987 and March 2012). The weight pairs $(\omega^{\text{MV}},\omega^{\text{UA}})$ are shown as blue dots (November 1987) and red triangles (March 2012). The weights are shown for the case of 500 largest firms with histories of 240 months.
\end{footnotesize}
\end{singlespace}
\begin{center}
\includegraphics[width=0.32\linewidth]{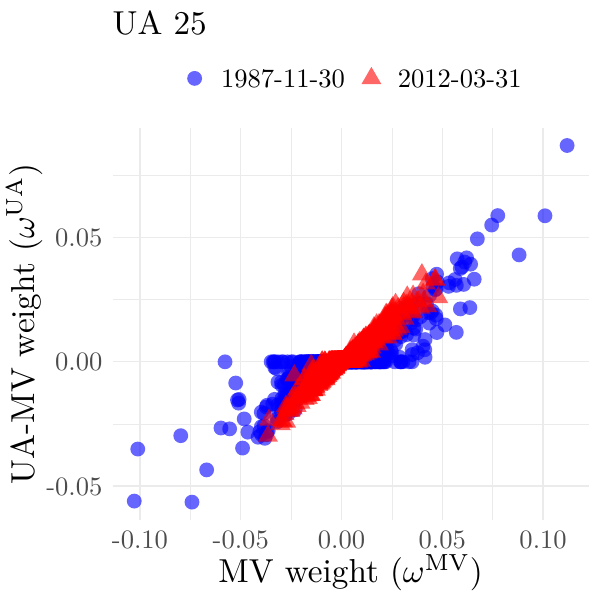}
\includegraphics[width=0.32\linewidth]{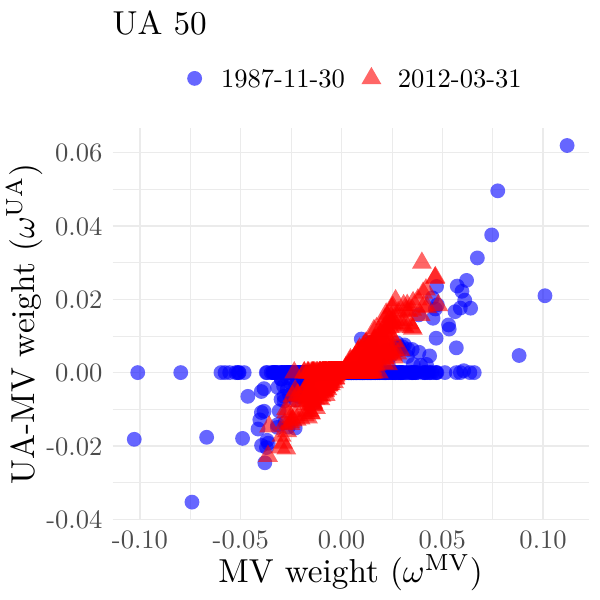}
\includegraphics[width=0.32\linewidth]{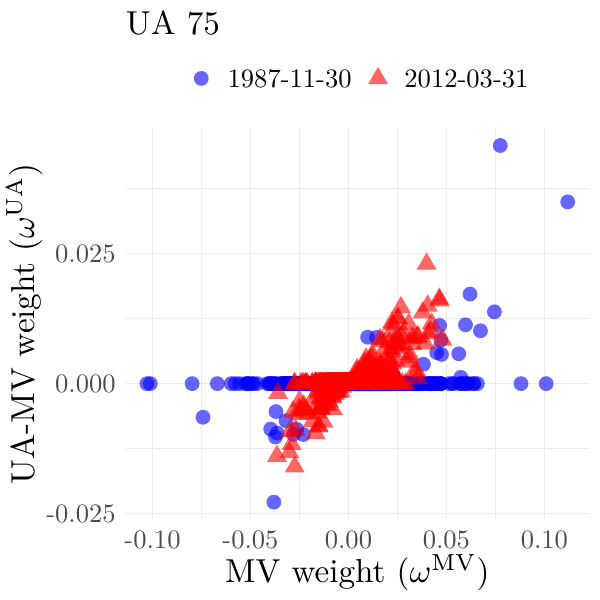}
\end{center}
\end{figure}


\subsection{Long Only Portfolios} \label{sec:ua_factor}
The classic characteristic-based anomalies are typically studied in the form of ``high minus low'' zero-investment portfolio. However, many large investors in the market do not engage in short-selling. Mutual funds are among the largest investors in the US equity markets and only very rarely enter into a short position. It is therefore of interest if a long-only investor could also benefit from expected returns obtained from neural networks and if the developed forecast confidence intervals are useful in their portfolio decisions. We will therefore consider a long-only mutual fund manager. The manager uses neural networks to obtain predictions for expected returns and then aims to buy the stocks with the highest predicted expected excess returns. 

We will study three versions of this portfolio strategy. First, the manager conducts multiple tests: 
$$
H_0^i:  \mathbb E  [y_{i,T+1}|\mathcal F_T] = 0,$$
versus the one-sided alternative 
$$
H_a^i: \mathbb E  [y_{i,T+1}|\mathcal F_T] > 0 
$$
for each $i=1,..., R.$ Therefore, the manager wants to focus on stocks whose expected return is significantly positive. In this setup, she uses the developed FCI to test hypotheses about the expected return predictions from neural networks. She carries out a multiple hypothesis testing adjustment using the false discovery rate as outlined in Section \ref{mult_hypothesis}.  The fund manager then longs the $K$ stocks with the highest predicted expected returns stocks that are also significantly positive and forms an equally weighted portfolio. We will refer to this strategy as ``FCI-FDR''. 

For comparison, a benchmark method is to conduct multiple tests using  a $t$-test from average realized returns and their standard error, i.e.
$$
t_i^N = \frac{\bar y_{i}}{\widehat \SE(\bar y_{i})},
$$
where $\bar y_i$ is the average realized return and $\widehat \SE$ is the conventional standard error. We will refer to this strategy as ``Naive-FDR''. Finally, consider the simplest implementation, where the fund manager simply goes long the $K$ stocks with the highest predicted expected returns using neural networks and forms an equal weighted portfolio. We will refer to this strategy as ``Highest-$K$''.  In our application, we will vary the size of the portfolio ($K$) between 100, 200 and 500.

It has been well documented that the features that are used in the neural network have good predictive power, and our sample necessarily covers a period during which they were particularly successful. It is therefore important to focus on relative comparisons.  The upper panel of Table \ref{tab:perf_best_ideas} shows the results when all firms are considered as potential investments.   We see that the ``Highest-$K$'' portfolio achieves an annualized excess return of about 66\% with a standard deviation of about 49\% annually.  When we compare the strategy
with ``Naive-FDR'',  which uses realized returns to assess statistical significance,  we see a non-trivial drop in average returns to 41.39\% per annum. Notably, this drop is not accompanied by an equal drop in standard deviation, so that the Sharpe ratio also drops in comparison. The FCI-FDR strategy that employs the FCI for neural networks achieves an even higher average return with 67\% per year with a smaller standard deviation, thus achieving the highest Sharpe ratio. 
 
As we move from the concentrated portfolio of 100 stocks to a portfolio of 500 stocks, the average returns are lower. However, when we compare the portfolios among each other, we see the same ranking, i.e. the FCI-FDR strategy outperforms the other two in terms of risk-adjusted returns, but the magnitudes become stronger. The return difference between the FCI-FDR and the Highest-$K$ is now almost 6\% per year,  with a slightly lower standard deviation. The Sharpe ratio is thereby strongly improved.

In Panel B, we repeat the analysis, but only for stocks whose market capitalization is about the 25th percentile at the time of portfolio formation.  This selection further mitigates liquidity concerns. Throughout panel B, we see lower average returns than in Panel A. However, the standard deviation is also lower, so that Sharpe ratios are still fairly large and often above one per year. This again confirms previous findings that the majority of cross-sectional predictability is concentrated in small firms. As we turn to the relative comparison of the three strategies, we find that the differences are now becoming more pronounced. The differences in Sharpe ratios now range between 0.2 and 0.3 per year. Interestingly, in this analysis the Naive-FDR portfolio sometimes even achieves a higher Sharpe ratio than the Highest-$K$ portfolio. Throughout the FCI-FDR portfolio achieves the highest average returns. Since FCI-FDR portfolio is not directly targeting standard deviation, it is not surprising that the standard deviations are sometimes higher than those of the other two portfolios. However, the FCI-FDR portfolio also generally displays lower drawdown, which suggests that the higher standard deviation is not due to a disproportional increase in downside risk.

In the following analysis, we compare the risk-adjusted returns of the strategies. In Table \ref{tab:alpha_best_ideas} we show the monthly alphas in percentage for the FCI-FDR, Highest-$K$ and the Naive-FDR strategies. We again split the analysis between using all firms and only firms above the 25th quantile of market equity. The results are consistent with and even stronger than those in Table \ref{tab:perf_best_ideas}. The difference in alphas between the FCI-FDR and the Highest-$K$ (and the Naive-FDR) is larger than the return spreads.  It is noteworthy that often the alphas get larger when we add the momentum factor. For example, when we move from the \cite{fama1993common} three-factor model (FF3) to the FF4 (FF3 + momentum) model, the alpha increases. This happens because many of these portfolios have negative exposure to the momentum factor, which in turn has a positive risk premium. For the case of all firms (Panel A), we see that the difference in alpha between the FCI-FDR portfolio and the Highest-$K$ is of the same magnitude as the difference in returns. 

For the case of the larger firms in Panel B of Table \ref{tab:alpha_best_ideas}, we see that the level of alphas is lower than in the upper panel. However, many alphas are larger than 1\% per month and still strongly significant -- all alphas in Table \ref{tab:alpha_best_ideas} pass the $t$-stat 3 hurdle proposed by \cite{harvey2016and}. In all cases, the FCI-FDR strategy achieves higher alphas than the highest-$K$ strategy. The difference is more pronounced for larger portfolios. The results therefore underscore the importance of accounting for forecast uncertainties in constructing portfolios.

\begin{table}[H]
\caption{Performance Statistics for Best $K$ Portfolio (conservative SE)} \label{tab:perf_best_ideas}
\begin{singlespace}
\begin{footnotesize}
This table shows the annualized mean, annualized standard deviation, annualized Sharpe ratio, Sortino ratio, the maximum drawdown, the best and worst months for the following portfolio: ``FCI-FDR'' is the portfolio that  longs the $K$ stocks with the highest predicted expected excess returns that are significantly greater than zero. The ``Highest-$K$'' portfolio is to long the $K$ largest predicted returns. The ``Naive-FDR'' is the portfolio of the $K$ largest predicted returns for which the average realized returns are significantly greater than zero. In all cases we use the False Discovery Rate (FDR) as a multiple testing adjustment, as outlined in Section \ref{mult_hypothesis}. We vary the size of the portfolio between 100, 200 and 500. The upper panel of the table shows the results when all stocks are potential candidates for the portfolios. The lower panel considers stocks   whose market capitalization is larger than the 25\% quantile of firms' market capitalization at the time of portfolio formation. The standard error used in FCI-FDR is the maximum of the closed-form ML approximation and the bootstrap standard error using an estimation window of 240 months. The portfolios are equal weighted and all results are out-of-sample for the period from January 1983 - December 2021.
\end{footnotesize}
\end{singlespace}
\resizebox{\ifdim\width>\linewidth\linewidth\else\width\fi}{!}{
\begin{tabular}[t]{lrrrrrrrr}
\toprule
\multicolumn{1}{c}{} & \multicolumn{1}{c}{\# Stocks ($K$)} & \multicolumn{1}{c}{Mean (\%)} & \multicolumn{1}{c}{\makecell[c]{Standard\\ Deviation (\%)}} & \multicolumn{1}{c}{\makecell[c]{Sharpe\\ Ratio}} & \multicolumn{1}{c}{\makecell[c]{Sortino\\ Ratio}} & \multicolumn{1}{c}{\makecell[c]{Maximum\\ Drawdown}} & \multicolumn{1}{c}{\makecell[c]{Best\\ Month}} & \multicolumn{1}{c}{\makecell[c]{Worst\\ Month}} \\
\cmidrule(l{3pt}r{3pt}){2-2} \cmidrule(l{3pt}r{3pt}){3-3} \cmidrule(l{3pt}r{3pt}){4-4} \cmidrule(l{3pt}r{3pt}){5-5} \cmidrule(l{3pt}r{3pt}){6-6} \cmidrule(l{3pt}r{3pt}){7-7} \cmidrule(l{3pt}r{3pt}){8-8} \cmidrule(l{3pt}r{3pt}){9-9}
\addlinespace[0.3em]
\multicolumn{9}{l}{\textbf{Panel A: All Firms}}\\
\hspace{1em}FCI-FDR & 100 & 67.33 & 44.48 & 1.51 & 1.30 & 52.16 & 100.70 & -26.44\\
\hspace{1em}Highest-$K$ & 100 & 66.98 & 48.80 & 1.37 & 1.13 & 55.54 & 100.66 & -28.28\\
\hspace{1em}Naive-FDR & 100 & 41.39 & 36.13 & 1.15 & 0.77 & 60.24 & 89.25 & -30.44\\
\hspace{1em}FCI-FDR & 200 & 55.19 & 37.98 & 1.45 & 1.23 & 49.71 & 96.61 & -26.31\\
\hspace{1em}Highest-$K$ & 200 & 52.67 & 42.25 & 1.25 & 0.95 & 58.04 & 96.08 & -29.48\\
\hspace{1em}Naive-FDR & 200 & 33.28 & 29.36 & 1.13 & 0.68 & 57.26 & 74.03 & -30.04\\
\hspace{1em}FCI-FDR & 500 & 44.05 & 31.46 & 1.40 & 1.10 & 42.91 & 81.51 & -26.16\\
\hspace{1em}Highest-$K$ & 500 & 38.52 & 33.01 & 1.17 & 0.78 & 57.00 & 85.35 & -28.42\\
\hspace{1em}Naive-FDR & 500 & 25.79 & 22.67 & 1.14 & 0.60 & 52.18 & 46.27 & -27.76\\
\addlinespace[0.3em]
\multicolumn{9}{l}{\textbf{Panel B: Firm greater than 25\% market equity}}\\
\hspace{1em}FCI-FDR & 100 & 29.16 & 26.69 & 1.09 & 0.64 & 50.07 & 65.14 & -29.18\\
\hspace{1em}Highest-$K$ & 100 & 26.91 & 33.18 & 0.81 & 0.45 & 62.11 & 75.89 & -35.82\\
\hspace{1em}Naive-FDR & 100 & 23.89 & 25.94 & 0.92 & 0.46 & 57.19 & 51.80 & -31.60\\
\hspace{1em}FCI-FDR & 200 & 27.16 & 24.10 & 1.13 & 0.62 & 48.51 & 41.15 & -28.83\\
\hspace{1em}Highest-$K$ & 200 & 23.71 & 27.73 & 0.85 & 0.45 & 60.37 & 58.67 & -34.36\\
\hspace{1em}Naive-FDR & 200 & 21.65 & 22.54 & 0.96 & 0.46 & 54.93 & 36.25 & -29.61\\
\hspace{1em}FCI-FDR & 500 & 25.20 & 23.00 & 1.10 & 0.58 & 47.26 & 41.15 & -28.83\\
\hspace{1em}Highest-$K$ & 500 & 20.79 & 22.34 & 0.93 & 0.45 & 56.22 & 35.76 & -29.08\\
\hspace{1em}Naive-FDR & 500 & 19.28 & 19.79 & 0.97 & 0.45 & 51.01 & 21.46 & -27.50\\
\bottomrule
\end{tabular}}
\end{table}

\begin{table}[H]
\caption{Alpha for Best $K$ Portfolio (conservative SE)} \label{tab:alpha_best_ideas}
\begin{singlespace}
\begin{footnotesize}
This table shows monthly alphas in percentage of regression of the excess returns of the three strategies onto risk factors. CAPM represents the CAPM, FF3 refers to the \cite{fama1993common} three-factor model, FF4 denotes the FF3 model, augmented with the momentum factor (\cite{carhart1997persistence}), FF5 is the \cite{fama2016dissecting}, FF6 adds the momentum factor to the five-factor model, FF6+ adds the short-term reversal factor. All data are obtained from Kenneth French's data library. ``FCI-FDR'' is the portfolio that longs the $K$ stocks with the highest predicted expected returns that are significantly greater than zero. The ``Highest-$K$'' longs  the $K$ largest predicted returns. The ``Naive-FDR'' is the portfolio of the $K$ largest predicted returns for which the average realized returns are significantly greater than zero. In all cases we use the False Discovery Rate (FDR) as a multiple testing adjustment. The procedure is outlined in Section \ref{mult_hypothesis}. We vary the size of the portfolio between 100, 200 and 500. The upper panel of the table shows the results when all stocks are potential candidates for the portfolios. The lower panel only considers stocks whose market capitalization is larger than the 25\% quantile of firms' market capitalization at the time of portfolio formation. The standard error  used in FCI-FDR  is the maximum of the closed-form ML approximation and the bootstrap standard error using an estimation window of 240 months. The portfolios are equally weighted and all results are out-of-sample for the period from January 1983 - December 2021.
\end{footnotesize}
\vspace{-0.5cm}
\end{singlespace}
\begin{center}
\begin{footnotesize}
\begin{tabular}[t]{lrllllll}
\toprule
\multicolumn{1}{c}{} & \multicolumn{1}{c}{\# Stocks ($K$)} & \multicolumn{1}{c}{CAPM} & \multicolumn{1}{c}{FF3} & \multicolumn{1}{c}{FF4} & \multicolumn{1}{c}{FF5} & \multicolumn{1}{c}{FF6} & \multicolumn{1}{c}{FF6+} \\
\cmidrule(l{3pt}r{3pt}){2-2} \cmidrule(l{3pt}r{3pt}){3-3} \cmidrule(l{3pt}r{3pt}){4-4} \cmidrule(l{3pt}r{3pt}){5-5} \cmidrule(l{3pt}r{3pt}){6-6} \cmidrule(l{3pt}r{3pt}){7-7} \cmidrule(l{3pt}r{3pt}){8-8}
\addlinespace[0.3em]
\multicolumn{8}{l}{\textbf{All Firms}}\\
\hspace{1em}FCI-FDR & 100 & 4.53*** & 4.61*** & 5.24*** & 5.11*** & 5.58*** & 5.38***\\
\hspace{1em}Highest-$K$ & 100 & 4.37*** & 4.48*** & 5.20*** & 5.04*** & 5.57*** & 5.37***\\
\hspace{1em}Naive-FDR & 100 & 2.40*** & 2.44*** & 2.93*** & 2.76*** & 3.13*** & 2.94***\\
\hspace{1em}FCI-FDR & 200 & 3.62*** & 3.65*** & 4.23*** & 4.00*** & 4.43*** & 4.27***\\
\hspace{1em}Highest-$K$ & 200 & 3.25*** & 3.33*** & 3.98*** & 3.76*** & 4.24*** & 4.05***\\
\hspace{1em}Naive-FDR & 200 & 1.81*** & 1.83*** & 2.21*** & 2.07*** & 2.35*** & 2.20***\\
\hspace{1em}FCI-FDR & 500 & 2.76*** & 2.75*** & 3.24*** & 2.95*** & 3.32*** & 3.19***\\
\hspace{1em}Highest-$K$ & 500 & 2.21*** & 2.24*** & 2.73*** & 2.56*** & 2.93*** & 2.77***\\
\hspace{1em}Naive-FDR & 500 & 1.28*** & 1.26*** & 1.45*** & 1.34*** & 1.48*** & 1.38***\\
\addlinespace[0.3em]
\multicolumn{8}{l}{\textbf{Firm greater than 25\% market equity}}\\
\hspace{1em}FCI-FDR & 100 & 1.55*** & 1.48*** & 1.81*** & 1.50*** & 1.75*** & 1.62***\\
\hspace{1em}Highest-$K$ & 100 & 1.09*** & 1.09*** & 1.62*** & 1.37*** & 1.76*** & 1.59***\\
\hspace{1em}Naive-FDR & 100 & 0.98*** & 0.97*** & 1.22*** & 1.08*** & 1.26*** & 1.14***\\
\hspace{1em}FCI-FDR & 200 & 1.41*** & 1.33*** & 1.58*** & 1.28*** & 1.47*** & 1.36***\\
\hspace{1em}Highest-$K$ & 200 & 0.93*** & 0.91*** & 1.27*** & 1.05*** & 1.33*** & 1.19***\\
\hspace{1em}Naive-FDR & 200 & 0.87*** & 0.83*** & 0.98*** & 0.86*** & 0.97*** & 0.88***\\
\hspace{1em}FCI-FDR & 500 & 1.26*** & 1.17*** & 1.39*** & 1.10*** & 1.27*** & 1.17***\\
\hspace{1em}Highest-$K$ & 500 & 0.81*** & 0.76*** & 0.95*** & 0.78*** & 0.93*** & 0.83***\\
\hspace{1em}Naive-FDR & 500 & 0.75*** & 0.70*** & 0.76*** & 0.65*** & 0.70*** & 0.63***\\
\bottomrule
\end{tabular}
\end{footnotesize}
\end{center}
\end{table}

\section{Conclusion} \label{sec:conclusion}

The asset pricing literature has long recognized that machine learning uncertainty should be explicitly considered in making portfolio decisions. The framework of uncertainty aversion provides a disciplined way to incorporate such uncertainty about input parameters. This however requires a (asymptotic) distribution theory of the parameters. Although machine learning methods have shown great progress in forecasting expected returns, considering their estimation uncertainty in portfolio choice has remained an open problem due to a lack of distribution theory. 
 
We introduce new methods to quantify prediction uncertainty in machine learning forecasts of asset returns.  We show that neural network forecasts of expected returns share the same asymptotic distribution as classic nonparametric methods such as Fourier series, enabling a closed-form expression for their standard errors.  We also propose a $k$-step bootstrap that simulates the asymptotic distribution without repeatedly retraining networks, thus dramatically reducing computational costs.  We then incorporate these forecast confidence intervals into an uncertainty-averse investment framework. This leads to ``non-participation” in assets when uncertainty exceeds a threshold, providing an economic rationale for shrinkage for portfolio selection.  Empirically, our methods improve out-of-sample performance while mitigating multiple testing problems.

We have illustrated two areas of application; the distribution theory is general and can be used in other applications. An interesting analysis might be to consider ``real-time'' machine learning in cross-sectional asset pricing as in \cite{li2022real} to examine pre- vs.~ post publication effects. We leave this application for future research.

\newpage 

\bibliographystyle{chicago}
\bibliography{liao}
\newpage
\appendix
\newpage


\section{Additional Results} \label{sec:addtional_results}
\subsection{Figures and Tables} \label{sec:add_tabfig}

\begingroup\fontsize{8}{10}\selectfont
\begin{longtable}[t]{llll}
\caption{Overview of the Characteristics} \label{tab:chars_overview} \\
\multicolumn{4}{l}{\parbox{16cm}{This table gives an overview of the characteristic used in the empirical analysis. They are obtained from \cite{jensen2022there}. We refer to their paper and the companion website for the precise construction and reference to the original paper that proposed these predictors.}}\\
\multicolumn{4}{l}{}\\
\multicolumn{1}{c}{Acronym} & \multicolumn{1}{c}{Description} & \multicolumn{1}{c}{Category} & \multicolumn{1}{c}{In-Sample} \\
\toprule
\endfirsthead
\caption[]{Overview of the Characteristics \textit{(continued)}}\\
\multicolumn{1}{c}{Acronym} & \multicolumn{1}{c}{Description} & \multicolumn{1}{c}{Category} & \multicolumn{1}{c}{In-Sample} \\
\toprule
\endhead

age & Firm age & Intangibles & 1965 - 2001\\
aliq\_at & Liquidity of book assets & Intangibles & 1984 - 2006\\
aliq\_mat & Liquidity of market assets & Intangibles & 1984 - 2006\\
cash\_at & Cash-to-assets & Intangibles & 1972 - 2009\\
dgp\_dsale & Change gross margin minus change sales & Intangibles & 1974 - 1988\\
\addlinespace
dsale\_dinv & Change sales minus change Inventory & Intangibles & 1974 - 1988\\
dsale\_drec & Change sales minus change receivables & Intangibles & 1974 - 1988\\
kz\_index & Kaplan-Zingales index & Intangibles & 1968-1995\\
opex\_at & Operating leverage & Intangibles & 1963 - 2008\\
sale\_emp\_gr1 & Labor force efficiency & Intangibles & 1974 - 1988\\
\addlinespace
seas\_11\_15an & Years 11-15 lagged returns, annual & Intangibles & 1965 - 2002\\
seas\_11\_15na & Years 11-15 lagged returns, nonannual & Intangibles & 1965 - 2002\\
seas\_16\_20an & Years 16-20 lagged returns, annual & Intangibles & 1965 - 2002\\
seas\_16\_20na & Years 16-20 lagged returns, nonannual & Intangibles & 1965 - 2002\\
seas\_1\_1an & Year 1-lagged return, annual & Intangibles & 1965 - 2002\\
\addlinespace
seas\_1\_1na & Year 1-lagged return, nonannual & Intangibles & 1965 - 2002\\
seas\_2\_5an & Years 2-5 lagged returns, annual & Intangibles & 1965 - 2002\\
seas\_2\_5na & Years 2-5 lagged returns, nonannual & Intangibles & 1965 - 2002\\
seas\_6\_10an & Years 6-10 lagged returns, annual & Intangibles & 1965 - 2002\\
seas\_6\_10na & Years 6-10 lagged returns, nonannual & Intangibles & 1965 - 2002\\
\addlinespace
tangibility & Asset tangibility & Intangibles & 1973-2001\\
at\_gr1 & Asset Growth & Investment & 1968 - 2003\\
be\_gr1a & Change in common equity & Investment & 1962 - 2001\\
capex\_abn & Abnormal corporate investment & Investment & 1973 - 1996\\
capx\_gr1 & CAPEX growth (1 year) & Investment & 1971-1992\\
\addlinespace
capx\_gr2 & CAPEX growth (2 years) & Investment & 1976 - 1998\\
capx\_gr3 & CAPEX growth (3 years) & Investment & 1976 - 1998\\
chcsho\_12m & Net stock issues & Investment & 1970 - 2003\\
coa\_gr1a & Change in current operating assets & Investment & 1962 - 2001\\
col\_gr1a & Change in current operating liabilities & Investment & 1962 - 2001\\
\addlinespace
cowc\_gr1a & Change in current operating working capital & Investment & 1962 - 2001\\
dbnetis\_at & Net debt issuance & Investment & 1971 - 2000\\
eqnpo\_12m & Equity net payout & Investment & 1968 - 2003\\
fnl\_gr1a & Change in financial liabilities & Investment & 1962 - 2001\\
inv\_gr1 & Inventory growth & Investment & 1965 - 2009\\
\addlinespace
inv\_gr1a & Inventory change & Investment & 1970 - 1997\\
ncoa\_gr1a & Change in noncurrent operating assets & Investment & 1962 - 2001\\
ncol\_gr1a & Change in noncurrent operating liabilities & Investment & 1962 - 2001\\
nfna\_gr1a & Change in net financial assets & Investment & 1962 - 2001\\
nncoa\_gr1a & Change in net noncurrent operating assets & Investment & 1962 - 2001\\
\addlinespace
noa\_at & Net operating assets & Investment & 1964-2002\\
noa\_gr1a & Change in net operating assets & Investment & 1964-2002\\
oaccruals\_at & Operating accruals & Investment & 1962 - 1991\\
oaccruals\_ni & Percent operating accruals & Investment & 1989 - 2008\\
ppeinv\_gr1a & Change PPE and Inventory & Investment & 1970 - 2005\\
\addlinespace
taccruals\_at & Total accruals & Investment & 1962 - 2001\\
taccruals\_ni & Percent total accruals & Investment & 1989 - 2008\\
prc\_highprc\_252d & Current price to high price over last year & Momentum & 1963 - 2001\\
resff3\_12\_1 & Residual momentum t-12 to t-1 & Momentum & 1930 - 2009\\
resff3\_6\_1 & Residual momentum t-6 to t-1 & Momentum & 1930 - 2009\\
\addlinespace
ret\_12\_1 & Price momentum t-12 to t-1 & Momentum & 1965 - 1989\\
ret\_6\_1 & Price momentum t-6 to t-1 & Momentum & 1965 - 1989\\
tax\_gr1a & Tax expense surprise & Momentum & 1977 - 2006\\
corr\_1260d & Market correlation & New & 1925 - 2015\\
mispricing\_mgmt & Mispricing factor: Management & New & 1967 - 2013\\
\addlinespace
mispricing\_perf & Mispricing factor: Performance & New & 1967 - 2013\\
ni\_be & Return on equity & New & 1979 - 1993\\
ocf\_at & Operating cash flow to assets & New & 1990 - 2015\\
ocf\_at\_chg1 & Change in operating cash flow to assets & New & 1990 - 2015\\
qmj\_prof & Quality minus Junk: Profitability & New & 1957 - 2016\\
\addlinespace
qmj\_safety & Quality minus Junk: Safety & New & 1957 - 2016\\
ret\_12\_7 & Price momentum t-12 to t-7 & New & 1925 - 2010\\
ret\_3\_1 & Price momentum t-3 to t-1 & New & 1965 - 1989\\
ret\_9\_1 & Price momentum t-9 to t-1 & New & 1965 - 1989\\
rmax5\_21d & Highest 5 days of return & New & 1993 - 2012\\
\addlinespace
rmax5\_rvol\_21d & Highest 5 days of return scaled by volatility & New & 1925-2015\\
at\_be & Book leverage & Profitability & 1963 - 1990\\
at\_turnover & Capital turnover & Profitability & 1979 - 1993\\
cop\_at & Cash-based operating profits-to-book assets & Profitability & 1967 - 2016\\
cop\_atl1 & Cash-based operating profits-to-lagged book assets & Profitability & 1963 - 2014\\
\addlinespace
ebit\_bev & Return on net operating assets & Profitability & 1984 - 2002\\
ebit\_sale & Profit margin & Profitability & 1984 - 2002\\
f\_score & Pitroski F-score & Profitability & 1976 - 1996\\
gp\_at & Gross profits-to-assets & Profitability & 1963 - 2010\\
gp\_atl1 & Gross profits-to-lagged assets & Profitability & 1967 - 2016\\
\addlinespace
o\_score & Ohlson O-score & Profitability & 1981 - 1995\\
op\_at & Operating profits-to-book assets & Profitability & 1963 - 2013\\
op\_atl1 & Operating profits-to-lagged book assets & Profitability & 1963 - 2014\\
ope\_be & Operating profits-to-book equity & Profitability & 1963 - 2013\\
ope\_bel1 & Operating profits-to-lagged book equity & Profitability & 1967 - 2016\\
\addlinespace
pi\_nix & Taxable income-to-book income & Profitability & 1973-2000\\
sale\_bev & Asset turnover & Profitability & 1984 - 2002\\
ami\_126d & Amihud Measure & Trading Frictions & 1964 - 1997\\
beta\_60m & Market Beta & Trading Frictions & 1935 - 1968\\
beta\_dimson\_21d & Dimson beta & Trading Frictions & 1955 - 1974\\
\addlinespace
betabab\_1260d & Frazzini-Pedersen market beta & Trading Frictions & 1926 - 2012\\
betadown\_252d & Downside beta & Trading Frictions & 1963 - 2001\\
bidaskhl\_21d & The high-low bid-ask spread & Trading Frictions & 1927 - 2006\\
coskew\_21d & Coskewness & Trading Frictions & 1963 - 1993\\
dolvol\_126d & Dollar trading volume & Trading Frictions & 1966 - 1995\\
\addlinespace
dolvol\_var\_126d & Coefficient of variation for dollar trading volume & Trading Frictions & 1966 - 1995\\
iskew\_capm\_21d & Idiosyncratic skewness from the CAPM & Trading Frictions & 1967 - 2016\\
iskew\_ff3\_21d & Idiosyncratic skewness from the Fama-French 3-factor model & Trading Frictions & 1925-2012\\
ivol\_capm\_21d & Idiosyncratic volatility from the CAPM (21 days) & Trading Frictions & 1967 - 2016\\
ivol\_capm\_252d & Idiosyncratic volatility from the CAPM (252 days) & Trading Frictions & 1976 - 1997\\
\addlinespace
ivol\_ff3\_21d & Idiosyncratic volatility from the Fama-French 3-factor model & Trading Frictions & 1963 - 2000\\
market\_equity & Market Equity & Trading Frictions & 1926 - 1975\\
prc & Price per share & Trading Frictions & 1940 - 1978\\
ret\_1\_0 & Short-term reversal & Trading Frictions & 1929 - 1982\\
rmax1\_21d & Maximum daily return & Trading Frictions & 1962 -2005\\
\addlinespace
rskew\_21d & Total skewness & Trading Frictions & 1925-2012\\
rvol\_21d & Return volatility & Trading Frictions & 1963 - 2000\\
turnover\_126d & Share turnover & Trading Frictions & 1963 - 1991\\
turnover\_var\_126d & Coefficient of variation for share turnover & Trading Frictions & 1966 - 1995\\
zero\_trades\_126d & Number of zero trades with turnover as tiebreaker (6 months) & Trading Frictions & 1963 - 2003\\
\addlinespace
zero\_trades\_21d & Number of zero trades with turnover as tiebreaker (1 month) & Trading Frictions & 1963 - 2003\\
zero\_trades\_252d & Number of zero trades with turnover as tiebreaker (12 months) & Trading Frictions & 1963 - 2003\\
at\_me & Assets-to-market & Value & 1963 - 1990\\
be\_me & Book-to-market equity & Value & 1973 - 1984\\
bev\_mev & Book-to-market enterprise value & Value & 1962 - 2001\\
\addlinespace
debt\_me & Debt-to-market & Value & 1948 - 1979\\
div12m\_me & Dividend yield & Value & 1940 -1980\\
ebitda\_mev & Ebitda-to-market enterprise value & Value & 1963 - 2009\\
eq\_dur & Equity duration & Value & 1962 - 1998\\
fcf\_me & Free cash flow-to-price & Value & 1963 - 1990\\
\addlinespace
ival\_me & Intrinsic value-to-market & Value & 1975 - 1993\\
netdebt\_me & Net debt-to-price & Value & 1962 - 2001\\
ni\_me & Earnings-to-price & Value & 1963-1979\\
ocf\_me & Operating cash flow-to-market & Value & 1973 - 1997\\
ret\_60\_12 & Long-term reversal & Value & 1926 - 1982\\
\addlinespace
sale\_gr1 & Sales Growth (1 year) & Value & 1968 - 1989\\
sale\_gr3 & Sales Growth (3 years) & Value & 1968 - 1989\\
sale\_me & Sales-to-market & Value & 1979-1991\\*
\end{longtable}
\endgroup{}
\begin{table}[H]
\caption{Performance Statistics for Mean-Variance and Max-Min Portfolio (analytical SE)} \label{tab:ua_ind_analytical}
\begin{singlespace}
\begin{footnotesize}
This table shows the annualized mean, annualized standard deviation, annualized Sharpe ratio, Sortino ratio, the maximum drawdown, the best and worst months for the mean-variance efficient portfolio (MVE), the global minimum variance portfolio (GMVP), an equal-weighted portfolio and the UA-MV portfolios for the 25\%, 50\% and 75\% confidence level. The standard error is obtained from the closed-form ML approximation using an estimation window of 240 months. All results are out-of-sample for the period from January 1983 - December 2021.
\end{footnotesize}
\end{singlespace}
\resizebox{\ifdim\width>\linewidth\linewidth\else\width\fi}{!}{
\begin{tabular}[t]{lrrrrrrr}
\toprule
\multicolumn{1}{c}{} & \multicolumn{1}{c}{Mean (\%)} & \multicolumn{1}{c}{\makecell[c]{Standard\\ Deviation (\%)}} & \multicolumn{1}{c}{\makecell[c]{Sharpe\\ Ratio}} & \multicolumn{1}{c}{\makecell[c]{Sortino\\ Ratio}} & \multicolumn{1}{c}{\makecell[c]{Maximum\\ Drawdown}} & \multicolumn{1}{c}{\makecell[c]{Best\\ Month}} & \multicolumn{1}{c}{\makecell[c]{Worst\\ Month}} \\
\cmidrule(l{3pt}r{3pt}){2-2} \cmidrule(l{3pt}r{3pt}){3-3} \cmidrule(l{3pt}r{3pt}){4-4} \cmidrule(l{3pt}r{3pt}){5-5} \cmidrule(l{3pt}r{3pt}){6-6} \cmidrule(l{3pt}r{3pt}){7-7} \cmidrule(l{3pt}r{3pt}){8-8}
\addlinespace[0.3em]
\multicolumn{8}{l}{\textbf{500 largest first with 240 months history}}\\
\hspace{1em}MVE & 30.37 & 26.32 & 1.15 & 0.57 & 68.49 & 36.26 & -28.30\\
\hspace{1em}GMVP & 23.88 & 31.52 & 0.76 & 0.35 & 74.82 & 39.67 & -37.20\\
\hspace{1em}EW & 10.54 & 15.29 & 0.69 & 0.30 & 53.58 & 14.18 & -24.29\\
\hspace{1em}UA 25 & 19.12 & 16.38 & 1.17 & 0.58 & 49.87 & 23.38 & -18.11\\
\hspace{1em}UA 50 & 11.21 & 9.21 & 1.22 & 0.65 & 28.02 & 13.02 & -10.41\\
\hspace{1em}UA 75 & 5.07 & 4.78 & 1.06 & 0.56 & 11.76 & 8.03 & -4.58\\
\addlinespace[0.3em]
\multicolumn{8}{l}{\textbf{750 largest first with 240 months history}}\\
\hspace{1em}MVE & 31.52 & 29.44 & 1.07 & 0.51 & 71.14 & 37.16 & -38.59\\
\hspace{1em}GMVP & 23.58 & 32.72 & 0.72 & 0.32 & 78.23 & 40.08 & -50.49\\
\hspace{1em}EW & 10.83 & 16.01 & 0.68 & 0.29 & 53.80 & 18.44 & -25.72\\
\hspace{1em}UA 25 & 20.84 & 18.70 & 1.11 & 0.57 & 51.51 & 31.54 & -23.86\\
\hspace{1em}UA 50 & 12.36 & 11.56 & 1.07 & 0.65 & 29.57 & 34.20 & -11.20\\
\hspace{1em}UA 75 & 6.73 & 8.39 & 0.80 & 0.75 & 14.13 & 39.80 & -4.17\\
\addlinespace[0.3em]
\multicolumn{8}{l}{\textbf{1000 largest first with 240 months history}}\\
\hspace{1em}MVE & 33.89 & 29.23 & 1.16 & 0.56 & 68.42 & 37.16 & -37.57\\
\hspace{1em}GMVP & 24.83 & 32.33 & 0.77 & 0.34 & 84.91 & 36.92 & -45.10\\
\hspace{1em}EW & 11.12 & 16.32 & 0.68 & 0.29 & 55.78 & 19.83 & -25.72\\
\hspace{1em}UA 25 & 23.48 & 18.26 & 1.29 & 0.67 & 45.24 & 31.54 & -23.86\\
\hspace{1em}UA 50 & 14.64 & 11.09 & 1.32 & 0.87 & 29.57 & 34.20 & -11.20\\
\hspace{1em}UA 75 & 7.70 & 8.24 & 0.93 & 0.90 & 14.13 & 39.80 & -4.40\\
\bottomrule
\end{tabular}}
\end{table}
\begin{table}[H]
\caption{Performance Statistics for Mean-Variance and Max-Min Portfolio (bootstrap SE)} \label{tab:ua_ind_bootstrap}
\begin{singlespace}
\begin{footnotesize}
This table shows the annualized mean, annualized standard deviation, annualized Sharpe ratio, Sortino ratio, the maximum drawdown, the best and worst months for the mean-variance efficient portfolio (MVE), the global minimum variance portfolio (GMVP), an equal-weighted portfolio and the UA-MV portfolios for the 25\%, 50\% and 75\% confidence level. The standard error is obtained from the $k$-step bootstrap. All results are out-of-sample for the period from January 1983 - December 2021.
\end{footnotesize}
\end{singlespace}
\resizebox{\ifdim\width>\linewidth\linewidth\else\width\fi}{!}{
\begin{tabular}[t]{lrrrrrrr}
\toprule
\multicolumn{1}{c}{} & \multicolumn{1}{c}{Mean (\%)} & \multicolumn{1}{c}{\makecell[c]{Standard\\ Deviation (\%)}} & \multicolumn{1}{c}{\makecell[c]{Sharpe\\ Ratio}} & \multicolumn{1}{c}{\makecell[c]{Sortino\\ Ratio}} & \multicolumn{1}{c}{\makecell[c]{Maximum\\ Drawdown}} & \multicolumn{1}{c}{\makecell[c]{Best\\ Month}} & \multicolumn{1}{c}{\makecell[c]{Worst\\ Month}} \\
\cmidrule(l{3pt}r{3pt}){2-2} \cmidrule(l{3pt}r{3pt}){3-3} \cmidrule(l{3pt}r{3pt}){4-4} \cmidrule(l{3pt}r{3pt}){5-5} \cmidrule(l{3pt}r{3pt}){6-6} \cmidrule(l{3pt}r{3pt}){7-7} \cmidrule(l{3pt}r{3pt}){8-8}
\addlinespace[0.3em]
\multicolumn{8}{l}{\textbf{500 largest first with 240 months history}}\\
\hspace{1em}MVE & 30.37 & 26.32 & 1.15 & 0.57 & 68.49 & 36.26 & -28.30\\
\hspace{1em}GMVP & 23.88 & 31.52 & 0.76 & 0.35 & 74.82 & 39.67 & -37.20\\
\hspace{1em}EW & 10.54 & 15.29 & 0.69 & 0.30 & 53.58 & 14.18 & -24.29\\
\hspace{1em}UA 25 & 21.73 & 18.84 & 1.15 & 0.56 & 56.70 & 25.60 & -22.33\\
\hspace{1em}UA 50 & 14.36 & 13.11 & 1.10 & 0.53 & 43.21 & 15.36 & -16.72\\
\hspace{1em}UA 75 & 8.36 & 8.21 & 1.02 & 0.51 & 27.99 & 9.44 & -10.40\\
\addlinespace[0.3em]
\multicolumn{8}{l}{\textbf{750 largest first with 240 months history}}\\
\hspace{1em}MVE & 31.52 & 29.44 & 1.07 & 0.51 & 71.14 & 37.16 & -38.59\\
\hspace{1em}GMVP & 23.58 & 32.72 & 0.72 & 0.32 & 78.23 & 40.08 & -50.49\\
\hspace{1em}EW & 10.83 & 16.01 & 0.68 & 0.29 & 53.80 & 18.44 & -25.72\\
\hspace{1em}UA 25 & 23.78 & 20.85 & 1.14 & 0.56 & 61.43 & 27.37 & -31.66\\
\hspace{1em}UA 50 & 16.73 & 15.20 & 1.10 & 0.57 & 49.40 & 28.34 & -22.69\\
\hspace{1em}UA 75 & 10.96 & 11.80 & 0.93 & 0.62 & 31.12 & 40.69 & -14.47\\
\addlinespace[0.3em]
\multicolumn{8}{l}{\textbf{1000 largest first with 240 months history}}\\
\hspace{1em}MVE & 33.89 & 29.23 & 1.16 & 0.56 & 68.42 & 37.16 & -37.57\\
\hspace{1em}GMVP & 24.83 & 32.33 & 0.77 & 0.34 & 84.91 & 36.92 & -45.10\\
\hspace{1em}EW & 11.12 & 16.32 & 0.68 & 0.29 & 55.78 & 19.83 & -25.72\\
\hspace{1em}UA 25 & 26.40 & 20.33 & 1.30 & 0.64 & 58.38 & 27.37 & -31.48\\
\hspace{1em}UA 50 & 19.39 & 14.51 & 1.34 & 0.71 & 46.44 & 28.34 & -21.86\\
\hspace{1em}UA 75 & 13.05 & 11.34 & 1.15 & 0.82 & 29.40 & 40.69 & -14.57\\
\bottomrule
\end{tabular}}
\end{table}


\subsection{Tuning Parameters} \label{sec:tuning}
Table \ref{tab:tune} summarizes the tuning parameters for the three-layer feedforward neural network. For the Fourier series regressions we set the Fourier basis functions as $(\sin(j\pi x/4), \cos(j\pi x/4))$ for $j=1,...,J$. 

\begin{table}[H]
\caption{Tuning Parameters for the Neural Network and Fourier Basis} \label{tab:tune}
\begin{center}
\begin{tabular}[t]{ll}
\hline
Parameter & Values\\
\hline
L2 Penalty & $\lambda_2 \in \{1e^{-5}, 1e^{-3}\}$\\
Learning Rate & $\eta = 0.001$\\
Batch Size & 10,000\\
Epochs & 100\\
Activation Function & ReLu\\
Algorithm & Adam (default parameters)\\
Num. Fourier Basis & $J=3$\\
\hline
\end{tabular}
\end{center}
\end{table}

\subsection{The UA-MV Portfolio without Risk-free Rate} \label{sec:ua_mult}
We now consider the problem where  the investor would like to allocate her portfolio into $R$ risky assets without risk-free rate. Then   $z_{T+1|T,i}$ for $i=1,...,R$ denote the expected returns. To incorporate her uncertainty constraints for forecasting the expected returns, the investor then formulates a multivariate, constrained, mean-variance problem:
$$
\max_{\bomega=(\omega_1,...,\omega_R)} \min_{\mu_1,...,\mu_R} \sum_{i=1}^R\omega_i \mu_i -\frac{\gamma}{2}\bomega' \Sigma_T\bomega
$$
subject to the constraint on $(\bomega, \mu_1,...,\mu_R)$:
$$
 \sum_{j=1}^R\omega_j =1,\quad |\mu_i-\widehat z_{T+1|T,i}|\leq q_{\alpha,i} ,i=1,...,R.
$$
 
To see how the investor's uncertainty aversion determines the portfolio shrinkage, we focus on the case when there are two risky assets ($R=2$), as this allows us to gain the most intuition. Since the expressions of the optimal portfolios $\omega_1^*$ and $\omega_2^*$ are symmetric, we will focus on $\omega_1^*$. The optimal portfolio has a closed-form solution: for $c_0 = \gamma^{-1}\Var(z_{T+1,1} - z_{T+1,2})^{-1}$ (see Section \ref{secb.4} for the proof).
\begin{equation}\label{eq5.3}
\omega_1^*=\begin{cases}
0 & \text{when }-c_0(q_{\alpha,1} + q_{\alpha,2})<\omega_1^{\MV}<c_0(q_{\alpha,k} - q_{\alpha,j})\cr 
\omega_1^{\MV} +c_0(q_{\alpha,1} + q_{\alpha,2}) & \text{when } \omega_1^{\MV}<-c_0(q_{\alpha,1} + q_{\alpha,2})\cr 
 \omega_1^{\MV} -c_0(q_{\alpha,k} - q_{\alpha,j}) & \text{when } \omega_1^{\MV}>c_0(q_{\alpha,1} - q_{\alpha,2}) \text{ and } \omega_2^{\MV}>c_0(q_{\alpha,2} - q_{\alpha,1})
\end{cases}.
\end{equation}
Here $\omega_1^{\MV}$ is the usual mean-variance efficient portfolio: $\omega_2^{\MV} = 1-\omega_1^{\MV}$ and 
\begin{eqnarray*}
\omega_1^{\MV}&:=&\arg\max_{\omega\in\mathbb R}\omega\widehat z_{T+1|T,1} + (1-
\omega)\widehat z_{T+1|T,2} -\frac{\gamma}{2}\bomega'\Sigma_T\bomega,\quad \bomega= (\omega, 1-\omega)'.
\end{eqnarray*}
There are three cases, the investor holds no position in the first asset, the investor holds a short-position in the first asset and the investors invests a positive amount in the first asset. To explain the economic intuition, let us compare the behavior of two investors: one MV-investor, who adopts  $\omega_1^{\MV}$; and one UA-investor, who is uncertainty averse and adopts $\omega_1^*$.

Consider the first case of (\ref{eq5.3}). The ``non-participation" position  $\omega_1^*=0$ appears    when 
\begin{equation}\label{eq5.4}
 |\omega_1^{\MV} +q_{\alpha,2} c_0| \leq q_{\alpha,1}c_0.
\end{equation}
Recall that $q_{\alpha,k}$ measures the UA-investor's degree of uncertainty aversion. The higher $q_{\alpha,k}$, the more cautious she is when allocating to asset $k$, and vice versa. Suppose $q_{\alpha,2} \to 0$ and is much smaller than $q_{\alpha,1}$. When (\ref{eq5.4}) holds, then
$$
|\omega_1^{\MV}|\leq q_{\alpha,1}c_0[1+ o(1)],
$$
meaning that the MV-investor will not invest  in $z_{T+1,1}$ more than  $q_{\alpha,1}c_0$. Meanwhile, as $q_{\alpha,2}$ is very small, the UA-investor has little uncertainty about the predictor $\widehat{z}_{T+1|T,2}$, so she is be better off by allocating all her assets in $z_{T+1,2}$, and thus $\omega_1^*=0$.

Next, consider the second case in \eqref{eq5.3}. In this case $\omega_1^{\MV} < 0$, meaning that the MV-investor will short the first asset. Meanwhile, the UA-investor will also short asset 1 because $\omega_1^* = \omega_1^{\MV} + c_0(q_{\alpha,1} + q_{\alpha,2}) < 0$. However, she will short less of asset 1 and shift her allocation towards asset 2, as $\omega_1^{\MV} < \omega_1^* < 0$. The amount of shrinkage is $c_0(q_{\alpha,1} + q_{\alpha,2})$. Therefore, the UA-investor is more cautious than the MV-investor when shorting.

Finally, consider the third case of (\ref{eq5.3}), then 
$$
\omega_1^*= \omega_1^{\MV} -c_0(q_{\alpha,1} - q_{\alpha,2}) ,\text{ which is }\begin{cases}
<\omega_1^{\MV} & \text{ if } q_{\alpha,1}>q_{\alpha, 2}\cr 
>\omega_1^{\MV} & \text{ if } q_{\alpha,1}<q_{\alpha, 2}.
\end{cases}
$$
When $q_{\alpha,1} > q_{\alpha,2}$, this case implies $\omega_1^{\MV} > c_0(q_{\alpha,1} - q_{\alpha,2}) > 0$, so the MV-investor will have a long position in asset 1. Meanwhile, because the UA-investor is more uncertain about $z_{T+1,1}$ than $z_{T+1,2}$, her allocation in the first asset satisfies $0 < \omega_1^* < \omega_1^{\MV}$, meaning that she shrinks her allocation toward the second asset. The amount of shrinkage, $c_0(q_{\alpha,1} - q_{\alpha,2})$, is proportional to the difference in the degrees of uncertainty. The case when $q_{\alpha,1} < q_{\alpha,2}$ follows from a similar insight due to the symmetry between $\omega_1^*$ and $\omega_2^*$.

\subsection{Risk-Sensitive Optimization} \label{sec:ba}
The UA-constrained portfolio allocation tends to adopt a non-participation, which may be overly conservative. In this subsection, we introduce a less restrictive framework that explicitly accounts for forecast uncertainty. Our approach builds on the robust decision-making framework of \cite{hansen2008robustness}.

\subsubsection{One risky and one risk-free asset}
\cite{anderson2016robust} formulate a mean-variance optimization for choosing portfolio allocations that are robust to misspecification in the predictive model. To illustrate, consider an example with one risky asset with excess return $z_{T+1}$. Their problem is formulated as:
\begin{equation}\label{eq5.5}
\max_{\omega}\min_{h_T} \omega\mathbb E_{h_T}(z_{T+1} )
 -\frac{\gamma}{2}\omega^2 \Var_{h_T}(z_{T+1}) + \frac{1}{\tau}D(h_T||f_T)
\end{equation}
where both $h_T$  and $f_T$ are probability density functions, and $\mathbb E_{h_T}(z_{T+1} )$ and $\Var_{h_T}(z_{T+1} )$   respectively denote the expectation and variance of the future risky return with respect to the distribution $h_T$; $D(h_T||f_T)$ denotes the Kulback-Leibler divergence from $f_T$ to $h_T$. Therefore, different from the UA-constrained problem, (\ref{eq5.5}) introduces an inner optimization with respect to the density function $h_T$ with an additional measure $D(h_T||f_T)$. This problem is also called ``risk-sensitive optimization" by \cite{hansen2008robustness}.

The idea is that the investor takes $f_T$ as the benchmark predictive density of $z_{T+1}$, but she is concerned that $f_T$ might be misspecified. So she considers an alternative predictive density $h_T$ for the risky return and constructs portfolio choices to maximize utility under the worst specification of $h_T$. Meanwhile, the investor also believes that $f_T$ is reasonably specified, so by introducing the penalization term $D(h_T||f_T)$, she focuses on alternatives close to $f_T$. The scalar parameter $\tau$ measures the investor's uncertainty aversion. One suggestion by \cite{anderson2016robust} is to use the benchmark predictive density \begin{equation}\label{eq4.6progress} f_T\sim\mathcal{N}(\widehat{z}_{T+1|T},\sigma_T^2), \end{equation} which directly uses the predicted return $\widehat{z}_{T+1|T}$ as the mean of the predictive density and is interpreted as ``the agent's best approximation for the distribution (of returns)."

Importantly, in the Anderson-Cheng model, the incorporated uncertainty is from the density $f_T$ of the true future return, rather than the uncertainty from the prediction $\widehat{z}_{T+1|T}$. It can be easily shown that the solution is equivalent to the MV-portfolio with an increased risk-aversion parameter $\tau+\gamma$. Hence, benchmarking against (\ref{eq4.6progress}) does not incorporate the prediction uncertainty, which could yield portfolios that are too aggressive to be robust to sudden changes in market and idiosyncratic risk.

The objective is to derive a better predictive density $f_T$ in place of (\ref{eq4.6progress}) to account for the uncertainty of the prediction $\widehat{z}_{T+1|T}$, which in our applications is the pooled ML (e.g., DNN) forecast. 
Suppose the investor has a prior distribution of the expected return:
$$
p(z_{T+1|T})\sim \mathcal N(\pi,v)
$$
where $(\pi, v)$ respectively denote the prior mean and prior variance. Meanwhile, the asymptotic distribution of the ML forecast is approximately (as proved in Theorem \ref{th2}):
$$ p(\widehat{z}_{T+1|T}|z_{T+1|T}) \sim \mathcal{N}(z_{T+1|T}, \FSE^2) $$
where $\FSE = \SE(\widehat{z}_{T+1})$ is the forecast standard error obtained using closed-form-ML or the $k$-step bootstrap. This distribution serves as the likelihood function. Applying the Bayesian updating rule, it then yields a posterior distribution of the future return $z_{T+1}$:
$$ f_T(z_{T+1}) = \int p(z_{T+1}|z_{T+1|T}) p(z_{T+1|T}) p(\widehat{z}_{T+1|T}|z_{T+1|T}) \, dz_{T+1|T}. $$
This predictive density accounts for the ML uncertainty and yields:
\begin{equation}
    f_T\sim \mathcal N(\widetilde z_{T+1}, \widetilde \sigma_T^2)
\end{equation}
where $\widetilde{z}_{T+1}$ and $\widetilde{\sigma}_T^2$ are the posterior mean and variance, respectively, given by
$$ \widetilde{z}_{T+1} = (1 - W_1)\widehat{z}_{T+1|T} + W_1 \pi, \quad \widetilde{\sigma}_T^2 = vW_1 + \sigma_T^2, \quad W_1 = \frac{\FSE^2}{\FSE^2 + v}. $$

In addition, we let  $h_T$  in the inner optimization of (\ref{eq5.5}) take the form: 
$$
h_T\sim \mathcal N(\mu,\widetilde\sigma_T^2), 
$$
where $\mu \in \mathbb{R}$ is unspecified and we search for the optimal $\mu$ in the inner optimization of (\ref{eq5.5}). This allows us to concentrate on the mean forecast as the main source of misspecification.
In this case, the Kullback-Leibler divergence for two normal distributions is simply $D(h_T||f_T) = \frac{1}{2\widetilde{\sigma}T^2}(\widetilde{z}_{T+1|T} - \mu)^2$. Therefore, the optimal portfolio can be obtained as:
$$
\omega^{\RS}:=\arg\max_{\omega}\min_{\mu} \omega \mu -\frac{\gamma}{2}\omega^2 \widetilde\sigma_T^2 +   \frac{1}{2\tau \widetilde\sigma_T^2  }(\widetilde z_{T+1|T} -\mu)^2
$$

To characterize the solution, let 
$$
\omega^{\MV} =  \frac{\widehat z_{T+1|T} }{\sigma_T^2\gamma},\quad \omega^{\MV}_{\pi}
=\frac{\pi }{\sigma_T^2\gamma}
$$
be the MV portfolios based on the predicted mean $\widehat z_{T+1|T}$ and the  prior mean $\pi$.  Then it can be shown that the solution is (see Section \ref{secb.4} for the proof)
\begin{equation}\label{eq5.6}
\omega^{\RS}=\left[\omega^{\MV}(1-W_1) + \omega^{\MV}_{\pi}W_1 \right]\frac{\sigma_T^2}{vW_1+\sigma_T^2} \frac{\gamma}{\tau+\gamma}.
\end{equation}
Therefore, when taking into account the forecast uncertainty,  $\omega^{\RS}$ conducts a \textit{double shrinkage}:   First, it shrinks  $\omega^{\MV}$ towards  $\omega^{\MV}_{\pi}$ due to the weight $W_1$. Second,  it shrinks the overall portfolio towards zero due to the factor $ \frac{\sigma_T^2}{vW_1+\sigma_T^2}$.

\begin{figure}[H]
\caption{Mean-Variance vs. Uncertainty Averse (risk-sensitive) Portfolio Allocation} \label{figmonoto}
\vspace{-2mm}
\caption*{\setstretch{1.0} \scriptsize This figure shows the uncertainty averse (risk-sensitive) solution for the optimal portfolio weight in comparison with the standard mean-variance portfolio weight. In the upper panel, we hold the level of the forecast standard error (FSE) fixed and vary the mean-variance weight ($\omega^{\MV}$). The red line, $\omega^{\text{RS}} \text{ I}$ is constructed using $\omega_{\pi}^{\MV}=\omega^{\MV}$. The blue line, $\omega^{\text{RS}} \text{ II}$, is constructed using $\omega_{\pi}^{\MV}=0.5\times \omega^{\MV}$. In the lower panel, we consider the case where $\omega^{\MV}>a \omega_{\pi}^{\MV}$ and vary the forecast standard error (FSE).}
    \includegraphics[width=1.0\textwidth]{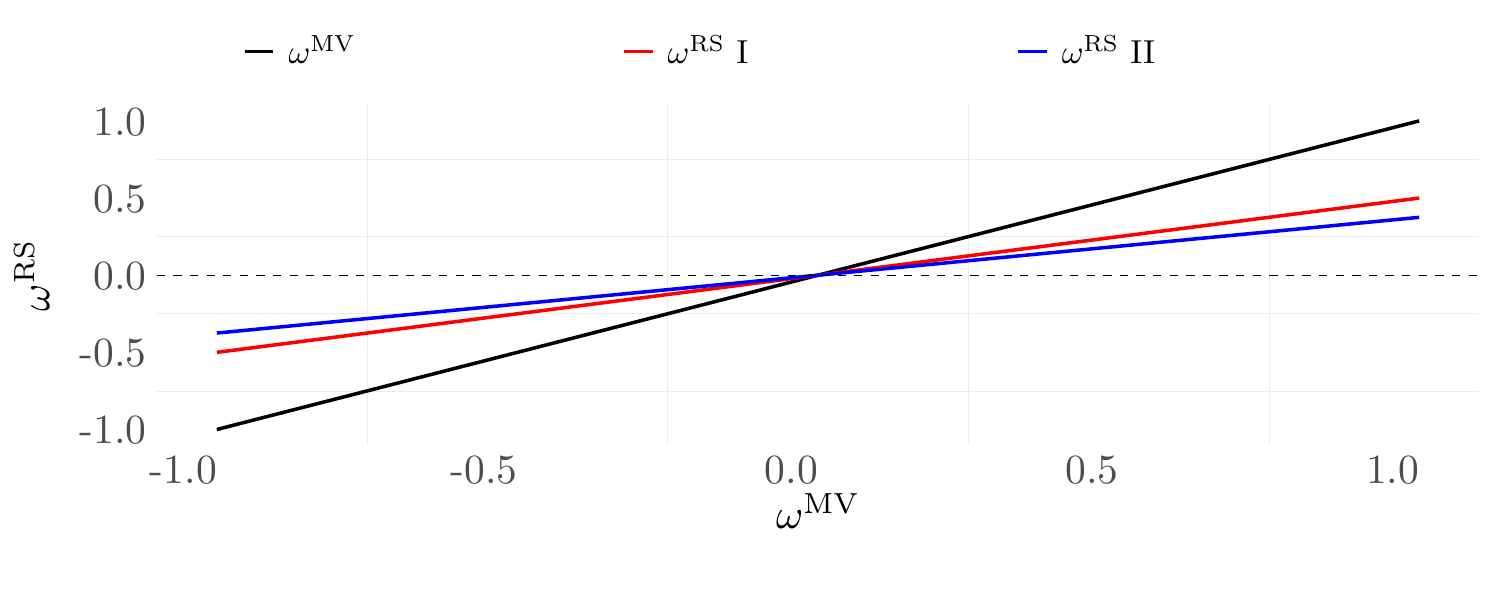}
    \includegraphics[width=1.0\textwidth]{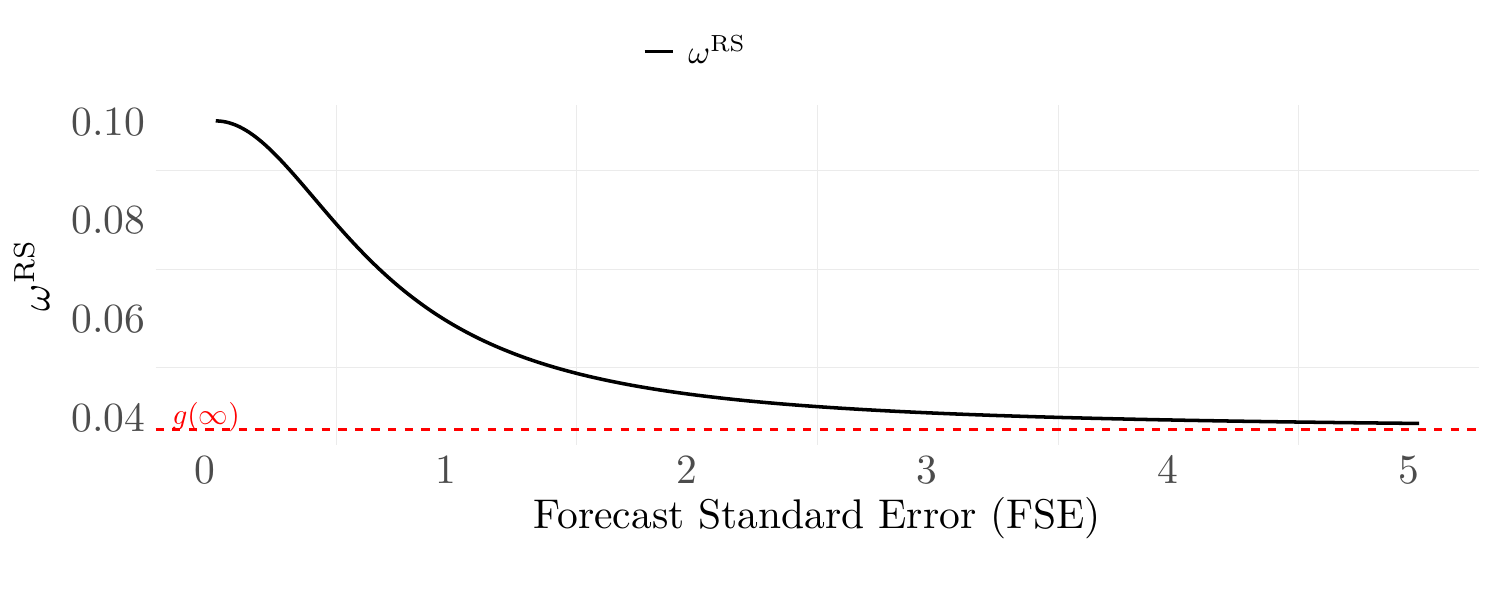}
\end{figure}

To further shed light on the effect of incorporating forecast uncertainty into the shrinkage portfolio, we recall  that $W_1 =\FSE^2/(\FSE^2+ v)$, hence $\omega^{\RS}$ can be written as an explicit function of   $\FSE$ that shrinks the mean-variance portfolio: 
$\omega^{\RS}= g(\FSE)$, where \begin{eqnarray*}
g(s)&:=&\left[\omega^{\MV}(1-W(s)) + \omega^{\MV}_{\pi}W(s) \right]\frac{\sigma_T^2}{vW(s)+\sigma_T^2} \frac{\gamma}{\tau+\gamma}\cr 
W(s)&:=& \frac{s^2}{s^2+v}.
\end{eqnarray*} 
 The monotonicity of $g(\cdot)$  depends on the relative magnitude between $\omega^{\MV}$ and $a\times \omega_{\pi}^{\MV}$, where $ a= \frac{ \sigma_T^2}{v+\sigma_T^2} .$

As an illustration, consider the case $\omega^{\MV} > a \omega_{\pi}^{\MV}$, which means that the traditional MVE position suggests holding a ``large" position in the risky asset.
The upper panel of Figure \ref{figmonoto} compares $\omega^{\RS}$ with $\omega^{\MV}$ as the latter increases with a fixed FSE. Clearly, $\omega^{\RS}$ increases linearly in $\omega^{\MV}$ but at a slower rate due to the linear shrinkage effect. The lower panel plots the $g(\cdot)$ function with respect to the forecast standard error. 
As FSE increases, the uncertainty-averse investor decreases her position in the risky asset, i.e., moving in the opposite direction relative to $\omega^{\MV}$. However, there is no region of complete non-participation. As forecast uncertainty increases, $\omega^{\RS}$ converges to a ``discounted" MV portfolio that relies solely on prior beliefs:
$$
g(\infty)=\omega_{\pi}^{\MV} \frac{\sigma_T^2}{v+\sigma_T^2}\frac{\gamma}{\tau+\gamma},
$$
Essentially, when forecast uncertainty is extremely high, the portfolio is predominantly guided by prior beliefs, minimizing the impact of the predicted expected returns.


\subsubsection{Multiple risky assets} \label{sec:mult_risky}

We now extend the framework outlined in the previous subsection to the scenario of multiple risky assets. We obtain the pooled ML forecast $\widehat{z}_{T+1|T}$ for the $R$-dimensional expected return $z_{T+1|T}$. Let $\SE^2$ be the forecast covariance matrix of $\widehat{z}_{T+1|T}$, and let $\Sigma_T$ be the $R \times R$ covariance of $z_{T+1} - z_{T+1|T}$. We impose a multivariate normal prior $z_{T+1|T} \sim \mathcal{N}(\pi, v)$, where $v$ is an $R \times R$ prior covariance matrix, set to $v = g \Sigma_T$ for some $g > 0$ as in Zellner's $g$-prior.\footnote{In the context of linear regression, \cite{zellner1986assessing} proposed a $g$-prior where the prior variance is proportional to the inverse Fisher information matrix through a constant $g$. As $g \to \infty$, the posterior mean converges to the maximum likelihood estimator.}
Then the posterior of the predictive density for the $N$-dimensional $z_{T+1}$ is $f_T(z_{T+1}) \sim \mathcal{N}(\widetilde{z}_{T+1}, \widetilde{\Sigma})$, where for $W_1 = \SE^2(\SE^2 + v)^{-1}$,
$$
\widetilde z_{T+1}= (I-W_1)\widehat z_{T+1|T} + W_1\pi, \quad \widetilde\Sigma=\Sigma_T +v W_1'.
$$
Now consider the problem: for $h_T\sim\mathcal N(\mu, \widetilde\Sigma)$: (see Section \ref{secb.4} for the  proof)
\begin{eqnarray}\label{eq4.11}
\bomega^{\RS}&:=&\arg\max_{\bomega\in\mathbb R^N}\min_{\mu\in\mathbb R^N} \bomega '\mu  -\frac{\gamma}{2}\bomega' \widetilde\Sigma\bomega +   \frac{1}{\tau   } D(h_T||f_T)\cr 
      \cr 
      &=& \frac{\gamma }{\tau+\gamma}\widetilde\Sigma^{-1}\Sigma_T\left[ (I-W_1) '\bomega^{\MV} +W_1'\bomega^{\MV}_{\pi
}\right]  
\end{eqnarray}
where 
$$
\bomega^{\MV} =  \Sigma_T^{-1}\widehat z_{T+1|T}  \frac{1}{\gamma},\quad \bomega_{\pi}^{\MV} =  \Sigma_T^{-1}\pi    \frac{1}{ \gamma}
$$
are respective the mean-variance portfolios based on $\widehat z_{T+1|T}$ and $\pi$. Hence   $\omega^{\RS}$ still  conducts a \textit{double shrinkage}:   First, it shrinks  $\bomega^{\MV}$ towards  $\bomega^{\MV}_{\pi}$ due to the weight $W_1$. Secondly,  it shrinks the overall portfolio to zero due to the $ \widetilde\Sigma^{-1}\Sigma_T$. 

Note that as $g\to\infty$, the prior becomes more diffuse and then $W_1\to 0$, hence more weights are imposed on $\bomega^{\MV}$. The resulting portfolio $\bomega^{\RS}$ becomes less robust to the forecast uncertainty in $\widehat z_{T+1|T}$. We shall see the impact of this in the empirical study. 


\section{Technical Appendix}

\subsection{Intuition}\label{sec:ti}

In this subsection we briefly explain the technical intuition on how we derive  the expression $$\widehat z_{T+1|T}  - z_{T+1|T} = \frac{1}{T}\sum_{t=1}^T\mathcal C_t+ o_P(T^{-1/2})$$
and why the leading term $\mathcal C_t$ does not depend on the specific choice of the ML space.

The Riesz-representation plays a key role in the asymptotic analysis, and has been commonly used in the inferene for semiparametric models, e.g., \cite{newey1994asymptotic, shen1997methods,chen1998sieve, chen1999improved,chernozhukov2018biased}, among many others.  The use of Riesz-representation allows to directly span the ML forecast using the least squares loss function, which also requires studying the object of interest in a Hilbert space. To do so,  define an inner product:  $$\langle h_1, h_2 \rangle:= \frac{1}{NT}\sum_{it}\mathbb E h_1(x_{i,t-1})h_2(x_{i,t-1})$$ where the expectation is taken  jointly with respect to the serial and cross-sectional distribution of $x_{i,t-1}$, treating $h_1, h_2$ as fixed functions.  
 Define
  $$
 g_{NT,\mathcal G}:=\arg\min_{h\in\mathcal G_{\DNN}\cup\mathcal G_B}\| h -g\|_{L^2}^2,
 $$
 which is the best approximation to the true $g$ on the space $\mathcal G_{\DNN}\cup\mathcal G_B$ under the norm $\|_{L^2}$. Then $\mathcal A_{NT}:= \text{span}( \mathcal G_{\DNN}\cup\mathcal G_B -\{g_{NT,\mathcal G}\} ) $   is a finite dimensional Hilbert spaced endowed with the inner product $\langle\cdot,\cdot\rangle.$   Next, evaluated at the out-of-sample characteristics $x_{i,T}$, define  a sequence of  linear functionals: 
$$\mathcal T_i(h):=  h(x_{i,T}),\quad  i=1,...,N.$$  

Because $\mathcal T_j$ is a linear functional it is always bounded  on the finite dimensional Hilbert space $\mathcal A_{NT}$. The Riesz representation theorem then implies that there is a function $m_j^*\in\mathcal A_{NT}$, called Riesz representer, so that 
$$
\mathcal T_j(h)= \langle h, m_j^*\rangle,\quad \forall h\in\mathcal A_{NT}.
 $$
 The key fact to our argument is that $m^*_j$  does not depend on the specific machine learning space  being used (whether $\mathcal G_{NT,\mathcal G}$ or $\mathcal G_B$) for   $\widehat z_{T+1|T}$. It only depends on the joint distribution of $\{x_{i,t-1}\}$, the realization $\{x_{j,T}\}$ and the union space $\mathcal G_{\DNN}\cup\mathcal G_B$.

 Next, using an  argument for M-estimation (e.g. Theorem 3.2.5 of \cite{VW}),  we show  in  Lemma \ref{lem2} that 
  uniformly for  $j\leq N$,
$$
 \langle  \widehat g-g, m^*_j\rangle= \frac{1}{NT}\sum_{it}e_{i,t}m^*_j(x_{i,t-1}) + o_P(T^{-1/2}).
$$
 Then heuristically, 
 \begin{eqnarray*}
\widehat z_{T+1|T} -  z_{T+1|T}  &=&\sum_j w_j[\widehat g(x_{j,T}) -g(x_{j,T}) ] =\sum_j w_j\mathcal T_j(\widehat g-g) \cr
&\approx&  \sum_j w_j\langle \widehat g-g, m_j^*\rangle
\cr
&\approx& \frac{1}{NT}\sum_{it}\sum_jw_je_{i,t}m^*_j(x_{i,t-1}) .
\end{eqnarray*}
This yields the desired expansion with $\mathcal C_t\approx \frac{1}{N}\sum_{i}\sum_jw_je_{i,t}m^*_j(x_{i,t-1})$. It is clear from this expression that $\mathcal C_t$ does not depend on the specific choice of the ML method.

\subsection{Assumptions} \label{sec:assumptions}
We shall use two machine learning spaces: the ``forecast ML" $\mathcal G_{\DNN}$, which is the space for deep neural networks, and  the closed-form ML $\mathcal G_B$, which is the Fourier series to compute the standard error.   Let $\mathcal N(\delta, \mathcal G, \|.\|_\infty)$ denote the entropy cover of $\mathcal G$, which is the smallest number of $\|.\|_{\infty}$-balls of radius $\delta$ to cover $\mathcal G$.

\begin{assumption}\label{ass2}  
Conditioning on $X$, $(v_t, u_{i,t})$ and  are independent over time and are sub-Gaussian.  Also suppose $x_{i,t-1}$ are independent across $i$. 
\end{assumption}

\begin{assumption}\label{ass1} The following conditions holds for $\mathcal G= \mathcal G_{\DNN}$, the neural network space.

\begin{enumerate}[label=\roman*)] 
    \item There is $p(\mathcal G)$ so that the covering number satisfies: for any $\delta>0$,  
\begin{equation}\label{asseq4.1}
\mathcal N(\delta, \mathcal G, \|.\|_\infty)\leq \left(\frac{CT}{\delta}\right)^{p(\mathcal G)}.
\end{equation}
    \item Let $$\varphi_{\mathcal G}^2 =  \inf_{v\in\mathcal G} \sup_{h\in\{\tau_T g+ \mathcal G\}\cup \{g\}}\|v-h\|_{\infty}^2.$$
Suppose $ \varphi_{\mathcal G}=o(T^{-1})$,  
$p(\mathcal G) \log(NT)  = o(T^{1/2})$
and $\varphi_{\mathcal G} p(\mathcal G)\log (NT) = o(1)$.
    \item Define  the best approximation to the true  $g$ function under the $\|.\|_{L^2}$-norm: 
    $$
    g_{NT,\mathcal G}:=\arg\min_{h\in\mathcal G}\frac{1}{NT}\sum_{it}\mathbb E  (h(x_{i,t-1})-g(x_{i,t-1} ))^2.
    $$
    Suppose $\sqrt{T}\max_{j\leq N}|g_{NT,\mathcal G}(x_{j,T})- g(x_{j,T}))|=o_P(1)$. 
\end{enumerate}
\end{assumption}  

The above assumption is well known to be satisfied by   DNN  for properly specified width and depth of the layers. \cite{schmidt2020nonparametric} show that   a multilayer feedforward network with ReLu activation functions at each layer can well approximate a rich class of functions with compact support.  It also follows from \cite{anthony2009neural} that (\ref{asseq4.1}) holds with $p(\mathcal G_{\DNN})$ as the {pseudo dimension} of the neural network, with $p(\mathcal G_{\DNN})=O(J^2L^2)$, where $J,L$ are respectively the maximum width and depth of the network. 

\begin{assumption}\label{ass1a} The following conditions holds for $\mathcal G= \mathcal G_{B}$, the Fourier series regression.

\begin{enumerate}[label=\roman*)] 
    \item There is $p(\mathcal G)$ so that the covering number satisfies: for any $\delta>0$,  
\begin{equation}\label{asseq4.1}
\mathcal N(\delta, \mathcal G, \|.\|_\infty)\leq \left(\frac{CT}{\delta}\right)^{p(\mathcal G)}.
\end{equation}
    \item Let $$\varphi_{\mathcal G}^2 =  \inf_{v\in\mathcal G} \sup_{h\in\{\tau_T g+ \mathcal G\}\cup \{g\}}\|v-h\|_{\infty}^2.$$
Suppose $ \varphi_{\mathcal G}=o(T^{-1})$,  
$p(\mathcal G) \log(NT)  = o(T^{1/2})$
and $\varphi_{\mathcal G} p(\mathcal G)\log (NT) = o(1)$.
    \item Define  the best approximation to the true  $g$ function under the $\|.\|_{L^2}$-norm: 
    $$
    g_{NT,\mathcal G}:=\arg\min_{h\in\mathcal G}\frac{1}{NT}\sum_{it}\mathbb E  (h(x_{i,t-1})-g(x_{i,t-1} ))^2.
    $$
    Suppose $\sqrt{T}\max_{j\leq N}|g_{NT,\mathcal G}(x_{j,T})- g(x_{j,T}))|=o_P(1)$. 

    \item Let $J$ be the number of Fourier basis functions. Then $J^{5/2}=o(T)$ and $(\frac{p(\mathcal G_{G}) \log (T ) }{T} + \varphi_{\mathcal G_{G} } )J^2\leq o(1)$. 
\end{enumerate}

\end{assumption}
  
To justify the above assumption, note that  the  Fouries basis  can well approximate a H\"older class of smooth functions; and   (\ref{asseq4.1})  holds with $p(\mathcal G)\sim J$ being the number of basis functions.

\subsection{Proofs}

 \subsubsection{Proof of Theorem \ref{th1}}\label{sec:p1}

For any generic function $h$, which may or may not be in $\mathcal G\in\{\mathcal G_{\DNN}, \mathcal G_1\}$,  let 

$Q(h)=\frac{1}{TN}\sum_{it}(y_{i,t}-  h(x_{i,t-1}))^2$ be the least squares loss function.  Let
 \begin{eqnarray}\label{eqrddafphi}
   \psi_{\mathcal G}(h)=  |Q(\pi_{\mathcal G}(h))-Q(h)| ,\quad \pi_{\mathcal G}(h)=\arg\min_{v\in\mathcal G} \|v- h\|_{\infty}.
\end{eqnarray}
Here $\pi_{\mathcal G}(h)$ is the projection of function $h$ onto the learning space $\mathcal G$. In particular, if $h\in\mathcal G$ then $\pi_{\mathcal G}(h)=h$.

 Theorem \ref{th1} is restated and proved by the following proposition.

\begin{proposition}\label{prop3} Suppose $\sum_i|w_i|<\infty$. 
There  are functions $m_j^*$, which is the same whether DNN or Fourier series are used as the predictor, so that for 
$\widehat z_{T+1}$ being  either the DNN predictor or the Fourier series predictor, 
\begin{eqnarray*}
\widehat z_{T+1} -  z_{T+1|T}  
&=& \frac{1}{T}\sum_{t} \frac{1}{N}\sum_i \zeta^*(x_{i,t-1})  \beta_{i,t-1}'v_t+ o_P(T^{-1/2}) \cr
\zeta^*(\cdot)&:=&\sum_j w_j m_j^*(\cdot) . 
\end{eqnarray*}
\end{proposition}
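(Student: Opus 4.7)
The plan is to linearize $\widehat z_{T+1|T} - z_{T+1|T}$ through a Riesz-representation argument whose representer is the same regardless of whether $\mathcal G = \mathcal G_{\DNN}$ or $\mathcal G = \mathcal G_B$ is used. First I would endow the finite-dimensional space $\mathcal A_{NT} := \text{span}(\mathcal G_{\DNN} \cup \mathcal G_B - \{g_{NT,\mathcal G}\})$ with the inner product $\langle h_1, h_2\rangle = \frac{1}{NT}\sum_{i,t}\mathbb E h_1(x_{i,t-1})h_2(x_{i,t-1})$, and consider for each $j\le N$ the linear evaluation functional $\mathcal T_j(h) = h(x_{j,T})$. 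Because $\mathcal A_{NT}$ is finite-dimensional, $\mathcal T_j$ is automatically bounded, so the Riesz representation theorem yields a unique $m_j^* \in \mathcal A_{NT}$ with $\mathcal T_j(h) = \langle h, m_j^*\rangle$ for every $h \in \mathcal A_{NT}$. The key observation is that, because $m_j^*$ is defined on the union space, it depends only on the joint distribution of $\{x_{i,t-1}\}$ and on the realization $x_{j,T}$, not on which ML estimator produced $\widehat g$; this is exactly what transfers the same leading term to both predictors.

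Next, I would pass from functional evaluation to the inner product by exploiting the first-order condition of the least-squares M-estimator. Adapting the sieve M-estimation argument of Theorem 3.2.5 of \cite{VW} in the style of \cite{chen1999improved}, I would perturb $\widehat g$ in the direction $m_j^*$ within $\mathcal A_{NT}$, so that $\mathcal T_j(\widehat g - g)$ coincides with $\langle \widehat g - g, m_j^*\rangle$ modulo a remainder controlled by the sup-approximation rate $\varphi_{\mathcal G}$ and the uniform consistency of $\widehat g$ on $\{x_{j,T}\}$ from Assumption \ref{ass1}(ii)--(iii) (and analogously Assumption \ref{ass1a} for the Fourier estimator). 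The same M-estimation manipulation comparing population and empirical loss yields
$$
\langle \widehat g - g, m_j^*\rangle = \frac{1}{NT}\sum_{i,t} e_{i,t}\, m_j^*(x_{i,t-1}) + o_P(T^{-1/2})
$$
uniformly in $j \le N$, where the stochastic-equicontinuity term is bounded via the entropy estimate in Assumption \ref{ass1}(i) combined with $p(\mathcal G)\log(NT) = o(T^{1/2})$.

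Summing with the weights and setting $\zeta^*(\cdot) := \sum_j w_j m_j^*(\cdot)$, which is well-defined since $\sum_j |w_j| < \infty$, produces the pooled expansion
$$
\widehat z_{T+1|T} - z_{T+1|T} = \frac{1}{NT}\sum_{i,t} e_{i,t}\,\zeta^*(x_{i,t-1}) + o_P(T^{-1/2}).
$$
Substituting the factor decomposition (\ref{eq3.4}), $e_{i,t} = g_\beta(x_{i,t-1})' v_t + u_{i,t}$, splits this into a systematic piece $\frac{1}{T}\sum_t \mathcal B_{t-1} v_t$ with $\mathcal B_{t-1} = \frac{1}{N}\sum_i \zeta^*(x_{i,t-1}) \beta_{i,t-1}'$, and an idiosyncratic piece $\frac{1}{NT}\sum_{i,t} u_{i,t}\zeta^*(x_{i,t-1})$. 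Under Assumption \ref{ass2}, the $u_{i,t}$ are sub-Gaussian and cross-sectionally independent conditional on characteristics, so the idiosyncratic piece is $O_P((NT)^{-1/2}) = o_P(T^{-1/2})$; the central limit statement (\ref{eqdistr}) then follows from a standard martingale CLT applied to $\frac{1}{\sqrt T}\sum_t \mathcal B_{t-1}(f_t - \mathbb E f_t)$. The main obstacle I anticipate is establishing the $o_P(T^{-1/2})$ remainder in the inner-product linearization \emph{uniformly} over $j \le N$ and simultaneously over both candidate ML classes: this requires jointly controlling the projection bias onto $\mathcal A_{NT}$ and the empirical-process fluctuation, and is precisely where Assumptions \ref{ass1}(ii) and \ref{ass1a}(ii),(iv) restrict the complexity of the DNN and Fourier classes relative to $T$.
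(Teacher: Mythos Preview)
Your proposal is correct and follows essentially the same route as the paper: define the Riesz representer $m_j^*$ on the finite-dimensional union space $\mathcal A_{NT}$, use an M-estimation perturbation of $\widehat g$ in the direction $m_j^*$ (the paper's Lemma \ref{lem2}) together with stochastic equicontinuity to obtain $\langle \widehat g - g, m_j^*\rangle = \frac{1}{NT}\sum_{i,t} e_{i,t} m_j^*(x_{i,t-1}) + o_P(T^{-1/2})$ uniformly in $j$, and then split $e_{i,t}$ via (\ref{eq3.4}) to isolate the factor-shock term and absorb the idiosyncratic piece into the remainder. The only detail the paper makes slightly more explicit is the decomposition $\widehat g - g = (\widehat g - g_{NT,\mathcal G}) + (g_{NT,\mathcal G} - g)$, handling the second piece via the orthogonality $\langle m_j^*, g_{NT,\mathcal G} - g\rangle = 0$ and Assumption \ref{ass1}(iii), which you subsume under ``modulo a remainder controlled by the sup-approximation rate.''
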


\begin{proof}

Let $\mathcal  G_{\DNN}$  be the DNN space and $\mathcal G_B$ denote the Fourier series space.
 Define 
 $
\mathcal A= \text{span}( \mathcal G_{\DNN}\cup\mathcal G_B -\{g\} ),
 $
which is the closed linear span of  $\mathcal G_{\DNN}\cup\mathcal G_B-g$ and $g$ is the true $g$ function.   Also, define
  $$
 g_{NT,\mathcal G}:=\arg\min_{h\in\mathcal G_{\DNN}\cup\mathcal G_B}\frac{1}{NT}\sum_{it}\mathbb E  (h(x_{i,t-1})-g(x_{i,t-1} ))^2,
 $$
 which is the best approximation to the true $g$ on the space $\mathcal G_{\DNN}\cup\mathcal G_B$ under the norm $\|_{L^2}$. Define $\mathcal A_{NT}= \text{span}( \mathcal G_{\DNN}\cup\mathcal G_B -\{g_{NT,\mathcal G}\} )$. Then both  $\mathcal A$  and $\mathcal A_{NT}$ are  Hilbert spaces endowed with the inner product $\langle,\rangle$. Also $\mathcal A_{NT}$ is a finite dimensional space. 
Now define a   sequence of  linear functionals, evaluated at the out-of-sample $x_{j,T}$:
$$\mathcal T_j(h):=  h(x_{j,T}),\quad  j=1,...,N,
\quad h\in\mathcal A\cup\mathcal A_{NT}.
$$

Next we adopt the Riesz representation theorem on $\mathcal A_{NT}$.   Because $\mathcal T_j$ is a linear functional,  it is always bounded  on the finite dimensional Hilbert space $\mathcal A_{NT}$. Hence  the Riesz representation theorem implies that there is a function $m_j^*\in\mathcal A_{NT}$, called Riesz representer, so that 
 $$
\mathcal T_j(h)= \langle h, m_j^*\rangle,\forall h\in\mathcal A_{NT}.
 $$
 Hence $m_j^*$ only depends on the definition of the inner product and $\mathcal A_{NT} $, that is,  
 the distribution of $\{x_{i,t-1}: t\leq T\}$ and the realization $x_{j,T}$, but does not depend on whether DNN or Fourier series were used to construct $\widehat z_{T+1}$.  ($\mathcal A_{NT}$ is the same whether DNN or Fourier series were used.)
 We have two claims:

 \textbf{claim 1: } $\langle m_j^*,g_{NT,\mathcal G}-g\rangle=0$. This is because $g_{NT,\mathcal G}$ is the projection of $g$ on $\mathcal A_{NT}$ under the  norm induced by the inner product, so it should be ``orthogonal" to $m_j^*$ and any element in $\mathcal A_{NT}$. 
  
  \textbf{claim 2: } $\sqrt{T}\max_{j\leq N}|g_{NT,\mathcal G}(x_{j,T})- g(x_{j,T}))|=o_P(1)$, following from  Assumption \ref{ass2}.
 
Then take $m_j=m_j^*$ in Lemma \ref{lem2}, and let $h=\widehat g- g_{NT,\mathcal G}\in\mathcal A_{NT}$,  
\begin{eqnarray*}
\widehat z_{T+1|T} -  z_{T+1|T}  &=&\sum_j w_j[\widehat g(x_{j,T}) -g(x_{j,T}) ] =\sum_j w_j\mathcal T_j(\widehat g-g) \cr
&=&\sum_j w_j\mathcal T_j(\widehat g-g_{NT,\mathcal G})
+\sum_j w_j [ g_{NT,\mathcal G}(x_{j,T}) - g (x_{j,T})]
\cr
&=&\sum_j w_j\langle \widehat g- g_{NT,\mathcal G}, m_j^*\rangle
+ o_P(T^{-1/2}) =  \sum_j w_j\langle \widehat g-g, m_j^*\rangle+ o_P(T^{-1/2}) 
\cr
&=^{(a)}&\frac{1}{NT}\sum_{it}\sum_jw_je_{i,t}m^*_j(x_{i,t-1}) + o_P(T^{-1/2}) \cr
&=^{(b)}& \frac{1}{T}\sum_{t} \frac{1}{N}\sum_i \zeta^*(x_{i,t-1})  \beta_{i,t-1}'v_t+L +o_P(T^{-1/2}) \cr 
L&:=&
\frac{1}{NT}\sum_{it}\sum_jw_ju_{i,t}m^*_j(x_{i,t-1}) =o_P(T^{-1/2}). 
\end{eqnarray*}
 In the above (a) follows from Lemma \ref{lem2}; (b)  is due to  $\zeta^*(\cdot)=\sum_j w_j m_j^*(\cdot)  $
and the definition of $e_{i,t}$.

\end{proof}

\begin{lemma}\label{lem1}

Write      $\|g_1\|_T^2=\frac{1}{NT}\sum_{it}g_1(x_{i,t-1})^2$  and $\|g_1\|_{L^2}^2= \mathbb E \|g_1\|_{T}^2$
 for any  function  $g_1$. 
Then

 $$
\|\widehat g- g\|_{L^2}^2 \leq O_P(\frac{p(\mathcal G) \log (T ) }{T} + \varphi_{\mathcal G}).
$$
\end{lemma}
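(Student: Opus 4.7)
\textbf{Proof proposal for Lemma \ref{lem1}.} The plan is to run the standard ``basic inequality'' argument for empirical risk minimization, controlled by the covering number bound in Assumption \ref{ass1} (or \ref{ass1a}), while being careful that the error $e_{i,t}$ is not i.i.d. but admits the factor decomposition $e_{i,t}=g_\beta(x_{i,t-1})'v_t+u_{i,t}$ from \eqref{eq3.4}. First I will use that $\widehat g$ minimizes $Q(\cdot)$ over $\mathcal G$ and that $\pi_{\mathcal G}(g)\in\mathcal G$, so $Q(\widehat g)\leq Q(\pi_{\mathcal G}(g))$. Writing $y_{i,t}=g(x_{i,t-1})+e_{i,t}$, this expands (after cancelling the pure-$e$ terms) to the usual bound
\begin{equation*}
\|\widehat g - g\|_T^2 \;\leq\; \|\pi_{\mathcal G}(g)-g\|_T^2 \;+\; \frac{2}{NT}\sum_{i,t} e_{i,t}\bigl(\widehat g(x_{i,t-1}) - \pi_{\mathcal G}(g)(x_{i,t-1})\bigr),
\end{equation*}
where $\|h\|_T^2 = \tfrac{1}{NT}\sum_{it}h(x_{i,t-1})^2$. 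The first term is deterministically $O(\varphi_{\mathcal G}^2)=o(\varphi_{\mathcal G})$ by the definition of $\pi_{\mathcal G}$ and $\varphi_{\mathcal G}$.

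The main work is controlling the stochastic cross-term uniformly over $\mathcal G$. I would define the empirical process
\begin{equation*}
Z_n(h) \;:=\; \frac{1}{NT}\sum_{i,t} e_{i,t}\,\bigl(h(x_{i,t-1}) - \pi_{\mathcal G}(g)(x_{i,t-1})\bigr), \qquad h\in\mathcal G,
\end{equation*}
and apply a standard peeling (chaining) argument: on the event $\|\widehat g-\pi_{\mathcal G}(g)\|_T\leq \delta$, one shows $|Z_n(\widehat g)|\leq C\delta\sqrt{p(\mathcal G)\log(NT)/T}+ \text{lower order}$. To evaluate the supremum of $Z_n$ I would condition on $X=\{x_{i,t-1}\}$, so under Assumption \ref{ass2} both $u_{i,t}$ and $v_t$ are sub-Gaussian and independent across $t$; at fixed $X$ and fixed $h$, $Z_n(h)$ is a $\tfrac{1}{T}$-sum of independent sub-Gaussian increments over time, with variance proxy of order $\|h-\pi_{\mathcal G}(g)\|_T^2/T$, because the cross-sectional sums $\tfrac{1}{N}\sum_i g_\beta(x_{i,t-1})(h-\pi_{\mathcal G}(g))(x_{i,t-1})$ are of the same order as $\|h-\pi_{\mathcal G}(g)\|_T$ in mean square. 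Combined with the covering bound $\mathcal N(\delta,\mathcal G,\|\cdot\|_\infty)\leq (CT/\delta)^{p(\mathcal G)}$ and a Dudley-type maximal inequality, this yields $\sup_{h\in\mathcal G,\,\|h-\pi_{\mathcal G}(g)\|_T\leq \delta}|Z_n(h)|= O_P(\delta\sqrt{p(\mathcal G)\log(NT)/T})$.

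Plugging back and solving the quadratic inequality $\|\widehat g-g\|_T^2\leq \varphi_{\mathcal G}+C\|\widehat g-g\|_T\sqrt{p(\mathcal G)\log(NT)/T}$ gives $\|\widehat g-g\|_T^2=O_P(p(\mathcal G)\log(T)/T+\varphi_{\mathcal G})$. The last step is to pass from $\|\cdot\|_T$ to $\|\cdot\|_{L^2}$: this is a standard uniform law of large numbers for $\{h^2:h\in\mathcal G-\mathcal G\}$, again using the covering number and sub-Gaussian tails, to show $\sup_{h\in\mathcal G}|\,\|h-g\|_T^2-\|h-g\|_{L^2}^2\,|=o_P(p(\mathcal G)\log(T)/T+\varphi_{\mathcal G})$, which transfers the bound to the $L^2$ norm.

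I expect the main obstacle to be the cross-sectional dependence: the factor shocks $v_t$ create strong correlation across $i$ in $e_{i,t}$, so one cannot treat $\{e_{i,t}\}$ as i.i.d.\ across both indices. The cleanest way around this is to avoid the cross-sectional CLT entirely here and instead exploit independence across $t$ (conditional on $X$), viewing $Z_n(h)$ as an average of $T$ independent scalar increments; this is enough for a rate (but not a distribution). A secondary, minor issue is that $\mathcal G_{\DNN}$ is nonconvex so $\pi_{\mathcal G}(g)$ is not literally the $L^2$-projection, but because $\varphi_{\mathcal G}$ is defined as a uniform-norm approximation error this only costs a constant.
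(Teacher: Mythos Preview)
Your proposal is correct and follows essentially the same route as the paper: basic inequality from $Q(\widehat g)\le Q(\pi_{\mathcal G}(g))$, a sub-Gaussian maximal inequality over a $\delta$-cover of $\mathcal G$ exploiting independence across $t$ conditional on $X$, and then solving the resulting self-referential (quadratic) inequality. The only cosmetic difference is that the paper normalizes each cover element by its $L^2$ norm, $\mu_j=(m_j-g)/\|m_j-g\|_{L^2}$, so the cross term is bounded directly by $C\sqrt{p(\mathcal G)\log T/T}\,\|\widehat g-g\|_{L^2}$ and no separate $\|\cdot\|_T\to\|\cdot\|_{L^2}$ ULLN step is needed at the end.
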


\begin{proof}
 For notational simplicity, we write $\frac{1}{NT}\sum_{it} eg_1= \frac{1}{NT}\sum_{it} e_{i,t}  g_1(x_{i,t-1})  $ 
 for any  function  $g_1$.  
 Let  $\zeta_T:= \|g-\pi g\|_T^2
 +2\frac{1}{NT}\sum_{it} e  ( g-\pi g )$. 
Then   $Q(\widehat g)\leq Q(\pi g)$
implies  (for $\mathbb E $ taken with respect the distribution of $(e_{i,t},x_{i,t-1})$)
$$
  \|g- \widehat g\|_{L^2}^2 \leq 
\mathbb E  \frac{2}{NT}\sum_{it} e (\widehat g - g )
 + \mathbb E \zeta_T\leq\mathbb E  \frac{2}{NT}\sum_{it} e (\widehat g - g )
+ O_P(\varphi_{\mathcal G}).
$$
Suppose $m_1,...,m_{K(\delta)}$ is a $\delta$-cover  of $\mathcal G$ under the norm $\|m\|_\infty$ and $K(\delta):=\mathcal N(\delta, \mathcal G, \|.\|_\infty)$ as the covering number, where $\delta>0$ is to be determined. Then there is $m_j$    so that $\|m_j- \widehat g\|_\infty<\delta$. Now let   $\mu_j(x)=[m_j(x)- g(x)]/\|m_j-g\|_{L^2}$. 

  Conditioning on $X$, $(v_t, u_{i,t})$ and thus $e_{i,t}$ are independent over time and $e_{i,t}$ is sub-Gaussian. Then 
  $$
  \mathbb E \max_{j\leq K(\delta)}\frac{1}{NT}\sum_{it} e_{i,t} \mu_j(x_{i,t-1})\leq O (\sqrt{\frac{\log K(\delta)}{T}}) =  O (\sqrt{\frac{p(\mathcal G) \log (CT/\delta) }{T}}).
  $$
If  $\|\widehat g- g\|_{L^2}\leq T^{-9}$, then we are done.  If  $\|\widehat g- g\|_{L^2} > T^{-9}$, then 
 set $\delta = T^{-9}$, 
\begin{eqnarray*}
   \mathbb E \frac{1}{NT}\sum_{it} e (\widehat g - g )&\leq& 
   \mathbb E \max_j\frac{1}{NT}\sum_{it} e \mu_j \|m_j-g\|_{L^2} +  \mathbb E \frac{1}{NT}\sum_{it} |e_{i,t}|  \delta \cr
 &\leq &  O(\sqrt{\frac{p(\mathcal G) \log (CT/\delta) }{T}}) [ \delta + \|\widehat g - g\|_{L^2}  ]+ O(\delta)
 \cr
 &\leq&  O(\sqrt{\frac{p(\mathcal G) \log (T ) }{T}})  \|\widehat g - g\|_{L^2}  + O(T^{-9}).
\end{eqnarray*}
Hence $\|g-\widehat g\|_{L^2}^2\leq  O_P(\sqrt{\frac{p(\mathcal G) \log (T ) }{T}})  \|\widehat g - g\|_{L^2}  + O_P(T^{-9} +\varphi_{\mathcal G})$. This implies  the result.

\end{proof}

 \begin{lemma}\label{lem2}
 
Fix   functions $m_j$ so that $\max_{j\leq N}\frac{1}{NT}\sum_{it} m_j(x_{i,t-1})^2= O_P(1)$. Let $\widehat g$ denote the minimizer of $Q(\cdot)$ on the space $\mathcal G$.  
Then uniformly for all $j=1,...,N$,
$$
 \langle  \widehat g-g, m_j\rangle= \frac{1}{NT}\sum_{it}e_{i,t}m_j(x_{i,t-1}) + o_P(T^{-1/2}).
$$

 \end{lemma}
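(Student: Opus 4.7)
The plan is to exploit the first-order optimality of $\widehat g$ on $\mathcal G$: a small perturbation in the direction of $m_j$ (or rather, its projection $\pi_{\mathcal G}(m_j)$ onto $\mathcal G$) produces an approximate moment condition, which, combined with the residual decomposition $y_{i,t}-\widehat g(x_{i,t-1})=e_{i,t}-(\widehat g-g)(x_{i,t-1})$, yields a near-orthogonality of $\widehat g-g$ and $m_j$ in the empirical inner product. A final empirical-process step upgrades the empirical inner product to the population inner product $\langle\cdot,\cdot\rangle$.

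First I would set $h_j:=\pi_{\mathcal G}(m_j)$ and consider the perturbations $\widehat g\pm t_T h_j$ for a deterministic $t_T\to 0$ to be chosen. When $\mathcal G=\mathcal G_B$ is linear these perturbations lie in $\mathcal G$ and the exact first-order condition $\frac{1}{NT}\sum_{it}(y_{i,t}-\widehat g(x_{i,t-1}))h_j(x_{i,t-1})=0$ holds. For $\mathcal G=\mathcal G_{\DNN}$ the space is not linear, so I would invoke the approximation bound $\varphi_{\mathcal G}$: there exists $\widetilde g_{t_T}^{\pm}\in\mathcal G$ with $\|\widetilde g_{t_T}^{\pm}-(\widehat g\pm t_T h_j)\|_\infty\le\varphi_{\mathcal G}$, and optimality $Q(\widehat g)\le Q(\widetilde g_{t_T}^{\pm})$ together with a quadratic expansion of $Q$ gives
$$
\Bigl|\tfrac{1}{NT}\!\sum_{it}(y_{i,t}-\widehat g(x_{i,t-1}))h_j(x_{i,t-1})\Bigr|\ \le\ \tfrac{t_T}{2}\|h_j\|_T^2+C\varphi_{\mathcal G}/t_T+o_P(t_T).
$$
Optimizing $t_T$ and taking the maximum over $j\le N$ (absorbing a $\log N$ into the Bernstein/sub-Gaussian remainder) yields $\max_{j\le N}\bigl|\frac{1}{NT}\sum(y-\widehat g)h_j\bigr|=o_P(T^{-1/2})$ under the rate conditions of Assumptions~\ref{ass1}--\ref{ass1a}.

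Second, I would substitute $y_{i,t}-\widehat g(x_{i,t-1})=e_{i,t}-(\widehat g-g)(x_{i,t-1})$ and swap $h_j$ back for $m_j$. The swap costs two Cauchy--Schwarz terms bounded by $\|h_j-m_j\|_{L^2}\cdot(\|e\|_T+\|\widehat g-g\|_{L^2})$, and since $\|\widehat g-g\|_{L^2}=O_P((p(\mathcal G)\log T/T)^{1/2}+\varphi_{\mathcal G}^{1/2})$ from Lemma~\ref{lem1}, both terms are $o_P(T^{-1/2})$ under the assumed rate conditions. This leaves
$$
\tfrac{1}{NT}\!\sum_{it}(\widehat g-g)(x_{i,t-1})m_j(x_{i,t-1})=\tfrac{1}{NT}\!\sum_{it}e_{i,t}m_j(x_{i,t-1})+o_P(T^{-1/2}),
$$
uniformly in $j\le N$. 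The final step is a stochastic-equicontinuity argument to replace the empirical average on the left by $\langle\widehat g-g,m_j\rangle$. Using the entropy bound $\mathcal N(\delta,\mathcal G,\|\cdot\|_\infty)\le(CT/\delta)^{p(\mathcal G)}$ and a standard maximal inequality applied to the class $\{(g'-g)m_j:g'\in\mathcal G,\ j\le N\}$, the fluctuation is of order $\|\widehat g-g\|_{L^2}\sqrt{p(\mathcal G)\log(NT)/T}$, which is $o_P(T^{-1/2})$ again by Assumption~\ref{ass1}.

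The main obstacle I expect is carrying the uniformity over $j\le N$ through both the first-order linearization and the empirical-process step while keeping the remainder at $o_P(T^{-1/2})$; the rate conditions $p(\mathcal G)\log(NT)=o(T^{1/2})$ and $\varphi_{\mathcal G}p(\mathcal G)\log(NT)=o(1)$ are engineered precisely to absorb the $\log N$ penalty from taking a maximum. A second subtlety is that the errors $e_{i,t}=g_\beta(x_{i,t-1})'v_t+u_{i,t}$ are strongly cross-sectionally dependent through $v_t$, so I cannot invoke standard i.i.d. concentration directly; however, Assumption~\ref{ass2} gives independence across $t$ and sub-Gaussian tails conditional on $X$, which is exactly what is needed to apply a Hoeffding/Bernstein-type maximal inequality on the $T$-dimensional time-series averages that dominate the analysis.
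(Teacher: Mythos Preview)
Your overall architecture is right—first-order optimality gives a near-orthogonality condition, then stochastic equicontinuity lifts the empirical inner product to $\langle\cdot,\cdot\rangle$—and the second step matches the paper. But the first step has a real gap.

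You set $h_j:=\pi_{\mathcal G}(m_j)$, derive the approximate score equation in $h_j$, and then ``swap $h_j$ back for $m_j$.'' That swap costs, in particular, $\frac{1}{NT}\sum_{it}e_{i,t}\bigl(h_j(x_{i,t-1})-m_j(x_{i,t-1})\bigr)$. Even using the time-series CLT scaling this term is $O_P\bigl(T^{-1/2}\|h_j-m_j\|_{L^2}\bigr)$, so you need $\|m_j-\pi_{\mathcal G}(m_j)\|_{L^2}=o_P(1)$ uniformly in $j$. Nothing in the assumptions delivers this: $\varphi_{\mathcal G}$ controls how well $\mathcal G$ approximates $g$ and small perturbations thereof, not how well it approximates the Riesz representers $m_j$ (which live in $\mathcal A_{NT}=\mathrm{span}(\mathcal G_{\DNN}\cup\mathcal G_B-\{g_{NT,\mathcal G}\})$ and need not be close to $\mathcal G$). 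The second swap, $\frac{1}{NT}\sum(\widehat g-g)(h_j-m_j)$, has the same problem: Cauchy--Schwarz with Lemma~\ref{lem1} only gives $O_P(T^{-1/4})\cdot\|h_j-m_j\|$, which is not $o_P(T^{-1/2})$ without a rate on $\|h_j-m_j\|$.

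The paper avoids the swap entirely by perturbing in the direction $m_j$ itself: set $h_j(x)=\widehat g(x)+\tau_T m_j(x)$ with $\tau_T\to 0$, use $Q(\widehat g)\le Q(\pi_{\mathcal G} h_j)$, and expand $Q(\widehat g)-Q(h_j)\le\psi_{\mathcal G}(h_j)$. The point is that the approximation error now enters only through $\psi_{\mathcal G}(h_j)=|Q(\pi_{\mathcal G} h_j)-Q(h_j)|$, which is $o_P(T^{-1})$ because $\widehat g\in\mathcal G$ and the perturbation is scaled by $\tau_T$; you never need $m_j$ itself to lie near $\mathcal G$. Taking $\tau_T=\pm\epsilon_T$ with $\epsilon_T=\sqrt{\max_j\psi_{\mathcal G}(h_j)}$ balances the two sides and yields the first-order condition directly in $m_j$, uniformly over $j$, with remainder $o_P(T^{-1/2})$. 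Redo your first step this way and the rest of your argument goes through.
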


 \begin{proof}  
  Recall $ \psi_{\mathcal G}(h)=  |Q(\pi(h))-Q(h)|$. 
  For any deterministic sequence $\tau_T$, and function $m(x)$, consider a function $h:\mathbb R^{\dim(x)}\rightarrow \mathbb R$ so that
$$
h_j(x):= \widehat g(x) + \tau_T m_j(x).
$$ 

Then, $Q(\widehat g)\leq Q(\pi h_j)$ implies 
$
Q(\widehat g)  - Q(h_j)\leq \psi_T(h_j).
$
Expand this inequality, and note that $y_{i,t}= g(x_{i,t-1}) + e_{i,t}$, 
$$
-\tau_T^2\frac{1}{NT}\sum_{it}m_j(x_{i,t-1})^2+2\tau_T\underbrace{\left[\frac{1}{NT}\sum_{it}(g(x_{i,t-1})-\widehat g(x_{i,t-1}))m_j(x_{i,t-1})+\frac{1}{N}\sum_{it}e_{i,t}m_j(x_{i,t-1})\right]}_{\mathcal M_j}\leq\psi_T(h_j)
$$
  where the left hand side of this inequality equals $Q(\widehat g)- Q(h_j).$ Fix a positive sequence $\epsilon_T>0$ and respectively take value $\tau_T\in\{\epsilon_T, -\epsilon_T\}$. Then the above inequality yields, 
\begin{equation} \label{rateA}
2|\mathcal M_j|\leq \epsilon_T\max_j\frac{1}{NT}\sum_{it} m_j(x_{i,t-1})^2 +\psi_T(h_j)/\epsilon_T.
\end{equation}
We have
	 $
	\psi(h_j)\leq Q(\pi h_j )-Q( h_j )= \|h_j- \pi h_j\|_T^2+\frac{2}{N}\sum_i (y_{i,t}- h_j )( h_j -\pi h_j) \leq O_P(\varphi_{\mathcal G})=o_P(T^{-1}).
 $
This  means $\sqrt{T}\max_j\psi_T(h_j)=o_P(T^{-1/2})$.   Then for $\epsilon_T:=\sqrt{\max_j\psi_T(h_j)}$,
$$
 \sqrt{T}\max_j\psi_T(h_j)=o_P(\epsilon_T),\quad \epsilon_T=o(T^{-1/2}).
$$
For this choice of $\epsilon_T$, the right hand side of (\ref{rateA}) is $o_P(T^{-1/2})$.    This shows 
\begin{equation}\label{eqb.3}
A_j:=\frac{1}{NT}\sum_{it}(\widehat g(x_{i,t-1})-  g(x_{i,t-1}))m_j(x_{i,t-1}) = \frac{1}{NT}\sum_{it}e_{i,t}m_j(x_{i,t-1}) + o_P(T^{-1/2})
\end{equation}
uniformly for $j\leq N$.
 
Next, we show  $A_{j} = \langle  \widehat g-g, m_j\rangle +o_P(T^{-1/2})$   uniformly in $j\leq N$ by proving a  conclusion similar to the stochastic equicontinuity.   Let
\begin{eqnarray*}
G_j(v)&:=&\frac{1}{NT}\sum_{it}(v(x_{i,t-1})-g(x_{i,t-1}))m_j(x_{i,t-1}) -\mathbb E (v(x_{i,t-1})-g(x_{i,t-1}))m_j(x_{i,t-1})\cr
\mathcal C&:=&\{v\in\mathcal G:   \|v\|_{L^2}^2<C \frac{p(\mathcal G)\log T}{T}+C\varphi_N\}.
\end{eqnarray*}
Lemma \ref{lem1} shows   $\widehat g\in\mathcal C$ with   probability arbitrarily close to one. Hence 
$$
|A_{j} -
\langle \widehat g-  g, m_j\rangle | =| G_j(\widehat g)|
\leq  \sup_{v\in\mathcal C}\max_{j\leq N} |G_j(v)|. 
$$

We now show the right hand side is $o_P(T^{-1/2})$  by establishing the stochastic equicontinuity.   Let $p_1,...,p_{K(\delta)}$ be a  $\delta$-cover  of $\mathcal G$ under the norm $\|p\|_\infty$. Let 
$$
h_{kj}(x) := [p_k(x) - g(x)] m_j(x) - \langle p_k-g,m_j\rangle ,\quad \mu_{kj}(x) = h_{kj}(x)/\|h_{kj}\|_{L^2}
.$$
Then $\mathbb E \mu_{kj}(x_{i,t-1})=0$; also $\mu_{kj}(x_{i,t-1})$ are independent across $i$. 
Also let $\mathcal B:= \{k: \exists v\in\mathcal C, \|v-p_k\|_{\infty}<\delta\}$.  Then set $\delta= T^{-1}$, for some $C>0$, 
\begin{eqnarray*}
\max_{k\in\mathcal B, j\leq N}\|h_{kj}\|_{L^2}
&\leq&  C\sup_{k\in\mathcal B, j\leq N}\|p_k-g\|_{L^2}\leq  C \sqrt{\frac{p(\mathcal G)\log T}{T}}+C\sqrt{\varphi_{\mathcal G}} + C\delta \text{ (Cauchy-Schwarz)}\cr
\max_{k\in\mathcal B, j\leq N}|G_j(p_k)|   &=&\max_{k\in\mathcal B, j\leq N}\left|\frac{1}{NT}\sum_{it}h_{kj}(x_{i,t-1}) \right| =\max_{k\in\mathcal B, j\leq N}\left|\frac{1}{NT}\sum_{it}\mu_{kj}(x_{i,t-1}) \right|\|h_{kj}\|_{L^2}\cr
&\leq & O_P(\sqrt{\frac{p(\mathcal G)\log T}{T}}+ \sqrt{\varphi_{\mathcal G}} + \delta)\sqrt{\frac{\log( K(\delta) N)}{T}} \cr \cr 
 \sup_{v\in\mathcal C}\max_j|G_j(v)|&\leq& \max_{p\in\mathcal B, j\leq N}|G_j(p_k)|\cr 
 &&+ \max_{j}\sup_{v\in\mathcal C, \|v-p_k\|_{\infty}<\delta}| \frac{1}{NT}\sum_{it}(p_k(x_{i,t-1})-v(x_{i,t-1}))m_j(x_{i,t-1}) -\langle p_k-v, m_j\rangle|\cr
&\leq & O_P(\sqrt{\frac{p(\mathcal G)\log T}{T}}+ \sqrt{\varphi_{\mathcal G}} + \delta)\sqrt{\frac{\log( K(\delta) N)}{T}}+\delta= o_P(T^{-1/2}).
\end{eqnarray*}
The last equality holds since $\log (K(\delta) N)
= p(\mathcal G)\log (NT) , 
$ and $p(\mathcal G) \log(NT)  = o(T^{1/2})$ and $\varphi_{\mathcal G} p(\mathcal G)\log (NT) = o(1)$. This implies  uniformly in $j\leq N$,
$$
\frac{1}{NT}\sum_{it}(\widehat g(x_{i,t-1})- g(x_{i,t-1}))m_j(x_{i,t-1}) =  \langle  \widehat g-g, m_j\rangle +o_P(T^{-1/2}). 
$$
This finishes the proof.

\end{proof}

\subsubsection{Proof of Theorem \ref{th2}}
\label{proofth2}
Let $J$ be  the number of Fourier series basis and suppose $J^{5/2}=o(T)$ and $(\frac{p(\mathcal M) \log (T ) }{T} + \varphi_{\mathcal M} )J^2\leq o(1)$.

Let $\widehat z_{T+1|T}$ denote the  DNN predictor, and let $\widehat z_{T+1|T, B}$ denote the Fourier series predictor.  Proposition \ref{prop3} show that 
\begin{eqnarray}\label{eqb.4}
\widehat z_{T+1|T} -  z_{T+1|T}  
&=& \frac{1}{T}\sum_{t} \frac{1}{N}\sum_i \zeta^*(x_{i,t-1})  \beta_{i,t-1}'v_t+ o_P(T^{-1/2}) \cr 
\widehat z_{T+1|T,B} -  z_{T+1|T}  
&=& \frac{1}{T}\sum_{t} \frac{1}{N}\sum_i \zeta^*(x_{i,t-1})  \beta_{i,t-1}'v_t+ o_P(T^{-1/2}) . 
\end{eqnarray}
Hence $\widehat z_{T+1|T} -  z_{T+1|T}  =\widehat z_{T+1|T,B} -  z_{T+1|T} + o_P(T^{-1/2}) $, which means the DNN predictor and Fourier series are asymptotically equivalent. 
In addition, it is easy to derive (using the analytical form of $\widehat z_{T+1|T,B} $): 
 $
\widehat z_{T+1|T, B}  - z_{T+1|T} =
\sum_{t=1}^T   H '  \Phi_{t-1}' \beta_{t-1}v_t
+ o_P(T^{-1/2}).
 $
Thus  the DNN predictor satisfies:
 $$
\widehat z_{T+1|T}  - z_{T+1|T} =
\sum_{t=1}^T   H '  \Phi_{t-1}' \beta_{t-1}v_t
+ o_P(T^{-1/2}).
 $$ Define 
 $
\text{SE}(\widehat z_{T+1}):= \sqrt{\sum_{t=1}^T H'\Phi_{t-1}'\beta_{t-1}\Cov(f_t)\beta_{t-1}  '\Phi_{t-1} H}.
 $
We now verify the Lindeberg condition, conditioning on $X$, which suffices to bound the fourth moment.
 Write $ \text{SE}(\widehat z_{T+1})^2:=T^{-1}\sigma^2$ and $F_t:= H '  \Phi_{t-1}' \beta_{t-1}v_t$. Then  $\frac{1}{T}\sum_t \mathbb E(\|TF_t\|^4|X)<C J^{5/2}$, because 
almost surely, 
$$
\frac{1}{T}\sum_t\|TH'\Phi_{t-1}'\beta_{t-1}\|^4\mathbb E(\|v_t\|^4|X)
\leq C\| W'\Phi_T(\frac{1}{TN}\Psi'\Psi)^{-1}\| \frac{1}{TN}\sum_{it}\|\phi(x_{i,t-1})\|^4<CJ^{5/2} 
$$
given $W'\Phi_T<C\sqrt{J}$, $\|(\frac{1}{TN}\Psi'\Psi)^{-1}\|<C$, where $J$ is the number of Fourier series basis.
If $J^{5/2}=o(T)$,
$
\frac{1}{\text{SE}(\widehat z_{T+1})^2}
\sum_t\mathbb E(\|F_t\|^21\{\|F_t\|>\epsilon \text{SE}(\widehat z_{T+1})\}|X)
\leq \frac{1}{T}\frac{1}{T\sigma^4}\sum_t\mathbb E(\|TF_t\|^4|X) \to0.
$ 
Therefore, 
 $$
 \SE(\widehat z_{T+1})^{-1}(\widehat z_{T+1|T}  - z_{T+1|T})\to^d\mathcal N(0,1).
 $$
 It remains to prove $T(\widehat{\text{SE}}(\widehat z_{T+1})^2-\text{SE}(\widehat z_{T+1})^2)= o_P(1)$.

Write $L':=W'\Phi_T(\frac{1}{NT}\Psi'\Psi)^{-1} $ and $s^2= \frac{1}{T}\sum_t L' \frac{1}{N^2}\Phi_{t-1}'  e_t  e_t' \Phi_{t-1} L$. Then $\|L\|= O_P(\sqrt{J})$, and $\frac{1}{NT}\sum_t\|\widehat e_t- e_t\|^2=\frac{1}{NT}\sum_{it}|\widehat g(x_{i,t-1})-g(x_{i,t-1})|^2
=O_P(\frac{p(\mathcal M) \log (T ) }{T} + \varphi_{\mathcal M}).
$
Hence 
$$
T\widehat{\text{SE}}(\widehat z_{T+1})^2= \frac{1}{T}\sum_t L' \frac{1}{N^2}\Phi_{t-1}'\widehat e_t\widehat e_t' \Phi_{t-1} L=s^2+ O_P(\frac{p(\mathcal M) \log (T ) }{T} + \varphi_{\mathcal M} )J^2= s^2+ o_P(1).
$$
Meanwhile, $\frac{1}{T}\sum_t L' \frac{1}{N^2}\Phi_{t-1}'  \beta_{t-1}\Cov(f_t)\beta_{t-1}' \Phi_{t-1} L = T\text{SE}(\widehat z_{T+1})^2$, hence 
$$
s^2=  \frac{1}{T}\sum_t L' \frac{1}{N^2}\Phi_{t-1}'  \mathbb E(e_t  e_t'|\mathcal F_{t-1}) \Phi_{t-1} L +o_P(1)
= T\text{SE}(\widehat z_{T+1})^2+o_P(1).
$$

\subsubsection{Proof of Theorem \ref{th3}: The Bootstrap}\label{proofth3}

For ease of technical proofs, we prove  for  the asymptotic validity of the fully trained bootstrap neural networks. That is, let 
$(y_{i,t}^*)$ denote the bootstrap data.
We prove for the case when the bootstrap DNN $\widehat g^*(\cdot)$ is defined as:
\begin{equation}\label{eq4.3}
\widehat g^*(\cdot) = \arg\min_{g\in\mathcal G_{\DNN}} \sum_{i=1}^N\sum_{t=1}^T(y^*_{i,t}-g(x_{i,t-1})  )^2,
\end{equation}
where $\mathcal G_{\DNN}$ is the pooled-ML space, such as DNN.  

\begin{proof}
Here we focus on the DNN predictor, $\mathcal G=\mathcal M$. Let $\widehat g^*$ be the DNN estimator of $\widehat g$ using the bootstrap data. 
First, similar to Lemma \ref{lem1}, we can establish
 $$
\|\widehat g^*- \widehat g\|_{L^2}^2 \leq O_{P^*}(\frac{p(\mathcal G) \log (T ) }{T} + \varphi_{\mathcal G}).
$$
Hence $\widehat g^*\in\mathcal C$ with   probability arbitrarily close to one. 
Also similar to the proof of (\ref{eqb.3}), we can establish, for $m_j$ being the Riesz representer as in the proof of Proposition \ref{prop3},
 \begin{equation}\label{eqb.41}
A_j^*:=\frac{1}{NT}\sum_{it}(\widehat g^*(x_{i,t-1})-  \widehat g(x_{i,t-1}))m_j(x_{i,t-1}) = \frac{1}{NT}\sum_{it}\widehat e^*_{i,t}m_j(x_{i,t-1}) + o_{P^*}(T^{-1/2}).
\end{equation}
Next, we show   $A^*_{j} = \langle  \widehat g-g, m_j\rangle +o_{P^*}(T^{-1/2})$   uniformly in $j\leq N$. 
The proof is slightly different from that in Lemma 
\ref{lem2}. 
Let
\begin{eqnarray*}
G_j(v_1,v_2)&:=&\frac{1}{NT}\sum_{it}(v_1(x_{i,t-1})-v_2(x_{i,t-1}))m_j(x_{i,t-1}) -\mathbb E (v_1(x_{i,t-1})-v_2(x_{i,t-1}))m_j(x_{i,t-1}).
\end{eqnarray*}
Hence 
 \begin{equation}\label{eqb.5}
|A_{j}^* -
\langle \widehat g^*-  \widehat g, m_j\rangle | =| G_j(\widehat g^*,\widehat g)|
\leq  \sup_{v_1, v_2\in\mathcal C}\max_{j\leq N} |G_j(v_1, v_2)|. 
\end{equation}
    Let $p_1,...,p_{K(\delta)}$ be a  $\delta$-cover  of $\mathcal G$ under the norm $\|p\|_\infty$. Let 
$$
h_{kd,j}(x) := [p_k(x) - p_d(x)] m_j(x) - \langle p_k-p_d,m_j\rangle ,\quad \mu_{kd,j}(x) = h_{kd,j}(x)/\|h_{kd,j}\|_{L^2}
.$$
Then $\mathbb E \mu_{kd,j}(x_{i,t-1})=0$; also $\mu_{kd,j}(x_{i,t-1})$ are independent across $i$. 
Also let $\mathcal B:= \{k: \exists v\in\mathcal C, \|v-p_k\|_{\infty}<\delta\}$.  Then set $\delta= T^{-1}$, for some $C>0$, 
\begin{eqnarray*}
\max_{k,d\in\mathcal B, j\leq N}\|h_{kd,j}\|_{L^2}
&\leq&  C\sup_{k,d\in\mathcal B}\|p_k-p_d\|_{L^2}\leq  C \sqrt{\frac{p(\mathcal G)\log T}{T}}+C\sqrt{\varphi_{\mathcal G}} + C\delta  \cr
\max_{k,d\in\mathcal B, j\leq N}|G_j(p_k, p_d)|   &=&\max_{k,d\in\mathcal B, j\leq N}\left|\frac{1}{NT}\sum_{it}h_{kd,j}(x_{i,t-1}) \right| =\max_{k,d\in\mathcal B, j\leq N}\left|\frac{1}{NT}\sum_{it}\mu_{kd,j}(x_{i,t-1}) \right|\|h_{kd,j}\|_{L^2}\cr
&\leq & O_P(\sqrt{\frac{p(\mathcal G)\log T}{T}}+ \sqrt{\varphi_{\mathcal G}} + \delta)\sqrt{\frac{\log( K(\delta) N)}{T}} \cr \cr 
 \sup_{v_1,v_2\in\mathcal C}\max_j|G_j(v_1,v_2)|&\leq& \max_{k,d\in\mathcal B, j\leq N}|G_j(p_k,p_d)|+ O_P(\delta) \cr
&\leq & O_P(\sqrt{\frac{p(\mathcal G)\log T}{T}}+ \sqrt{\varphi_{\mathcal G}} + \delta)\sqrt{\frac{\log( K(\delta) N)}{T}}+\delta= o_P(T^{-1/2}).
\end{eqnarray*}
Combining  with (\ref{eqb.41})(\ref{eqb.5}),  for $\widehat e_{i,t}^*= \widehat e_{i,t}\eta_t^*$,
$$
  \langle  \widehat g^*-\widehat g, m_j\rangle  =\frac{1}{NT}\sum_{it}\widehat e^*_{i,t}m_j(x_{i,t-1}) + o_{P^*}(T^{-1/2}) . 
$$
Next, for the same functional $\mathcal T_j$ and the Riesz representer $m_j=m_j^*$ in the proof of Proposition \ref{prop3} and $\zeta^*=\sum w_j m_j^*$, 
\begin{eqnarray*}
\widehat z^*_{T+1|T} -  \widehat z_{T+1|T}  &=&\sum_j w_j[\widehat g^*(x_{j,T}) -\widehat g(x_{j,T}) ] =\sum_j w_j\mathcal T_j(\widehat g^*-\widehat g)
=\sum_j w_j\langle \widehat g^*- \widehat g, m_j^*\rangle\cr
&=&\frac{1}{NT}\sum_{it}\sum_jw_j\widehat e_{i,t}\eta^*_tm^*_j(x_{i,t-1}) + o_{P^*}(T^{-1/2}) \cr
&=&\frac{1}{NT}\sum_{it} \widehat e_{i,t}\eta^*_t\zeta^*(x_{i,t-1}) + o_{P^*}(T^{-1/2}) .
\end{eqnarray*}
Next, $\mathbb E\eta_t^*=0$ and $\Var^*(\eta_t^*)=1$. Let $\widetilde\zeta_{t-1}$ be the $N$-vector of $\zeta^*(x_{i,t-1})$. Then
\begin{eqnarray*}
\widehat{\SE}_{inf}^{*2}&:=&\frac{1}{T^2}\sum_t (\frac{1}{N} \widehat e_{t}'\widetilde\zeta_{t-1})^2  =  \frac{1}{T^2}\sum_t (\frac{1}{N}  e_{t}'\widetilde\zeta_{t-1})^2  +o_P(T^{-1})\cr
&=&
\frac{1}{T^2}\sum_t (\frac{1}{N} \widetilde \zeta'_{t-1}\beta_{t-1}v_t)^2+o_P(T^{-1})={\SE}_{inf}^{2} +o_P(T^{-1}) \cr
{\SE}_{inf}^{2} &:=&
 \frac{1}{T^2}\sum_t \frac{1}{N^2} \widetilde \zeta'_{t-1}\beta_{t-1}\Cov(v_t)\beta_{t-1}'\widetilde\zeta_{t-1}.
\end{eqnarray*}
  This shows
$c_{inf}:=\SE_{inf}^{-1}\widehat{\SE}_{inf}^{*}\to^P 1 $ because $\sqrt{T}\SE_{inf}>c>0$.  
Meanwhile, 
\begin{eqnarray*}
 \SE_{inf}^{-1}(\widehat z^*_{T+1|T} -  \widehat z_{T+1|T} )&=&c_{inf}
\widehat{\SE}_{inf}^{*-1}(\widehat z^*_{T+1|T} -  \widehat z_{T+1|T} )\cr
&=&\left[\frac{1}{T^2}\sum_t (\frac{1}{N} \widehat e_{t}'\widetilde\zeta_{t-1})^2\right]^{-1/2}\frac{1}{NT}\sum_{it} \widehat e_{i,t}\eta^*_t\zeta^*(x_{i,t-1}) + o_{P^*}(1)\cr
&\to^{d^*}&\mathcal N(0,1).
\end{eqnarray*}
In addition, (\ref{eqb.4}) shows 
$
 \SE_{inf}^{-1}(\widehat z_{T+1|T} -   z_{T+1|T} )\to^d \mathcal N(0,1).
$
 Let $q^*_{\alpha}$ be the bootstrap critical value of $ |\widehat z^*_{T+1|T} -  \widehat z_{T+1|T} |$ so that $P^*(| \widehat z^*_{T+1|T} -  \widehat z_{T+1|T} |\leq q^*_{\alpha}) =1-\alpha$, where 
  $P^*$ denotes the probability measure with respect to the bootstrap distribution (i.e., the   distribution of $\{\eta_t^*,t\leq T\}$ conditional on the data). 
 Then
\begin{eqnarray*}
 P(|\widehat z_{T+1|T} -   z_{T+1|T}|\leq q^*_{\alpha}) &=&P(| \SE_{inf}^{-1}(\widehat z_{T+1|T} -   z_{T+1|T} |\leq \SE^{-1}_{inf}q^*_{\alpha}) \cr
 &=&P^*(| \SE_{inf}^{-1}(\widehat z^*_{T+1|T} -  \widehat z_{T+1|T} )|\leq \SE^{-1}_{inf}q^*_{\alpha}) +o_P(1)\cr
 &=&1-\alpha +o_P(1).
\end{eqnarray*}
 
\end{proof}

\subsubsection{Proof of Theorem \ref{th5} }\label{proofth4}

\begin{proof}

 The constraint MV problem is $\max_{\omega}\min_{\mu} F(\omega,\mu)$, subject to $|\mu-\widehat z_{Z+1|T}|\leq q$, where 
$
F(\omega,\mu)= \bomega'\mu  -\frac{\gamma}{2} \bomega'\Sigma_T\bomega.  
$ 
 For the inner  problem, $\min_{\mu_i}\omega_i \mu_i$ subject to  $|\mu_i-\widehat z_{T+1|T,i}|\leq q_{\alpha, i}$, the minimum is 
$\omega_i (\widehat z_{T+1|T,i}-\sgn(\omega_i)q_{\alpha, i})= \omega_i\widehat z_{T+1|T,i} -|\omega_i|q_{\alpha,i}$.  This leads to the designated result.

In the one dimensional case, the problem is equivalent to 
$$
\min_{\omega}   \frac{\gamma}{2} \omega^2\sigma^2 - \omega\widehat z_{T+1|T}   + q|\omega|
$$
 which is also equivalent to $\min_{\omega}   \frac{\gamma\sigma^2}{2} [\omega -\omega^{\MV}]^2  + q|\omega|
$. The solution is well known to be the soft thresholding operator given in the theorem.

\end{proof}

\subsection{Proof  of (\ref{eq5.3}), (\ref{eq5.6}) and  (\ref{eq4.11})}\label{secb.4}

\textit{Proof of Expression (\ref{eq5.3})}. 

\begin{proof}
For the case $R=2$, $\min_{\bomega=(\omega, 1-\omega)} F(\omega)$, where

$
 F(\omega)=  \frac{\gamma}{2}\bomega' \Sigma_T\bomega - \omega \widehat z_{T+1|T, 1} 
- (1-\omega) \widehat z_{T+1|T, 2} 
+ q_{\alpha,1}|\omega|+ q_{\alpha,2}|1-\omega|.
$ This function is piecewise quadratic on the regions $(-\infty, 0], [0,1], $ and $[1,\infty)$.
 We define and verify that 
\begin{eqnarray*}
\arg\min_{\omega>1} F(\omega) &=&\max\{A,1\}, \quad \text{where} \quad A= \omega_1^{\MV} -c_0(q_{\alpha,1} + q_{\alpha,2}) \cr
\arg\min_{\omega<0} F(\omega) &=&\min\{C,0\}, \quad \text{where}\quad C=\omega_1^{\MV} +c_0(q_{\alpha,1} + q_{\alpha,2}) \cr 
\arg\min_{0<\omega<1} F(\omega) &=&1\{0<B<1\}B+1\{B>1\}
\quad \text{where}\quad B= \omega_1^{\MV} -c_0(q_{\alpha,1} - q_{\alpha,2}) \cr
\omega_1^{\MV}&=&c_0\left[\widehat   z_{T+1|T,1} -\widehat  z_{T+1|T,2} -\gamma (\Cov(z_{T+1,1},z_{T+1,2}) -\Var(z_{T+1,2}))\right] 
\end{eqnarray*}
where $c_0= \gamma^{-1}\Var(z_{T+1,1}- z_{T+1,2})^{-1}$.
Note that $C>B>A$. Also write $\omega_2^{\MV} =1-\omega_1^{\MV}$. Let the optimal solution be 
$\omega_1^*$, and $\omega_2^*=1-\omega_1^*$. 
Hence we have the following cases:

Case 1. $A>1$, which is $\omega_2^{\MV}<- c_0(q_{\alpha,1} + q_{\alpha,2}) $, then 
$\omega_1^{*} = A$, and 

$\omega_2^*= \omega_2^{\MV}+c_0(q_{\alpha,1}+q_{\alpha,2})$.

Case 2. $B>1>A$, which is $-c_0(q_{\alpha,1} + q_{\alpha,2})<\omega_2^{\MV}<c_0(q_{\alpha,2} - q_{\alpha,1})$, then $\omega_1^{*}=1$.

Case 3. $0<B<1$, which is $\omega_1^{\MV}>c_0(q_{\alpha,1} - q_{\alpha,2})$ and $\omega_2^{\MV}>c_0(q_{\alpha,2} - q_{\alpha,1})$, then $\omega_1^*=B$, and $\omega_2^*= 1-B= \omega_2^{\MV} -c_0(q_{\alpha,2} - q_{\alpha,1})$.

Case 4. $C>0>B$, which is $-c_0(q_{\alpha,1} + q_{\alpha,2})<\omega_1^{\MV}<c_0(q_{\alpha,1} - q_{\alpha,2})$, then $\omega_1^*=0$.

Case 5. $C<0$, which is $\omega_1^{\MV}<-c_0(q_{\alpha,1} + q_{\alpha,2}) $, then $\omega_1^*=C$.

Hence for $k, j\in\{1,2\}$ and $k\neq j$,

When $\omega_1^{\MV}<-c_0(q_{\alpha,1} + q_{\alpha,2}) $, then $\omega_k^*=\omega_1^{\MV} +c_0(q_{\alpha,1} + q_{\alpha,2})$.

When $-c_0(q_{\alpha,1} + q_{\alpha,2})<\omega_1^{\MV}<c_0(q_{\alpha,k} - q_{\alpha,j})$, then $\omega_k^*=0$.

When  $\omega_1^{\MV}>c_0(q_{\alpha,1} - q_{\alpha,2})$ and $\omega_2^{\MV}>c_0(q_{\alpha,2} - q_{\alpha,1})$, then  $\omega_k^*=  \omega_1^{\MV} -c_0(q_{\alpha,k} - q_{\alpha,j})$.

\end{proof}

\textit{Proof of Expressions (\ref{eq5.6}), (\ref{eq4.11})}
\begin{proof}
 
 (\ref{eq5.6})  is a special case of (\ref{eq4.11}). Hence we directly prove the latter. 
 The conditional  distribution of $z_{T+1}|\widehat z_{T+1|T}$ is normal. Also,   $(z_{T+1|T}, \widehat z_{T+1|T})$ is jointly normal, we have 
  $$\mathbb E(  z_{T+1}|\widehat z_{T+1|T})
  =\mathbb E[\mathbb E(z_{T+1}|\widehat z_{T+1|T}, z_{T+1|T})|\widehat z_{T+1|T}]
  =\mathbb E(z_{ T+1|T}|\widehat z_{T+1|T})=
(I-W_1)\widehat z_{T+1|T} + W_1\pi
  $$
  where
$
W_1= \SE^2(\SE^2+v)^{-1}. 
$
Also,  
$$
\begin{pmatrix}
    z_{T+1|T}\\
    \widehat z_{T+1|T}
\end{pmatrix}\sim \mathcal N\left(
\begin{pmatrix}
\pi \\
\pi 
\end{pmatrix}, \begin{pmatrix}
v & v \\
v& \SE^2 + v 
\end{pmatrix}
\right).
$$
Hence $\Var(  z_{T+1|T}|\widehat z_{T+1|T})=
v-v(\SE^2+v)^{-1}v= vW_1'= W_1v,
  $ and thus 
\begin{eqnarray*}
\widetilde\Sigma&=&\Var( z_{T+1}|\widehat z_{T+1|T})
=\mathbb E [\Var( z_{T+1}|  z_{T+1|T})| \widehat z_{T+1|T}]
+ \Var(  z_{T+1|T}|\widehat z_{T+1|T}) 
\cr 
&=&\Sigma_T 
+ \Var(  z_{T+1|T}|\widehat z_{T+1|T}) 
=\Sigma_T +(I-W_1)\SE^2= \Sigma_T+ vW_1'.
\end{eqnarray*}
 Note that for two multivariate normal densities with the same covariance $\widetilde\Sigma$,  the Kullback–Leibler divergence is 
$$
D(h_T||f_T)= \frac{1}{2}(\mu-\widetilde z_{T+1})'\widetilde\Sigma^{-1}(\mu-\widetilde z_{T+1}).
$$
Hence 
$
\bomega^{RS}:=\arg\max_{\bomega\in\mathbb R^N}\min_{\mu\in\mathbb R^N}L(\bomega, \mu)
$
 where
 $$
 L(\bomega, \mu) = \bomega '\mu   -\frac{\gamma}{2}\bomega' \widetilde\Sigma\bomega +   \frac{1}{\tau   } D(h_T||f_T).
 $$
 We have $\min_{\mu\in\mathbb R^N}L(\bomega, \mu)=\bomega '\widetilde z_{T+1}    -\frac{\gamma+\tau}{2}\bomega' \widetilde\Sigma\bomega $. Maximizing it yields 
 $$
 \bomega^{RS}= \widetilde\Sigma^{-1}\widetilde z_{T+1}   \frac{1}{\tau+\gamma}. 
 $$

Recall that 
$ \widetilde z_{T+1}= (1-W_1)\widehat z_{T+1|T} + W_1\pi$, $\bomega^{\MV} =  \Sigma_T^{-1}\widehat z_{T+1|T}   \frac{1}{\gamma}$ and 

$\bomega^{RS}_{\pi}  =  \Sigma_T^{-1}\pi   \frac{1}{ \gamma}$. Also,   for any constant $c$, if $v= c\Sigma_T $, then    $W_1\Sigma_T=\Sigma_T W_1'$, both sides equal  $\Sigma_T-c\Sigma_T(\SE^2+c\Sigma_T)^{-1}\Sigma_T$.
Hence
$$
 \bomega^{RS}= \frac{\gamma }{\tau+\gamma}\widetilde\Sigma^{-1}\Sigma_T\left[ (I-W_1)' \bomega^{\MV} +W_1'\bomega^{\MV}_{\pi}\right]   .  
 $$

 In the one-dimensional case, this implifies to 
 $$
 \omega^{RS} = \frac{\gamma }{\tau+\gamma} g(W_1),\quad 
 g(W_1):=
 \frac{\sigma^2_T}{\sigma_T^2+ vW_1}\left[ (1-W_1) \omega^{\MV} +W_1\omega^{\MV}_{\pi}\right]  .
 $$
 In addition, $W_1= \SE^2/(\SE^2+v)$, which is monotonically increasing in $\SE$. Hence the monotonicity of $\omega^{RS}$ with respect to $\SE$ is determined by the monotonicity of $g(\cdot)$. It can be rewritten as 
 $$
 g(W_1)= c +\frac{\sigma_T^2}{v} \frac{v+\sigma_T^2}{\sigma_T^2+vW_1} [\omega^{\MV} - \frac{\sigma_T^2}{v+\sigma_T^2}\omega_{\pi}^{\MV} ].
 $$
 Hence $g(\cdot)$ is increasing if and only if $\omega^{\MV} < \frac{\sigma_T^2}{v+\sigma_T^2}\omega_{\pi}^{\MV}$.
\end{proof}

\newpage
\end{document}